\def\argmin{\mbox{argmin}}
\def\G{{\mathbb G}}
\def\IK{{\mathbb K}}
\def\P{{\mathbb P}}
\def\R{\mathbb R}
\def\Z{{\mathbb Z}}
\def\a{\alpha}
\def\b{\beta}
\def\dd{\Delta}
\def\d{\delta}
\def\e{\varepsilon}
\def\ee{\epsilon}
\def\f{\phi}
\def\t{\tau}
\def\s{\sigma}
\newtheorem{lemma}{Lemma}[section]
\newtheorem{remark}{Remark}[section]
\numberwithin{equation}{section}
\theoremstyle{plain}
\begin{document}
\begin{frontmatter}
\title{The nonparametric bootstrap for the current status model}
\runtitle{Current status bootstrap}

\begin{aug}
\author{\fnms{Piet} \snm{Groeneboom}\corref{}\ead[label=e1]{P.Groeneboom@tudelft.nl}
\ead[label=u1,url]{http://dutiosc.twi.tudelft.nl/\textasciitilde pietg/}}

\address{Delft University of Technology, Mekelweg 4, 2628 CD Delft, The Netherlands.\\ \printead{e1}}

\author{\fnms{Kim} \snm{Hendrickx}\ead[label=e2]{kim.hendrickx@uhasselt.be}
\ead[label=u2,url]{http://www.uhasselt.be/fiche_en?voornaam=Kim&naam=HENDRICKX}}
\runauthor{P.\ Groeneboom and K.\ Hendrickx}
\address{Hasselt University, I-BioStat, Agoralaan, B­3590 Diepenbeek, Belgium. \\ 		\printead{e2}}
\end{aug}

\begin{keyword}[class=AMS]
	\kwd[Primary ]{62G09}
	\kwd{62N01}
\end{keyword}

\begin{keyword}
	\kwd{bootstrap}
	\kwd{current status}
	\kwd{MLE}
	\kwd{smooth functionals}
\end{keyword}

\begin{abstract}
It has been proved that direct bootstrapping of the nonparametric maximum likelihood estimator (MLE) of the distribution function in the current status model leads to inconsistent confidence intervals. We show that bootstrapping of functionals of the MLE can however be used to produce valid intervals. To this end, we prove that the bootstrapped MLE converges at the right rate in the $L_p$-distance. We also discuss applications of this result to the current status regression model.
\end{abstract}

\end{frontmatter}

\section{Introduction}
\label{sec:intro}
In the current status model, the variable of interest is a survival variable $X$ with distribution function $F_0$. However, instead of observing the exact survival time $X$, a censoring variable $T\sim G$ is observed together with the indicator $\dd = 1_{X\le T}$. Such data arise naturally in clinical trials when  a patient can only be checked at one measurement due to destructive testing. A lot of research has been published on the behavior of the maximum likelihood estimator (MLE) $F_n$ of the distribution function $F_0$. The limiting distribution of $n^{1/3}(F_n(t)-F_0(t)$) is after scaling by the constant $\kappa = \{4F_0(t)(1-F_0(t))f_0(t)/g(t)\}^{1/3}$ given by
\begin{align*}
\mathbb C = \arg\max_t\left\{ W(t)-t^2 \right\},
\end{align*}
where $W$ is a two-sided  Brownian motion with $W(0)= 0$ (see \cite{GrWe:92}). Other estimators with similar asymptotic properties are Chernoff's estimator of the mode (\cite{chernoff:64}), the Grenander estimator (\cite{Grenander:56}) of a nonincreasing density, Manski's  maximum score estimator (\cite{manski1975}) and Rouseeuw's least median of squares estimator (\cite{rousseeuw1984}). A general framework for cube-root $n$ asymptotics is given in \cite{kimpol:90}.

In this paper we investigate the behavior of Efron's nonparametric bootstrap method (\cite{efron1979}) for constructing confidence intervals for smooth functionals of the MLE. It is known that the nonparametric bootstrap is inconsistent for generating the limit distribution of the MLE. The authors of \cite{abrevaya_huang2005} prove that (conditional on the data),
\begin{align*}
&n^{1/3}\{4F_0(t)(1-F_0(t))f_0(t)/g(t)\}^{-1/3}\{\hat F_n(t)-F_n(t) \} \\
&\qquad\qquad \stackrel{\cal D}{\to} \quad  \arg\max_t(W(t) + \hat W(t)-t^2) - \arg\max_t(W(t)-t^2),
\end{align*}
where $\hat F_n$ is the bootstrap MLE and $W$ and $\hat W$ are two independent two-sided Brownian motions originating at zero. A similar result is obtained in \cite{kosorok:08} and in \cite{sen_mouli_woodroofe:10} for the Grenander estimator. The maximum score estimator of \cite{manski1975} is another example of a cube-root $n$ statistic with asymptotic distribution derived in \cite{kimpol:90}, where inconsistency of the nonparametric bootstrap for this estimator is shown in \cite{abrevaya_huang2005}.  

Constructing asymptotic confidence intervals for the distribution function in the current status model based on Chernoff's distribution and the normalizing constant $\kappa$ is complicated by the need to compute the critical values of $\mathbb C$ and to estimate the density $f_0$ consistently. Since this turns out to be a rather difficult task several alternative bootstrap methods have been proposed based on resampling from a smooth estimate. \cite{SenXu2015} consider a smooth kernel estimate $\tilde F$ of $F_0$ and resample the $\dd_i$ from a Bernoulli distribution with probability $\tilde F(T_i)$, while keeping the censoring variables $T_i$ fixed and center the values of the bootstrap samples by subtracting the smooth estimate of the distribution function.
\cite{kosorok:08} and \cite{sen_mouli_woodroofe:10} propose similar smooth respampling schemes for the Grenander estimator and a model-based smoothed bootstrap procedure for making inference on the maximum score estimator is developed in  \cite{patra:2011}. All methods result in consistent estimation of the (suitably standardized) distribution $\mathbb C$ conditional on the original data. 

A drawback of this approach is that smoothness conditions of $F_0$ are used which allow faster than cube-root $n$ estimation of $F_0$. This raises the question if one should really use confidence intervals based on the MLE instead of on a faster converging estimate. 

This latter procedure is followed in \cite{kim_piet:17:SJS}, where the authors consider constructing confidence intervals around the smoothed maximum likelihood estimator (SMLE) of $F_0$ in the current status model. The SMLE is a kernel estimate based on the MLE with an asymptotic normal distribution, instead of Chernoff's limiting distribution (\cite{piet_geurt_birgit:10}). The bootstrap method proposed in \cite{kim_piet:17:SJS} is however still based on the smooth bootstrap procedure described in \cite{SenXu2015} and not on Efron's nonparametric bootstrap. We show in this paper that the construction of confidence intervals around the SMLE based on the nonparametric bootstrap can also be proved to be valid, where one does not resample from a smooth estimate of $F_0$, but just resamples with replacement from the pairs $(T_i,\dd_i)$ in the original sample. This method already has been used without proof in \cite{piet_geurt:14} and also in \cite{piet_geurt:15} and the present manuscript intends to fill the gap of the missing proofs here.
An important difference with the smooth bootstrap in \cite{kim_piet:17:SJS} is that for the centering of the estimates in the nonparametric bootstrap samples the SMLE of the original sample is used, whereas this will not work for the resampling as proposed in \cite{kim_piet:17:SJS}; in the latter case one needs to center the estimates in the bootstrap samples by a kernel convolution of the SMLE in the original sample. It is not clear which method is better, and the most striking fact is the similarity of the results of the two methods in our simulations. An advantage of the purely nonparametric bootstrap, discussed in the present paper, might be its conceptual simplicity and the absence of the need to center with a convolution of the SMLE in the centering of the bootstrap samples instead of the SMLE itself. An advantage of the smooth bootstrap, discussed in \cite{kim_piet:17:SJS} might be the fact that only the indicators $\dd_i$ are being resampled, and that in this sense one stays closest to the sample distribution of the observation times $T_i$, which stay fixed in this procedure.

Although it is argued in \cite{durot2010} that the naive bootstrap will not work for their goodness-of-fit test for monotone functions, based on the Grenander estimator, no theoretical justification for this conjecture is given. Other examples where a smooth bootstrap procedure is used, are the likelihood ratio type two-sample test for current status data proposed by \cite{piet12e} and the test for equality of functions under monotonicity constraints proposed by \cite{cecile_rik:12}. Both tests establish asymptotic normality for the test statistic considered.

The paper is organized as follows: In Section \ref{sec:model} we introduce the current status model and review some interesting properties of the MLE. The validity of the nonparametric bootstrap is discussed in Section \ref{sec:bootstrap}. In Section  \ref{sec:applications} we provide two examples to illustrate the applicability of our result. In the first example we construct pointwise confidence intervals based on the smoothed MLE in the current status model. The second example deals with doing inferences for a finite dimensional regression parameter in the current status linear regression model. For both examples, the theoretical and finite sample behavior of the nonparametric bootstrap is discussed. Section \ref{sec:discussion} presents some concluding remarks. The proofs of our results are given in Section \ref{sec:appendix}. 

\section{The current status model and the MLE}
\label{sec:model}
Let $Z_1 = (T_1,\dd_1),\ldots,Z_n= (T_n,\dd_n) $ be an i.i.d. sample from the probability space $([0,R]\times\{0,1\}, {\cal A}, P)$, where $\dd_i = 1_{X_i \le T_i}$ and $R>0$.  The $X_i$ are interpreted as (nonnegative) survival times with distribution function $F_0$. Instead of observing $X$, a censoring variable $T \sim G$ is observed (with density $g$) independent of $X$. One could say that in the current status model, each observation $Z_i$ represents the current status of the item $i$ at time $T_i$. The density of $Z_i$ with respect to the product of Lebesgue measure and counting measure on $[0,R] \times \{0,1\}$ is given by
\begin{align*}
p_{F_0}(t,\d) = \left[\d F_0(t)  + (1-\d)  \{1-F_0(t)\}  \right]g(t).
\end{align*}
The maximum likelihood estimator $F_n$ is defined as the maximizer of the log likelihood given by (up to a constant not depending on $F$),
\begin{align}
\label{log_likelihood}
\ell_n(F) = n^{-1}\sum_{i=1}^{n}\left[\dd_i \log F(T_i)  + (1-\dd_i)  \log\{1-F(T_i)\}  \right],
\end{align}
over all distribution functions $F: [0,\infty] \mapsto [0,1]$. \cite{GrWe:92} show that the MLE can be characterized as the left-continuous slope of the greatest convex minorant of a cumulative sum diagram consisting of the points (0,0) and
\begin{align*}
\left(i, \sum_{j\le i} \dd_{(j)} \right),
\end{align*}
where we let $T_{(j)}$ denote the $j$th order statistic of the $T_i$ and $\dd_{(j)}$ be the $\dd_i$ corresponding to it (assuming no ties are present in the data). An important property of the MLE is the so-called \textit{switch relation}, see \cite{piet_geurt:14} p.\ 69. Let $\G_n$ be the empirical distribution function of $T_1,\ldots,T_n$ and define the process
$V_n$ by
\begin{align}
\label{V_n}
V_n(t)=n^{-1}\sum_{i=1}^n\dd_i1_{\{T_i\le t\}},
\end{align}
and the process (in $a$) $U_n$ by
\begin{equation}
\label{argmin_process}
U_n(a)=\argmin\{t\in\R:V_n(t)-a\G_n(t)\}.
\end{equation}
Then, taking $a=F_0(t)$, we get the {\it switch relation}:
\begin{align*}
F_n(t) \ge a \iff U_n(a)\le t,
\end{align*}
see also Figure \ref{fig:switch}.
\begin{figure}[!ht]
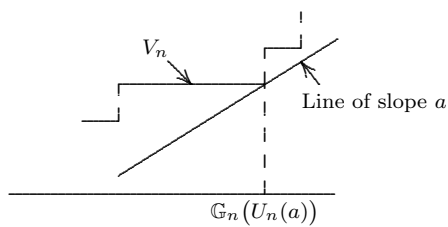

	\begin{center}
		\centerline{\input switch.pic}
	\end{center}
	\caption{\small The switch relation.}
	\label{fig:switch}
\end{figure}

\section{Bootstrapping the MLE}
\label{sec:bootstrap}
In this section we establish properties of the bootstrap MLE $\hat F_n$ based on the nonparametric bootstrap proposed by \cite{efron1979}. Our main concern is to show that conditional on the data $Z_1,\ldots,Z_n$, we have
\begin{align}
\label{bootstrap_result_main}
E\left\{ \left\| n^{1/3} \left\{ \hat F_n - F_0\right\}\right\|_p \Bigm| Z_1,\ldots,Z_n \right\}= O_p(1),
\end{align}
and
\begin{align}
\label{bootstrap_result_main1}
\sup_{t\in[0,R]}E\left\{\left. n^{1/3} \bigl|\hat F_n(t) - F_0(t)\bigr| \right|Z_1,\dots,Z_n\right\}= O_p(1).
\end{align}

Denote the empirical probability measure of $Z_1,\ldots,Z_n$ by $\P_n$. The bootstrap empirical measure is
\begin{align*}
\hat \P_n = n^{-1}\sum_{i=1}^{n}M_{ni}1_{Z_i},
\end{align*}
where $1_{Z_i}$ denotes the points mass at $Z_i=(T_i,\dd_i)$ and 
$$M_n=(M_{n1},\ldots,M_{nn}) \sim \text{multinomial}(n,n^{-1},\ldots,n^{-1}),$$
 is a vector of multinomial weights, independent of  $Z_1,\ldots,Z_n$. The bootstrap MLE $\hat F_n$ is computed using the weighted cumulative sum diagram formed by the point $(0,0)$ and
\begin{align*}
\left(\sum_{j=1}^i M_{n(j)},\sum_{j=1}^i M_{n(j)}\dd_{(j)}\right),
\end{align*}
where $M_{n(j)}$ corresponds to the multinomial weight corresponding to $T_{(j)}$.
The bootstrap MLE $\hat F_n$ is then calculated from the left-continuous slope of the convex minorant of this cusum diagram. 

To complete notation, we suppose that the vectors $((Z_1,\dots,Z_n),M_n), n=1,2,\ldots$ are defined on the product space
$([0,R]\times\{0,1\})^{\infty}\times \Z_+^{\infty}, {\cal B}, P_{ZM})$, where $\Z_+$ is the set of nonnegative integers and ${\cal B}$
is the collection of Borel sets, generated by the finite dimensional projections. 
We say that a real-valued function $\Gamma_n$ defined on the joint probability space is of order $o_{P_M}(1)$ in probability if for all $\epsilon,\eta>0$:
\begin{align*}
P^*\left( P_{M|Z}\left\{|\Gamma_n| > \epsilon \right\} > \eta \right) \to 0 \text{ as } n\to \infty,
\end{align*}
where $P^*$ denotes outer probability and $P_{M|Z}$ is the conditional probability measure w.r.t. the weights, given the sample $Z_1,\ldots,Z_n$.\\

To establish (\ref{bootstrap_result_main}), we need the following result, which is a bootstrap version of Lemma 11.5 in \cite{piet_geurt:14}.
\begin{lemma}
\label{lemma_Th11.3}
Suppose $F_0$ has a continuous density $f_0$ with support [0,R] that satisfies,
\begin{align*}
0 < \inf_{t\in [0,R]} f_0(t)< \sup_{t\in [0,R]} f_0(t) < \infty.
\end{align*}
Also suppose that the observation distribution $G$ has a continuous derivative $g$ that stays away form zero and infinity on $[0,R]$. 
Let	
\begin{equation*}
U(a)=F_0^{-1}(a) \qquad 0<a<1,
\end{equation*}
and define the process
\begin{equation*}
\hat U_n(a)=\text{\rm{argmin}}\{t\in [0,R]:\hat V_n(t)-a\hat \G_n(t)\} \qquad 0<a<1,
\end{equation*}   
with processes $\hat V_n$ and  $\hat \G_n$ defined by
\begin{align}
\label{hatV_n_G_n}
\hat V_n(t)=\int_{u\in[0,t]}\d\,d\hat{\P}_n(u,\d) \quad \text{ and } \quad  \hat \G_n(t)=\int_{u\in[0,t]}\,d\hat{\P}_n(u,\d) \quad\,t\in[0,R]. 
\end{align}
Then there are positive constants $K_1$ and $K_2$, such that, for all $a\in(0,1)$ and for all large $n$:
\begin{align*}
\left\{\exists x\in[0,R]:P_{M|Z}\left\{n^{1/3}\left|\hat U_n(a)-U(a)\right|\ge x\right\}> K_1e^{-K_2 x^{3/2}}\right\}=o_p(1),
\end{align*}
where $\{A\}$ denotes the indicator $1_A$ of the event $A$. 
\end{lemma}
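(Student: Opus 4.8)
The plan is to transcribe the proof of Lemma 11.5 in \cite{piet_geurt:14} to the bootstrap setting, using at the decisive step that, conditionally on $Z_1,\dots,Z_n$ (i.e.\ under the conditional law $P_{M|Z}$), the bootstrap sample is an i.i.d.\ sample of size $n$ from the empirical measure $\P_n$: in particular $\hat V_n,\hat\G_n$ are empirical averages with conditional means $V_n,\G_n$, and the empirical process $\sqrt n(\hat\P_n-\P_n)$ obeys the usual Bernstein/Talagrand maximal inequalities over VC classes with constants that do not depend on the base measure $\P_n$. Fix $a\in(0,1)$, write $t_0=U(a)=F_0^{-1}(a)$ so that $a=F_0(t_0)$, and note that since $\hat U_n(a)\in[0,R]$ and $x$ runs over the bounded set $[0,R]$, it suffices to control $P_{M|Z}$ of the deviation at each scale $xn^{-1/3}$, the assertion being trivial once $t_0\pm xn^{-1/3}\notin[0,R]$ and, after enlarging $K_1$, for $x$ below a fixed constant.

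First I would use the argmin characterization: for $u>t_0$ put $h_u=(\d-a)1_{(t_0,u]}$ and for $u<t_0$ put $h_u=(\d-a)1_{(u,t_0]}$, so that $\hat\P_n(h_u)=\hat V_n(u)-\hat V_n(t_0)-a\{\hat\G_n(u)-\hat\G_n(t_0)\}$ up to sign. If $\hat U_n(a)\ge t_0+xn^{-1/3}$ then the left-continuous minimizer of $t\mapsto\hat V_n(t)-a\hat\G_n(t)$ over $[0,R]$ lies to the right of $t_0+xn^{-1/3}$, which forces $\hat\P_n(h_u)\le0$ for some $u\ge t_0+xn^{-1/3}$; since $P(h_u)=\int_{t_0}^u\{F_0(s)-F_0(t_0)\}g(s)\,ds\ge c_*(u-t_0)^2$ by the lower bounds on $f_0$ and $g$, this forces $(\hat\P_n-P)(h_u)\le-c_*(u-t_0)^2$, and the case $\hat U_n(a)\le t_0-xn^{-1/3}$ is symmetric. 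Splitting $\hat\P_n-P=(\hat\P_n-\P_n)+(\P_n-P)$ and covering $\{u:|u-t_0|\ge xn^{-1/3}\}$ by the shells $S_j=\{u:2^{j-1}xn^{-1/3}\le|u-t_0|\le2^jxn^{-1/3}\}$, $j\ge1$, the event $\{n^{1/3}|\hat U_n(a)-t_0|\ge x\}$ then implies that, for some $j\ge1$ and some $u\in S_j$, either the bootstrap piece $|(\hat\P_n-\P_n)(h_u)|$ or the empirical piece $|(\P_n-P)(h_u)|$ exceeds $\tfrac12c_*4^{j-1}x^2n^{-2/3}$.

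For the bootstrap piece I would apply, conditionally on $Z$, a Bernstein/Talagrand exponential inequality to $\sqrt n(\hat\P_n-\P_n)$ over the VC class $\{h_u:u\in S_j\}$, whose envelope has $\P_n$-second moment of order $2^jxn^{-1/3}$ on a data event $\Omega_n$ with $\P(\Omega_n)\to1$; such an $\Omega_n$ is obtained by a Bernstein bound for $\G_n$ on the intervals entering the peeling, at scales bounded below by a fixed multiple of $n^{-1/3}$, where $n\times(\text{scale})\to\infty$ and a union over the $O(\log n)$ relevant scales is harmless. Inserting the deviation level $\tfrac12c_*4^{j-1}x^2n^{-2/3}$ (and using $\Lambda_j\gg E_{M|Z}\sup_{S_j}|(\hat\P_n-\P_n)(h_u)|$ once $x$ exceeds a fixed constant) gives a per-shell bound $C\exp(-c\,8^jx^3)$, and summation over $j\ge1$ yields $C'e^{-cx^3}$. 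The empirical piece $(\P_n-P)(h_u)$ is exactly the object estimated, against the same parabolic drift, in the proof of Lemma 11.5 of \cite{piet_geurt:14}, and I would import that estimate in its form uniform in $a$. Since $x\in[0,R]$, $e^{-cx^3}$ is dominated by $e^{-K_2x^{3/2}}$ for a suitable $K_2$, so for $K_1$ large enough the displayed indicator is controlled by $1_{\Omega_n^c}$ together with the contribution of the empirical piece, both $o_p(1)$.

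The main obstacle is the bookkeeping at the critical scale $|u-t_0|\asymp n^{-1/3}$, where the empirical fluctuation $(\P_n-P)(h_u)$ is itself of the same order $n^{-2/3}$ as the quadratic drift and therefore cannot be discarded as negligible: one must reuse, essentially verbatim, the delicate peeling estimate for $(\P_n-P)(h_u)$ from the proof of Lemma 11.5, and marry it to the new conditional exponential bound for the bootstrap empirical process, checking that every constant is independent of $n$ and of $a$ and that the envelope-moment controls for $\P_n$ needed as inputs hold on a set of data of probability tending to one.
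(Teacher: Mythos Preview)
Your proposal is correct and follows essentially the same route as the paper: the argmin reduction, the split $\hat\P_n-P=(\hat\P_n-\P_n)+(\P_n-P)$, a peeling argument with a conditional exponential bound for the bootstrap piece (you invoke Bernstein/Talagrand where the paper uses an Orlicz $\psi_1$-norm inequality from \cite{vdvwe:96}, which is why you obtain the sharper exponent $x^3$ rather than the paper's $x^{3/2}$), and absorbing the empirical piece $(\P_n-P)(h_u)$ into the quadratic drift on a high-probability data event (the paper isolates this as a separate lemma with the Kim--Pollard bound $\e x^2+O_p(n^{-2/3})$). The remaining differences are cosmetic: your dyadic shells in place of the paper's integer shells $[(j-1)n^{-1/3},jn^{-1/3})$, and your parametrization via $h_u=(\d-a)1_{(t_0,u]}$ versus the paper's separation into $\hat D_n=\hat V_n-\hat{\bar V}_n$ plus an explicit drift term.
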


Lemma \ref{lemma_Th11.3} implies that the probability that for all $x\in[0,R]$, and $a=F_0(t)$,
	$$  P_{M|Z}\left\{n^{1/3}\left|\hat U_n(a)-U(a)\right|\ge x\right\}\le K_1e^{-K_2 x^{3/2}}$$
	tends to 1 as $n\to\infty$.
The proof of Lemma \ref{lemma_Th11.3} is given in Section \ref{sec:appendix}. The proof uses empirical process theory and results on tail probabilities for $\|\sqrt{n}(\hat{\P}_n-\P_n)\|_{{\cal F}}$ for classes $\cal F$ with finite entropy integrals. Similar results are proved using martingale theory in Section 11.2 of \cite{piet_geurt:14} for the original sample and in \cite{kim_piet:17:SJS} for a smooth bootstrap empirical process. Since
\begin{align*}
E_{M|Z}\left[n^{1/3}\{\hat F_n(t)-F_0(t)\}_+  \right]^{p} =  \int_{0}^{\infty}\hspace{-0.3cm} P_{M|Z}\left\{n^{1/3}\{\hat F_n(t)-F_0(t)\}\ge x\right\}px^{p-1}dx,
\end{align*}
where $\{\hat F_n(t)-F_0(t)\}_+$ denotes the positive part of $\{\hat F_n(t)-F_0(t)\}$ and since,
\begin{align*}
&P_{M|Z}\left\{\hat U_n\left(a+n^{-1/3}x\right)\le t\right\} \\
&\quad =P_{M|Z}\Biggl[n^{1/3}\left\{ \hat U_n\left(a+n^{-1/3}x\right) -  U\left(a+n^{-1/3}x\right) \right\} \\
&\qquad\qquad\qquad\qquad\qquad\qquad\qquad\le n^{1/3}\left\{ t-  U\left(a+n^{-1/3}x\right)\right\}\Biggr],
\end{align*}
it follows from Lemma \ref{lemma_Th11.3} and the \textit{bootstrapped switch relation} given by
\begin{align*}
P_{M|Z}\left\{n^{1/3}\{\hat F_n(t)-F_0(t)\}\ge x\right\}=P_{M|Z}\left\{\hat U_n\left(a+n^{-1/3}x\right)\le t\right\},
\end{align*} 
that there exists a positive constant $K>0$ such that,
\begin{align*}
\left\{\exists t\in[0,R]: E_{M|Z}\left| \hat F_n(t) - F_0(t)\right|^p > Kn^{-p/3}\right\}=o_p(1).
\end{align*}
In particular, there exists a $K_1>0$ such that:
\begin{align*}
P\left\{\sup_{t\in[0,R]}E_{M|Z}\left|\hat F_n(t)-F_0(t)\right|>K_1n^{-1/3}\right\}\longrightarrow0,\qquad n\to\infty,
\end{align*}
and likewise there exists a $K_2>0$ such that:
\begin{align*}
P\left\{E_{M|Z}\bigl\|\hat F_n-F_0\bigr\|_2>K_2n^{-1/3}\right\}\longrightarrow0,\qquad n\to\infty.
\end{align*}
In the next section we show how (\ref{bootstrap_result_main}) can be used to justify the bootstrap validity for drawing inferences in models which can be estimated using smooth functionals of the MLE. The proofs for deriving the asymptotic behavior of these functionals are in general based on applications of the Cauchy-Schwarz inequality and on showing asymptotic equicontinuity. Both steps involve calculating the $L_2$-distance which can often be reduced to the $L_2$-distance between the MLE and the true underlying distribution function. Our main result given in (\ref{bootstrap_result_main}) is therefore important to show that the asymptotic properties of the estimates obtained in the original sample are still valid in the bootstrap sample conditionally on the data. The asymptotic behavior of the functionals does not depend on the distribution function of the MLE, which is, as shown in Theorem 5 of \cite{abrevaya_huang2005}, not the same in the original sample and the bootstrap sample (conditionally on the data). We note that the variances of the corresponding asymptotic distributions however still have the same order $n^{-2/3}$, just like our squared $L_p$-distances in (\ref{bootstrap_result_main}). 

\section{Applications}
\label{sec:applications}
In this Section we illustrate the applicability of our bootstrap results. In our first example we consider the current status model described in Section \ref{sec:model}  and estimate $F_0$ by the SMLE. In the second example we consider estimating a finite dimensional regression parameter for the current status model, where in addition to observing the vector $(T,\dd)$, also a covariate vector $X$ is observed.
\subsection{The Smoothed Maximum Likelihood Estimator (SMLE)}
\label{sec:SMLE}
We estimate $F_0$ by the SMLE $\tilde F_{nh}$ obtained by first estimating the MLE $F_n$ and then smoothing this using a smoothing kernel, i.e.,
\begin{align}
\label{SMLE}
\tilde F_{nh}(t)=\int\IK\left((t-x)/h\right)\,d F_n(x),
\end{align}
where $\IK$ is an integrated kernel,
\begin{align*}
\IK(u)=\int_{-\infty}^u K(x)\,dx,
\end{align*}
and where $h$ is a chosen bandwidth. Here  $dF_n$ represents the jumps of the discrete distribution function $F_n$ and $K$ is one of the usual symmetric twice differentiable kernels with compact support, used in density estimation. In our computer experiments, we used the triweight kernel
\begin{align*}
K(u)=\frac{35}{32}\left(1-u^2\right)^31_{[-1,1]}(u).
\end{align*}

For a constant $c>0$ and  $h= cn^{-1/5}$, the SMLE has been proved to converge at rate  $n^{-2/5}$ with asymptotic limit distribution,
\begin{align*}
n^{2/5}\left\{\tilde F_{nh}(t)-F_0(t)\right\}\stackrel{{\cal D}}\longrightarrow N(\b(t),\s^2(t)),
\end{align*}
where
\begin{align}
\label{mu-sigma}
\b(t)=\frac{c^2f_0'(t)}{2}\int u^2 K(u)\,du \quad \text{and} \quad \s^2(t)=\frac{F_0(t)\{1-F_0(t)\}}{cg(t)}\int K(u)^2\,du.
\end{align}
(see \cite{piet_geurt_birgit:10}). The SMLE is often used in the smooth bootstrap procedures described in Section \ref{sec:intro} (see also the numerical example below). Let $\tilde F_{nh}^*(t)$ be the bootstrapped SMLE based on replacing $F_n$ in (\ref{SMLE}) by the bootstrapped MLE $\hat F_n$, then we have the following result,
\begin{align}
\label{SMLE_AN_bootstrap}
n^{2/5}\left\{\tilde F_{nh}^*(t)-\tilde F_{nh}(t)\right\}\stackrel{{\cal D}}\longrightarrow N(0,\s^2(t)),
\end{align}
given the data $(T_1,\dd_1),\dots,(T_n,\dd_n)$, in probability. 
Note that, in contrast to the smooth bootstrap method described in \cite{kim_piet:17:SJS}, we do not need to estimate the convolution SMLE (see (\ref{convoluted_SMLE}) below). 

To prove the asymptotic normality result for the nonparametric bootstrap, given in (\ref{SMLE_AN_bootstrap}), we prove (in Section \ref{sec:appendix}) the following Lemma:
\begin{lemma}
	\label{lemma: toy-representation-bootstrapSMLE}
	Assume that the conditions of Lemma \ref{lemma_Th11.3} are satisfied and that $g$ has a bounded derivative $g'$ on $[0,R]$. Let $t$ be an interior point of $[0,R]$ such that $f_0$ has a continuous derivative $f_0'$ at $t$. If $h\sim cn^{-1/5}$ then,
	\begin{align*}
	\tilde F_{nh}^*(t) &=  \tilde F_{nh}^{(toy)*}(t) +  o_{P_M}(n^{-2/5}),
	\end{align*}
	in probability, where
	\begin{align}
	\label{bootstrap_toy_estimator_definition}
	\tilde F_{nh}^{(toy)*}(t) &= \int\IK((t-u)/h)\,dF_0(u)+\int \frac{K((t-u)/h)\,\left\{\d-F_0(u)\right\}}{hg(u)}\,d\hat\P_n(u,\d).
	\end{align}
\end{lemma}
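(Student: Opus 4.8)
The plan is to transcribe the derivation of the toy representation of the SMLE in the original sample (\cite{piet_geurt_birgit:10}; see also \cite{kim_piet:17:SJS} for the smooth bootstrap), replacing the empirical measure $\P_n$ by the bootstrap measure $\hat\P_n$ throughout and using Lemma~\ref{lemma_Th11.3} wherever the original argument invokes an exponential tail bound for $n^{1/3}(U_n(a)-U(a))$. Since $\IK((t-u)/h)\to1$ as $u\downarrow-\infty$ and $\to0$ as $u\uparrow\infty$, while $\hat F_n$ and $F_0$ are distribution functions, integration by parts gives
\begin{align*}
\tilde F_{nh}^*(t)-\int\IK\left(\tfrac{t-u}{h}\right)dF_0(u)
=\int\IK\left(\tfrac{t-u}{h}\right)d\bigl(\hat F_n-F_0\bigr)(u)
=\frac1h\int K\left(\tfrac{t-u}{h}\right)\bigl\{\hat F_n(u)-F_0(u)\bigr\}\,du ,
\end{align*}
and the subtracted term on the left is exactly the first term of $\tilde F_{nh}^{(toy)*}(t)$; so it suffices to show that $h^{-1}\int K((t-u)/h)\{\hat F_n(u)-F_0(u)\}\,du$ equals the second term of $\tilde F_{nh}^{(toy)*}(t)$ up to $o_{P_M}(n^{-2/5})$ in probability.

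Using the bootstrapped switch relation $1_{\{\hat F_n(u)\ge a\}}=1_{\{\hat U_n(a)\le u\}}$ and $1_{\{F_0(u)\ge a\}}=1_{\{U(a)\le u\}}$, I write $\hat F_n(u)-F_0(u)=\int_0^1(1_{\{\hat U_n(a)\le u\}}-1_{\{U(a)\le u\}})\,da$; substituting, applying Fubini, and using $h^{-1}\int_{u\ge s}K((t-u)/h)\,du=\IK((t-s)/h)$ turns the quantity of interest into $\int_0^1\{\IK((t-\hat U_n(a))/h)-\IK((t-U(a))/h)\}\,da$. By Lemma~\ref{lemma_Th11.3}, on an event of $P$-probability tending to one $\sup_aP_{M|Z}\{n^{1/3}|\hat U_n(a)-U(a)|\ge x\}\le K_1e^{-K_2x^{3/2}}$, so $\hat U_n(a)-U(a)=O_{P_M}(n^{-1/3})$ uniformly in $a$; since $U=F_0^{-1}$ is bi-Lipschitz on compacts (because $0<\inf f_0\le\sup f_0<\infty$) and $n^{-1/3}=o(h)$, the integrand vanishes outside $\{|a-a_0|\le Mh\}$ for some finite $M$, with $a_0:=F_0(t)$. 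On that interval a second-order Taylor expansion of $s\mapsto\IK((t-s)/h)$ about $s=U(a)$, valid because $K,K'$ are bounded and $(\hat U_n(a)-U(a))/h=O_{P_M}(n^{-2/15})$, yields
\begin{align*}
\IK\left(\tfrac{t-\hat U_n(a)}{h}\right)-\IK\left(\tfrac{t-U(a)}{h}\right)
=-\frac1hK\left(\tfrac{t-U(a)}{h}\right)\bigl(\hat U_n(a)-U(a)\bigr)+R_n(a),
\end{align*}
with $|R_n(a)|\le Ch^{-2}(\hat U_n(a)-U(a))^2\,1_{\{|a-a_0|\le Mh\}}$. Integrating the tail bound of Lemma~\ref{lemma_Th11.3} gives $E_{M|Z}[(\hat U_n(a)-U(a))^2]=O_p(n^{-2/3})$ uniformly in $a$, so $E_{M|Z}\int_0^1|R_n(a)|\,da=O_p(h^{-2}\cdot h\cdot n^{-2/3})=O_p(n^{-7/15})$, whence $\int_0^1|R_n(a)|\,da=o_{P_M}(n^{-2/5})$. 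After the substitution $a=F_0(x)$ the claim thus reduces to
\begin{align*}
-\frac1h\int K\left(\tfrac{t-x}{h}\right)\bigl\{\hat U_n(F_0(x))-x\bigr\}f_0(x)\,dx
=\int\frac{K((t-u)/h)\{\d-F_0(u)\}}{hg(u)}\,d\hat\P_n(u,\d)+o_{P_M}(n^{-2/5}).
\end{align*}

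The remaining and decisive step is the asymptotic linearization of the bootstrap argmin process. Write $\hat U_n(a)=\argmin_s\{\psi_0(s)+\eta_n(s)\}$, where $\psi_0(s)=\int_0^s(F_0(u)-a)\,dG(u)$ is minimized at $U(a)$ with $\psi_0''(U(a))=f_0(U(a))g(U(a))>0$, and $\eta_n(s)=(\hat V_n-V_0)(s)-a(\hat\G_n-G)(s)$ with $V_0(s)=\int_0^sF_0\,dG$. The bootstrap version of the argmin expansion used for the original MLE --- obtained by applying the switch relation to this perturbed minimization problem and controlling the increments of $\sqrt n(\hat\P_n-\P_n)$ and $\sqrt n(\P_n-P)$ over intervals of length $O(n^{-1/3})$, exactly as in the proof of Lemma~\ref{lemma_Th11.3} --- shows that, uniformly for $|a-a_0|\le Mh$, the leading part of $\hat U_n(a)-U(a)$ is $-\eta_n'(U(a))/\{f_0(U(a))g(U(a))\}$ in the averaged sense that survives convolution with $h^{-1}K((t-\cdot)/h)$, with a remainder that becomes $o_{P_M}(n^{-2/5})$ after multiplication by $h^{-1}$ and integration over the $O(h)$-window around $t$. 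Plugging this into the last display and using $d\hat V_n(u)=\d\,d\hat\P_n(u,\d)$, $d\hat\G_n(u)=d\hat\P_n(u,\d)$, $dV_0(u)=F_0(u)g(u)\,du$ and $dG(u)=g(u)\,du$, together with one further interchange of integrals, the deterministic contributions of $dV_0$ and $dG$ cancel and one is left precisely with $\int h^{-1}K((t-u)/h)\{\d-F_0(u)\}g(u)^{-1}\,d\hat\P_n(u,\d)$, which is what we needed.

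I expect this argmin linearization, and specifically the control of its error term, to be the main obstacle: one must show that the gap between $\hat U_n(a)-U(a)$ and its linear approximation, once smoothed against $h^{-1}K((t-\cdot)/h)$ over an $O(h)$-window, is genuinely of order $o(n^{-2/5})$ and not merely $O(n^{-1/3})$. This is exactly why the full exponential tail bound of Lemma~\ref{lemma_Th11.3} is needed --- it is what upgrades the pointwise $O_{P_M}(n^{-1/3})$ statement to the moment bound $E_{M|Z}[(\hat U_n(a)-U(a))^2]=O_p(n^{-2/3})$ used above --- and why the increments of the centered bootstrap empirical process $\sqrt n(\hat\P_n-\P_n)$ over $n^{-1/3}$-neighbourhoods have to be controlled with the same precision as in the proof of Lemma~\ref{lemma_Th11.3}. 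Once these ingredients are in place, the rest is a routine transcription of the original-sample computation.
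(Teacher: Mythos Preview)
Your approach diverges from the paper's and the divergence is not cosmetic. The paper never linearizes the argmin process $\hat U_n(a)$. Instead it exploits the \emph{score identity} that comes from the convex minorant characterization of the MLE: with $\psi_{t,h}(u)=K_h(t-u)/g(u)$ and $\bar\psi_{t,h}$ the piecewise constant modification of $\psi_{t,h}$ that is constant between successive jump points of $\hat F_n$, one has exactly
\[
\int \bar\psi_{t,h}(u)\bigl\{\d-\hat F_n(u)\bigr\}\,d\hat\P_n(u,\d)=0.
\]
Starting from $\int\IK_h(t-u)\,d(\hat F_n-F_0)(u)=\int\psi_{t,h}(u)\{\hat F_n(u)-F_0(u)\}\,dG(u)$ and inserting this identity gives the toy term directly, plus two remainders
\[
A_I=\int\psi_{t,h}(u)\{F_0(u)-\hat F_n(u)\}\,d(\hat\G_n-G)(u),\qquad
A_{II}=\int\{\psi_{t,h}-\bar\psi_{t,h}\}(u)\{\hat F_n(u)-\d\}\,d\hat\P_n(u,\d),
\]
and the crucial inequality $|\bar\psi_{t,h}(u)-\psi_{t,h}(u)|\le Ch^{-2}|\hat F_n(u)-F_0(u)|$ together with the moment bound $E_{M|Z}(\hat F_n-F_0)^2=O_p(n^{-2/3})$ coming from Lemma~\ref{lemma_Th11.3} makes both $o_{P_M}(n^{-2/5})$. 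No expansion of $\hat U_n$ is needed.

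Your route has a real gap at the step you yourself flag: the ``argmin linearization'' $\hat U_n(a)-U(a)\approx -\eta_n'(U(a))/\{f_0(U(a))g(U(a))\}$. The process $\eta_n$ is a pure jump process, so $\eta_n'(U(a))$ is undefined; and there is no ``argmin expansion used for the original MLE'' to bootstrap from --- the original-sample proof in \cite{piet_geurt_birgit:10} and \cite{piet_geurt:14} also uses the score identity, not a linearization of $U_n$. Lemma~\ref{lemma_Th11.3} gives tail bounds and hence second moments for $\hat U_n(a)-U(a)$, which is enough for your quadratic remainder $R_n$, but it does \emph{not} identify the first-order term. In cube-root problems $\hat U_n(a)-U(a)$ has a non-Gaussian limit after $n^{1/3}$-scaling, so a pointwise first-order expansion in terms of $\hat\P_n$ simply does not exist; the averaging over $a$ that you invoke is precisely what the score identity encodes, and proving that averaging works without the identity would amount to re-deriving it. Replace your linearization step by the $\bar\psi_{t,h}$ device and the proof goes through.
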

Since 
\begin{align}
\label{toy-original}
\tilde F_{nh}(t) = \tilde F_{nh}^{(toy)}(t)  + o_p(n^{-2/5}),
\end{align}
where $\tilde F_{nh}^{(toy)}(t)$ is defined by (\ref{bootstrap_toy_estimator_definition}) with $\hat \P_n$ replaced by $\P_n$, we have by Lemma \ref{lemma: toy-representation-bootstrapSMLE} that,
\begin{align*}
n^{2/5} \left\{ \tilde F_{nh}^*(t) -  \tilde F_{nh}(t)\right\} &=
n^{2/5}  \int \frac{K((t-u)/h)\left\{\d-F_0(u)\right\}}{hg(u)}\,d(\hat\P_n- \P_n)(u,\d) \\
&\qquad+ o_{P_M}(1),
\end{align*} 
in probability, which converges, conditional on the data $(T_1,\dd_1),\ldots (T_n,\dd_n)$ to the same asymptotic limit as
\begin{align*}
n^{2/5}  \int \frac{K((t-u)/h)\left\{\d-F_0(u)\right\}}{hg(u)}\,d(\P_n- P)(u,\d),
\end{align*} 
in probability (see e.g.  \cite{hall:92book} for more details about the use of the bootstrap for kernel estimators).
Finally, applying the central limit theorem on the expression above proves the asymptotic normality result for the bootstrapped SMLE given in (\ref{SMLE_AN_bootstrap}).
The proof of Lemma \ref{lemma: toy-representation-bootstrapSMLE} is a generalization of the proof for the representation of the SMLE $\tilde F_{nh}(t)$ as the ``toy-estimator'' defined in (\ref{toy-original}). The proof is outlined in Section 11.3 of \cite{piet_geurt:14} and uses the result of Theorem 11.3 given in Section 11.2 which is the analogue of our Lemma \ref{lemma_Th11.3} in the original sample.

\begin{remark}{\rm
		In practice, one should use a boundary correction to ensure consistent estimation of $F_0$ near the boundaries of the support $[0,R]$. In our experiments we used the method of \cite{schuster:85}, see also p.\ 328 in \cite{piet_geurt:14}. It is straightforward to show that the nonparametric bootstrap method remains valid under this boundary correction. Moreover, one should also take into account the bias defined in (\ref{mu-sigma}) when constructing confidence intervals around the SMLE. 
		The bias issue is discussed in more details via a simulation study in  Section  \ref{subsec:simulation_SMLE}.}
\end{remark}

\medskip
In the remainder part of this Section, we show the applicability of this bootstrap result (\ref{SMLE_AN_bootstrap}) by constructing pointwise confidence intervals (CIs) around the SMLE. We consider two different simulation models and a real data example to illustrate the performance of these CIs. 

In the first simulation study we compare our nonparametric bootstrap CIs with (a) the smooth bootstrap CIs proposed in \cite{kim_piet:17:SJS}, (b) the likelihood ratio intervals around the MLE $F_n$ proposed in \cite{banerjee_wellner:2005}, (c) the smooth bootstrap MLE-based intervals proposed in \cite{SenXu2015} and (d) Wald-type CIs, derived from the asymptotic normality of the SMLE.

In a second simulation study, we discuss the difficulties with the construction of pointwise CIs around the SMLE that are not necessarily specific to the bootstrap procedure but that have to be taken into account in order to obtain good CIs around the SMLE under current status data. We first describe a bandwidth selection procedure for choosing the bandwidth of the SMLE and we next discuss the effect of the bias on the performance of the CIs. The algorithms to produce the proposed CIs around the SMLE can be found in the R package \texttt{curstatCI}.

\subsubsection{Simulation Study 1: comparing CIs for the distribution function under current status data}
\label{subsec:simulation_SMLE}
To illustrate the performance of the nonparametric bootstrap procedure for constructing pointwise CIs of the distribution function, we consider a first simulation study based on $N=5,000$ simulation runs from a model where both $X$ and $T$ have a Uniform(0,2) distribution. In this model the bias $\b(t)$ defined in (\ref{mu-sigma}) is zero for all $t \in [0,2]$. The $1-\a$ bootstrap interval is given by
\begin{equation}
\label{CI_type2}
\left[\tilde F_{nh}(t)-Q_{1-\a/2}^*(t)\sqrt{S_{nh}(t)},
\tilde F_{nh}(t)-Q_{\a/2}^*(t)\sqrt{S_{nh}(t)}\right],
\end{equation}
where $Q_{\a}^*(t)$ is the $\a$th quantile of $B$ values of $W_{nh}^*(t)$ defined by
\begin{align*}
W_{nh}^*(t)=\left\{\tilde F_{nh}^*(t)-\tilde F_{nh}(t)\right\}/\sqrt{S_{nh}^*(t)},
\end{align*}
where $S_{nh}(t)$ resp. $S_{nh}^*(t)$ are estimates of the variance $\s^2(t)$ defined in (\ref{mu-sigma}) (apart from the factor $cg(t)$ which drops out in the Studentized bootstrap procedure) given by
\begin{align*}
S_{nh}(t)&=\frac1{n^{2}}\sum_{i=1}^n K_h(t-T_i)^2\left(\dd_i-F_n(T_i)\right)^2 , \\ S_{nh}^*(t)&=\frac1{n^{2}}\sum_{i=1}^n  M_{ni} K_h(t-T_i)^2\left(\dd_i-\hat F_n(T_i)\right)^2.
\end{align*}
In Figure \ref{fig:SMLE_simulation}(a) we compare the proportion of times that $F_0(t)$ is not in the 95\% bootstrap CIs for $t = 0.02,0.04,\ldots, 2$ with the corresponding proportions obtained with (a) the smooth bootstrap procedure proposed in \cite{kim_piet:17:SJS}, (b) the likelihood ratio intervals around the MLE $F_n$ proposed in \cite{banerjee_wellner:2005} and (c) the smooth bootstrap MLE-based intervals proposed in \cite{SenXu2015}. For samples of size $n=1,000$, $B=1,000$ bootstrap samples were generated for both methods and the triweight kernel is used for calculation of the SMLE with $h=2n^{-1/5}$, where the constant $c=2$ corresponds to the length of the support of the observation variable $T$.
For the smooth bootstrap procedures (a) and (c), first a bootstrap sample $(T_1,\dd_1^*),\dots,(T_n,\dd_n^*)$ is obtained by keeping the $T_i$ in the original sample fixed and by resampling the $\dd_i^*$ from a Bernoulli distribution with probability $\tilde F_{nh}(T_i)$, then the bootstrap MLE $\hat F_n$ and SMLE $\tilde F_{nh}^*$ are estimated based on the $(T_i,\dd_i^*),i=1,\ldots,n$. T
he smooth bootstrap $1-\a$ intervals around the SMLE proposed in \cite{kim_piet:17:SJS} are then constructed via ({\ref{CI_type2}}), except that the SMLE $\tilde F_{nh}(t)$ in the definition of $W_{nh}^*(t)$ is replaced by the convolution SMLE given by
\begin{align}
\label{convoluted_SMLE}
\int \IK_h(t-u)\,d\tilde F_{nh}(u),
\end{align}
and that the variance estimate in the bootstrap sample is given by
\begin{align*}
\frac1{n^2}\sum_{i=1}^n K_h(t-T_i)^2\left(\dd_i^*-\hat F_n(T_i)\right)^2.
\end{align*}
The convolution SMLE corresponds to the extra level of smoothing introduced by the smooth bootstrap procedure and is hence not required for the nonparametric bootstrap. The smooth bootstrap CIs of \cite{SenXu2015} around the MLE are given by
\begin{equation*}
\left[F_{n}(t)-Z_{1-\a/2}^*(t),
 F_{n}(t)-Z_{\a/2}^*(t)\right],
\end{equation*}	
where $Z_{\a}^*(t)$ is the $\a$th quantile of $B$ values of 
$\hat F_n(t) - \tilde F_{nh}(t),$
where again the extra level of smoothing is introduced (since one subtracts $\tilde F_{nh}$ and not $F_n$) to justify the smooth bootstrap procedure.

The performance of the SMLE-based CIs is comparable. The bootstrap intervals based on the classical bootstrap procedure avoid however calculation of the convolution SMLE defined in (\ref{convoluted_SMLE}). The CIs in (b) and (c) have similar coverage proportions in the middle of the interval $[0,2]$ but have a worse behavior near the boundaries of the interval compared to the SMLE-based intervals.
	
Figure \ref{fig:SMLE_length}(a) shows the average length of both bootstrap intervals around the SMLE in comparison with the average length of the likelihood ratio CIs of \cite{banerjee_wellner:2005} and the smooth MLE-based CIs of \cite{SenXu2015}. The latter intervals are constructed around the MLE $F_n$ instead of the SMLE $\tilde F_{nh}$. The length of the MLE-based intervals is larger than the length of the SMLE-based intervals due to the fact that the MLE converges at the slower rate $n^{1/3}$.  \begin{figure}[!ht]
	\centering
	\begin{subfigure}[b]{0.3\textwidth}
		\includegraphics[width=\textwidth]{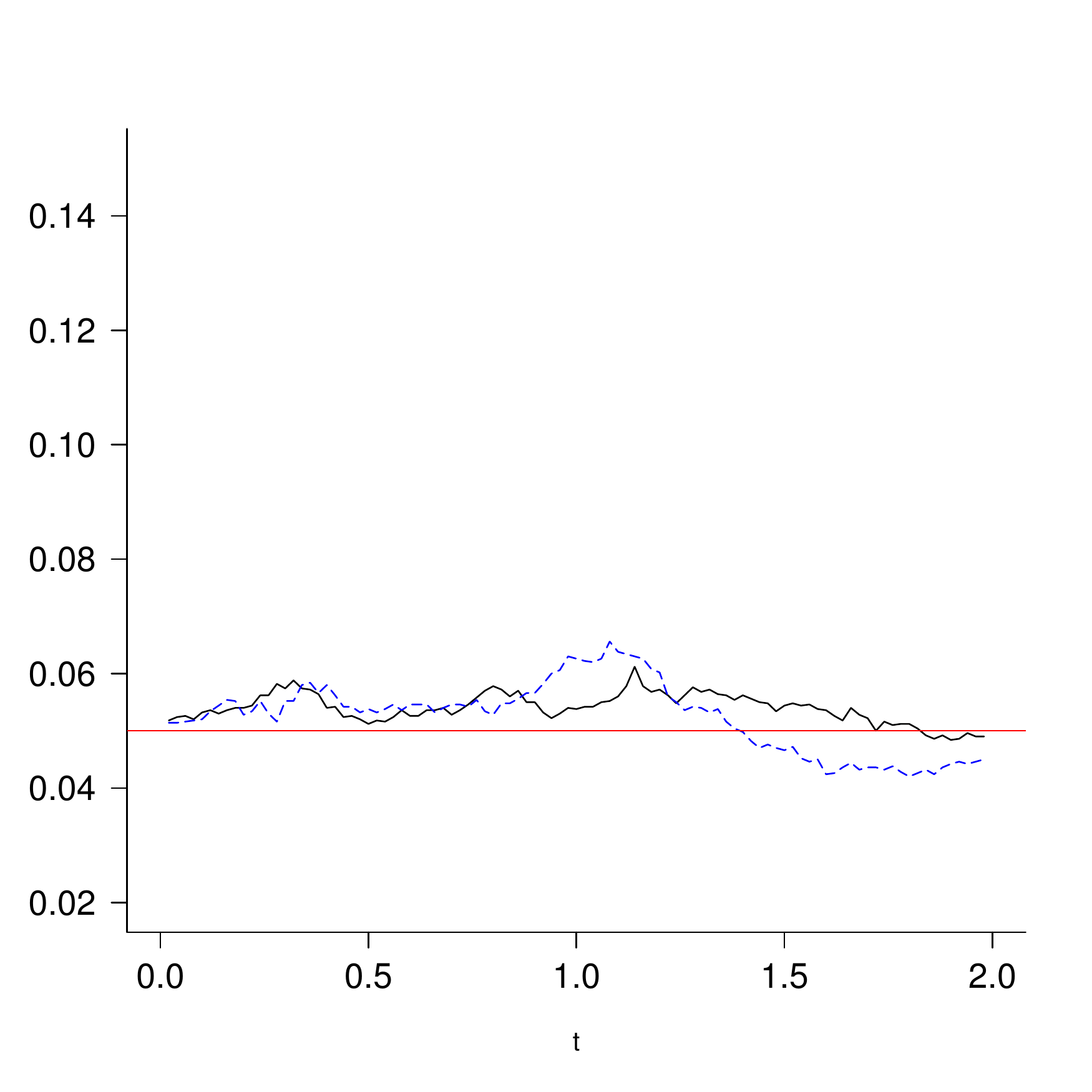}
		\caption{}
	\end{subfigure}
	\begin{subfigure}[b]{0.3\textwidth}
		\includegraphics[width=\textwidth]{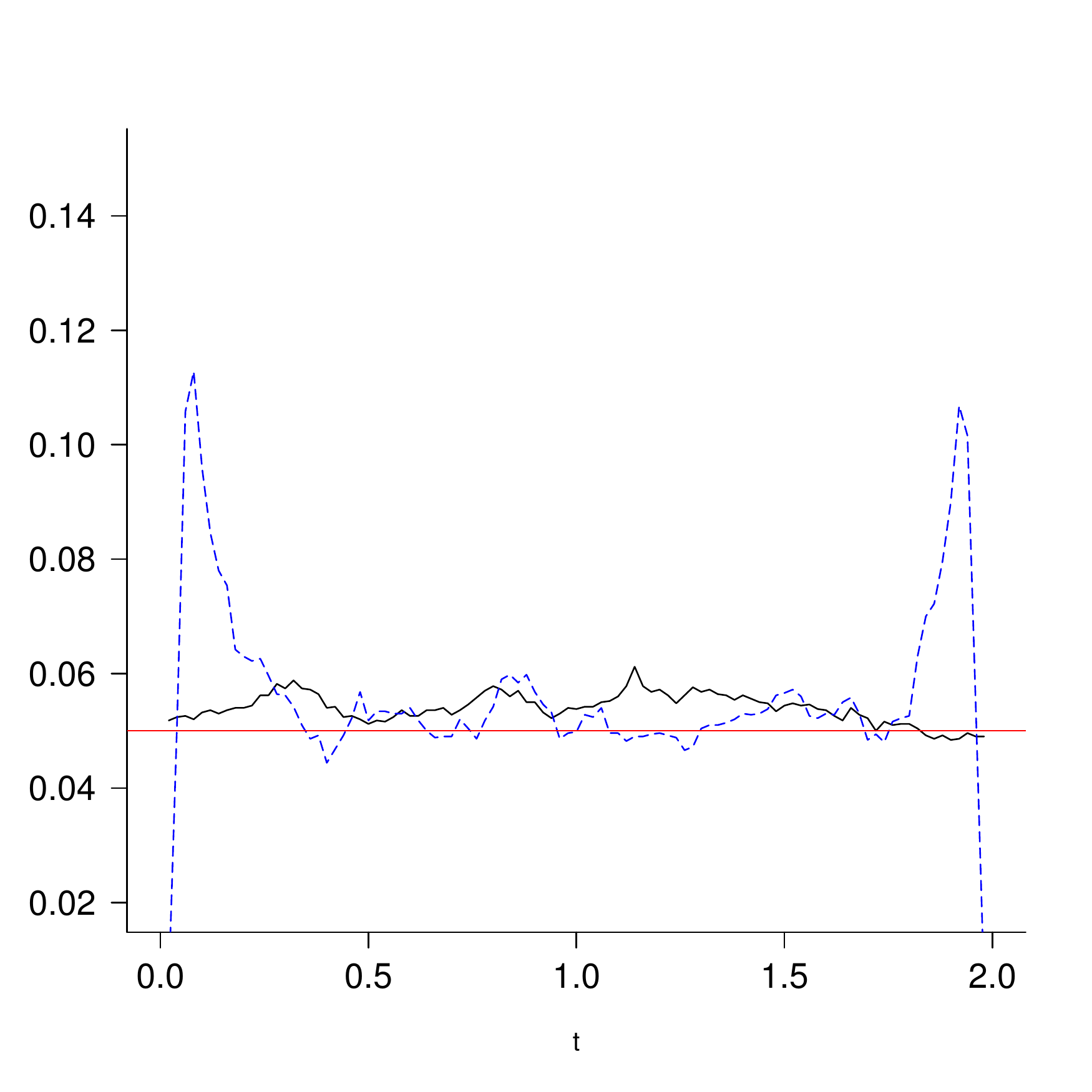}
		\caption{}
	\end{subfigure}	
	\begin{subfigure}[b]{0.3\textwidth}
		\includegraphics[width=\textwidth]{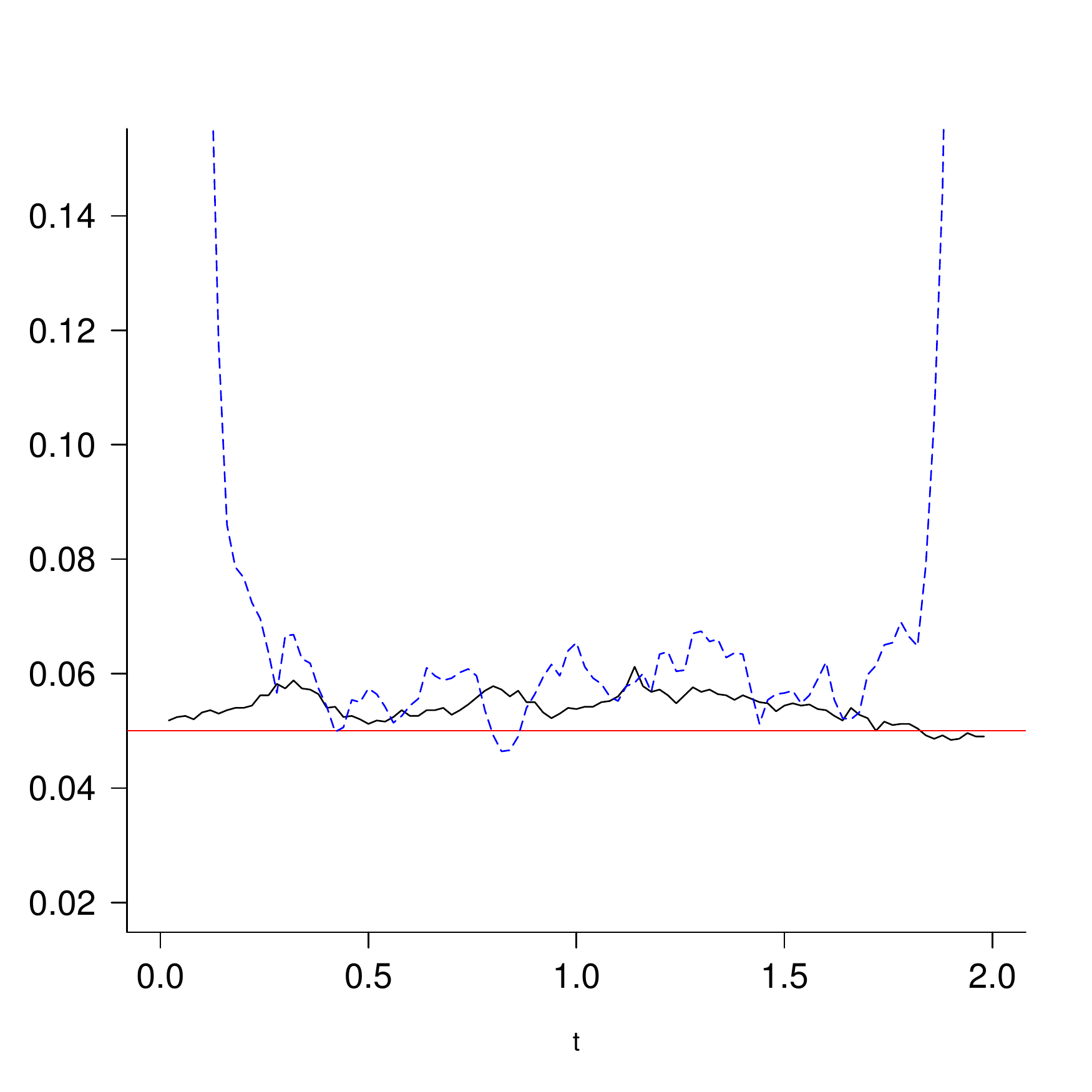}
		\caption{}
	\end{subfigure}	
	\caption{\small Uniform samples: Proportion of times that $F_0(t),\, t=0.02,0.04,\dots$ is not in the $95\%$ CIs for the classical bootstrap CIs defined in (\ref{CI_type2}) (black, solid) and (a) the smooth bootstrap (blue, dashed) procedure in constructing CIs around the SMLE of \cite{kim_piet:17:SJS}, (b) the likelihood ratio CIs of \cite{banerjee_wellner:2005} (blue, dashed) and (c) the smooth MLE-based CIs of \cite{SenXu2015} (blue, dashed). $n =1,000$, $N=5,000$,  $B=1,000$ and $h = 2n^{-1/5}$.}
	\label{fig:SMLE_simulation} 
\end{figure}
\begin{figure}[!ht]
	\centering
	\begin{subfigure}[b]{0.3\textwidth}
		\includegraphics[width=\textwidth]{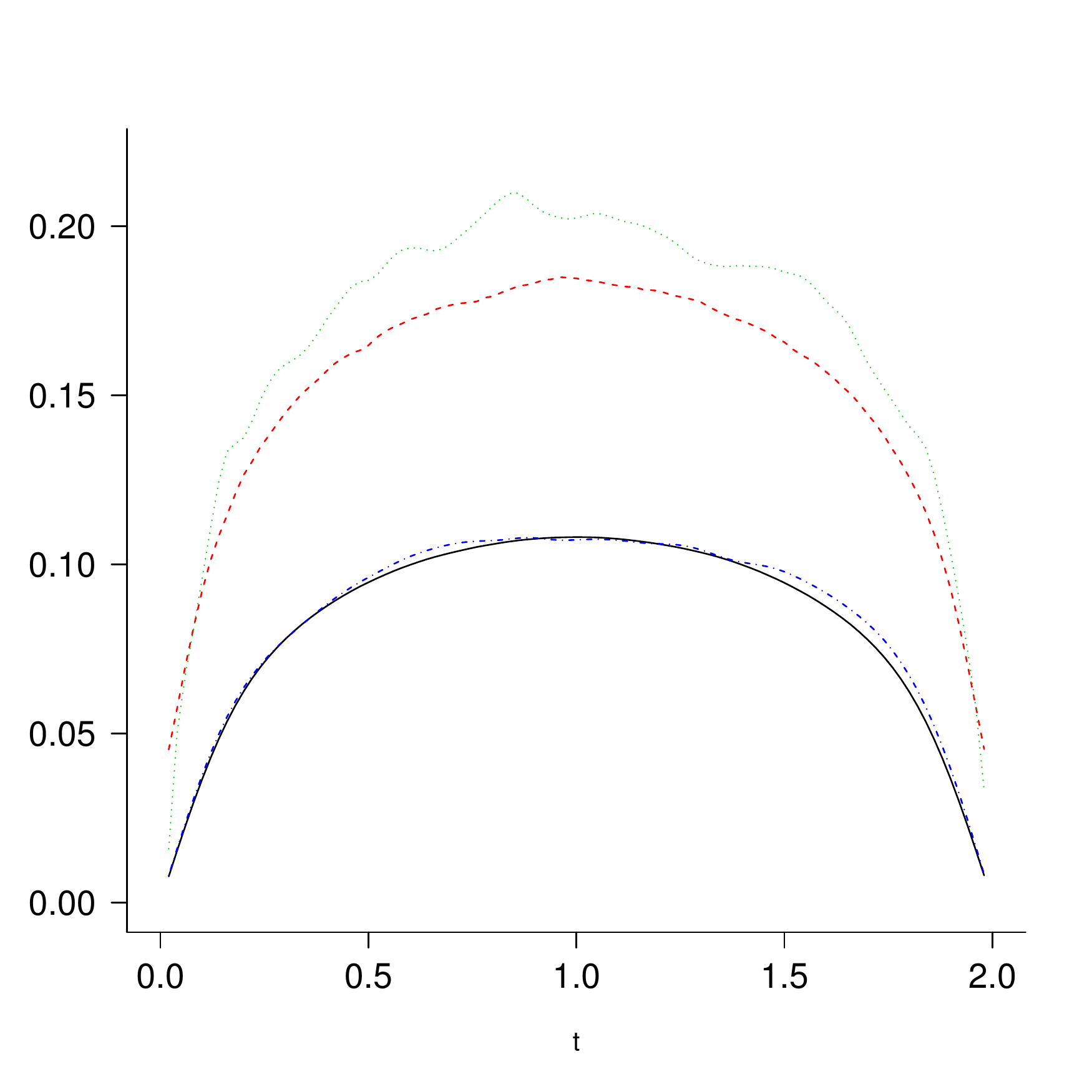}
		\caption{}
	\end{subfigure}
	\begin{subfigure}[b]{0.3\textwidth}
		\includegraphics[width=\textwidth]{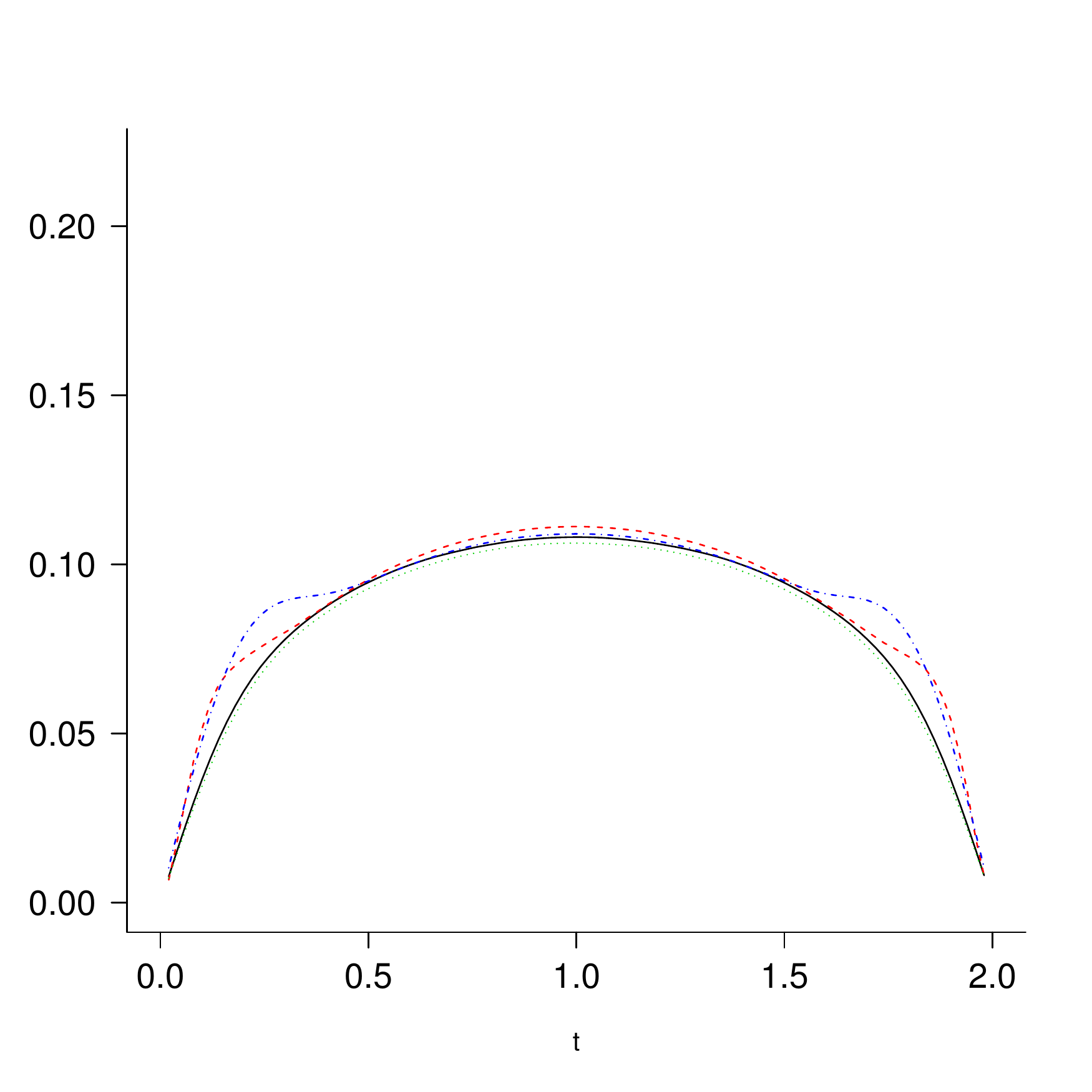}
		\caption{}
	\end{subfigure}	
	\caption{\small Uniform samples: average length of the bootstrap CIs defined in (\ref{CI_type2}) (black, solid) and (a) the smooth bootstrap CIs (blue, dashed-dotted) of \cite{kim_piet:17:SJS}, the likelihood ratio CIs of \cite{banerjee_wellner:2005} (red, dashed) and the smooth MLE-based CIs of \cite{SenXu2015} (green,dotted); and (b) Wald-type CIs using the first estimate $\hat \s_{1,nh}$ (red,dashed), the second estimate $\hat \s_{2,nh}$ (blue,dashed-dotted) and the third estimate $\hat \s_{3,nh}$ (green,dotted). $n =1,000$, $N=5,000$,  $B=1,000$ and $h = 2n^{-1/5}$.}
	\label{fig:SMLE_length} 
\end{figure}

Instead of constructing the Studentized bootstrap intervals where the quantiles of the limiting distribution of the SMLE are derived from the bootstrap distribution, one can alternatively consider Wald-type confidence intervals using the quantiles of the normal distribution and an estimate of the asymptotic variance.  We compare three different estimates $\hat \s_{nh}(t)$ for $\s(t)$ defined in (\ref{mu-sigma}) and construct CIs given by
\begin{align}
\label{CI_wald}
[\tilde F_{nh}(t) &- z_{1-\a/2}(n^{-2/5}\hat \s_{nh}(t)) -  \b(t) n^{-2/5}; \\
&\qquad\qquad\qquad\qquad\qquad \,\,\tilde F_{nh}(t) -  z_{\a/2}(n^{-2/5} \hat \s_{nh}(t)) - \b(t) n^{-2/5} ]\nonumber,
\end{align}
where $z_{\a}$ is the $\a$th quantile of the standard normal distribution. In this simulation study $\b(t)$ defined in (\ref{mu-sigma}) is zero. The effect of $\b(t)$ on the behavior of the intervals will be discussed in the second simulation study below. A first estimate for $\hat \s_{nh}(t)$ is given by
\begin{align}
\label{sigma_estimate1}
\hat \s^2_{1,nh}(t) = \frac{F_n(t)\{1-F_n(t)\}}{cg_{nh}(t)}\int K(u)^2\,du,
\end{align}
where $g_{nh}$ is a classical kernel estimate for the density $g$ of the observation time $T \sim U(0,2)$, using again the Epanechnikov kernel with bandwidth $h=2n^{-1/5}$.
A second estimate for $\s(t)$ is inspired by the fact that the SMLE is asymptotically equivalent to the toy-estimator defined in (\ref{toy-original}), which has a sample variance
\begin{align}
\label{sigma_estimate2}
s^2_{nh}(t) = \frac1{n^{2}}\sum_{i=1}^n \frac{K_h(t-T_i)^2\left(\dd_i-F_0(T_i)\right)^2}{g(T_i)^2}.
\end{align}
This suggests taking the second estimate $n^{-2/5}\hat \s_{2,nh}(t)$ equal to the root of (\ref{sigma_estimate2}) where $F_0$ is replaced by the MLE $F_n$ and $g$ is replaced by the kernel density estimate $g_{nh}$. 
\\
Contrary to the bootstrap procedure for constructing CIs defined in (\ref{CI_type2}), both estimates $\hat \s_{1,nh}(t)$ and $\hat \s_{2,nh}(t)$ require estimating the density $g$. A bootstrap based estimate for the variance, avoiding estimating $g$, is finally given by
\begin{align}
\label{sigma_estimate3}
\hat \s^2_{3,nh}(t) = \frac1{B}\sum_{i=1}^B \left(\tilde F_{nh}^*(t)-\tilde F_{nh}(t)\right)^2.
\end{align}
Figure \ref{fig:SMLE_wald} compares the coverage proportions between the bootstrap CIs in (\ref{CI_type2}) with the Wald-type CIs in (\ref{CI_wald}) using the three different variance estimates described above. Pointwise confidence bands for the variance estimates are illustrated in Figure  \ref{fig:var_wald}. The curves show the average variance estimate and the 5\% and 95\% empirical quantiles of the variance estimates at points $t=0.02,0.04,\dots,2$. The best results for the Wald-type CIs are obtained with the second variance estimate $\hat \s_{2,nh}^2(t)$ but the coverage proportions and average lengths (shown in Figure \ref{fig:SMLE_length}(b)) are inferior to the results obtained with the bootstrap CIs in (\ref{CI_type2}). Estimating the density $g$ in $\hat \s_{1,nh}(t)$ and $\hat \s_{2,nh}(t)$ requires an additional bandwidth selection, whereas the estimate $\hat \s_{3,nh}(t)$ is straightforward to obtain and does not depend on an estimate of $g$. The variance of the first estimate $\hat \s_{1,nh}^2(t)$ is larger than the variance of the second and third variance estimates $\hat \s_{2,nh}^2(t)$ and $\hat \s_{3,nh}^2(t)$ , especially near the boundaries of the support.
\begin{figure}[!ht]
	\centering
	\begin{subfigure}[b]{0.3\textwidth}
		\includegraphics[width=\textwidth]{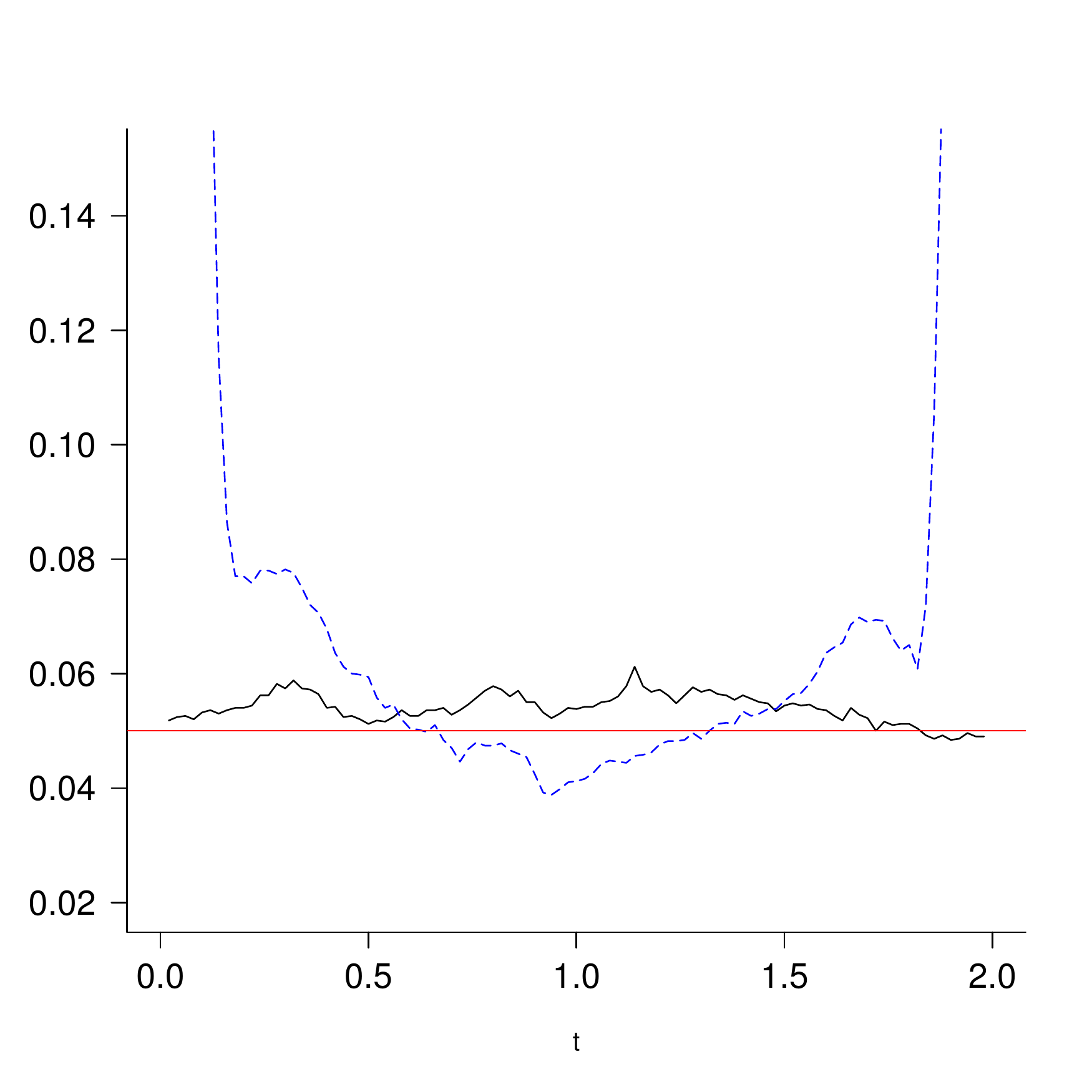}
		\caption{}
	\end{subfigure}
	\begin{subfigure}[b]{0.3\textwidth}
		\includegraphics[width=\textwidth]{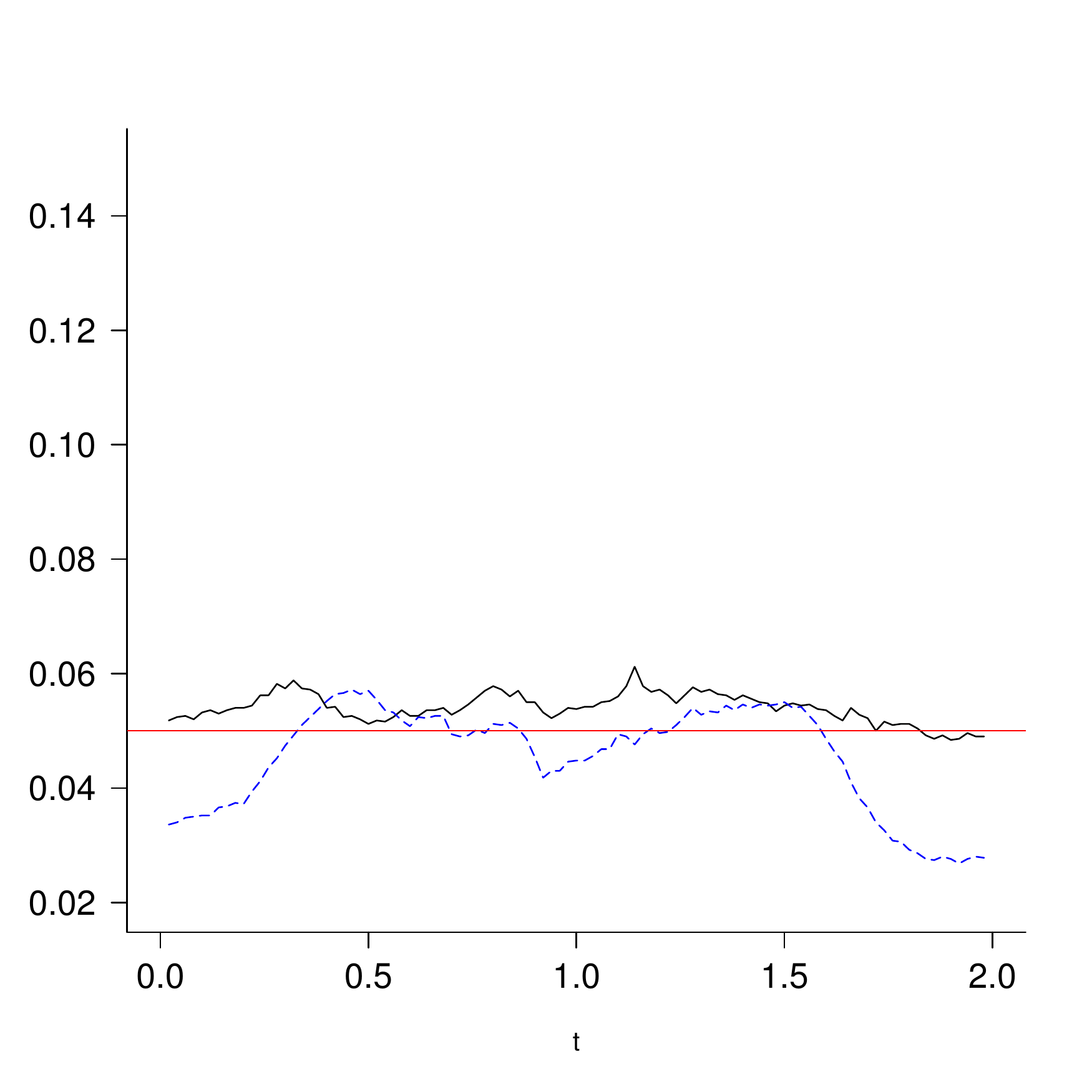}
		\caption{}
	\end{subfigure}	
	\begin{subfigure}[b]{0.3\textwidth}
		\includegraphics[width=\textwidth]{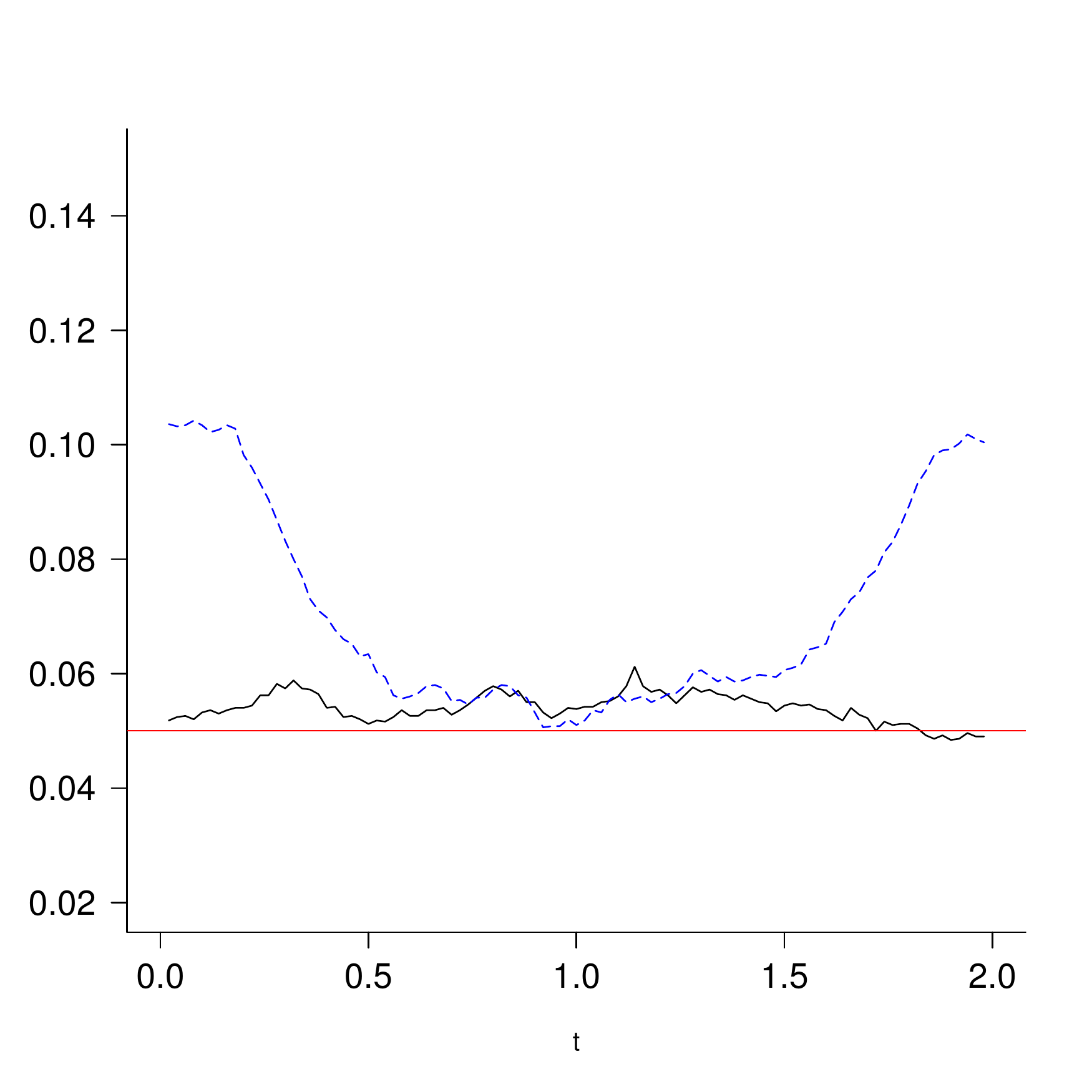}
		\caption{}
	\end{subfigure}	
	\caption{\small Uniform samples: Proportion of times that $F_0(t),\, t=0.02,0.04,\dots$ is not in the $95\%$ CIs for the bootstrap CIs defined in (\ref{CI_type2}) (black, solid) and Wald-type confidence intervals defined in (\ref{CI_wald}) using (a) the first estimate $\hat \s_{1,nh}^2$ (blue,dashed), (b) the second estimate $\hat \s_{2,nh}^2$ (blue,dashed) and (c) the third estimate $\hat \s_{3,nh}^2$ (blue,dotted). $n =1,000$, $N=5,000$,  $B=1,000$ and $h = 2n^{-1/5}$.}
	\label{fig:SMLE_wald} 
\end{figure}

\begin{figure}[!ht]
	\centering
	\begin{subfigure}[b]{0.3\textwidth}
		\includegraphics[width=\textwidth]{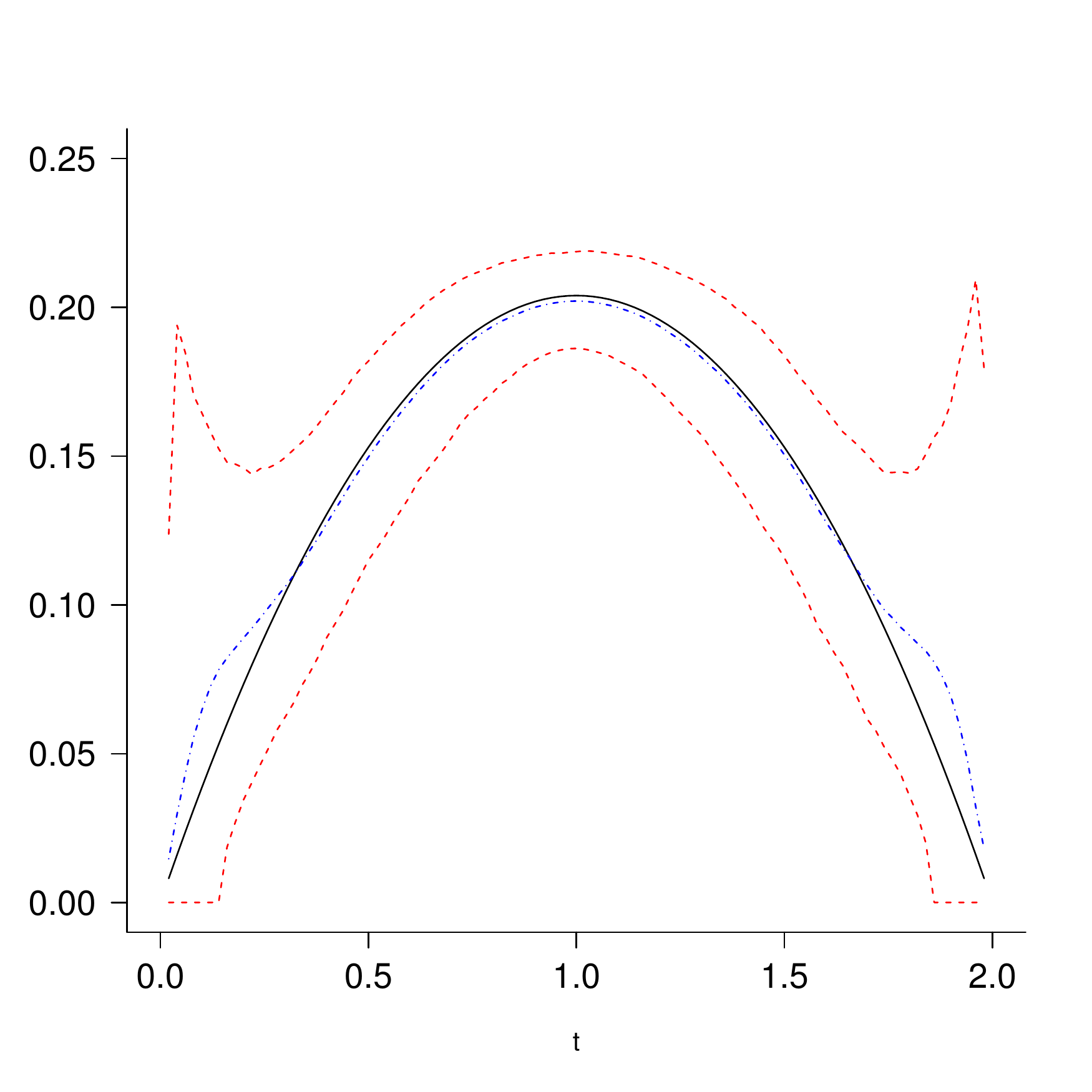}
		\caption{}
	\end{subfigure}
	\begin{subfigure}[b]{0.3\textwidth}
		\includegraphics[width=\textwidth]{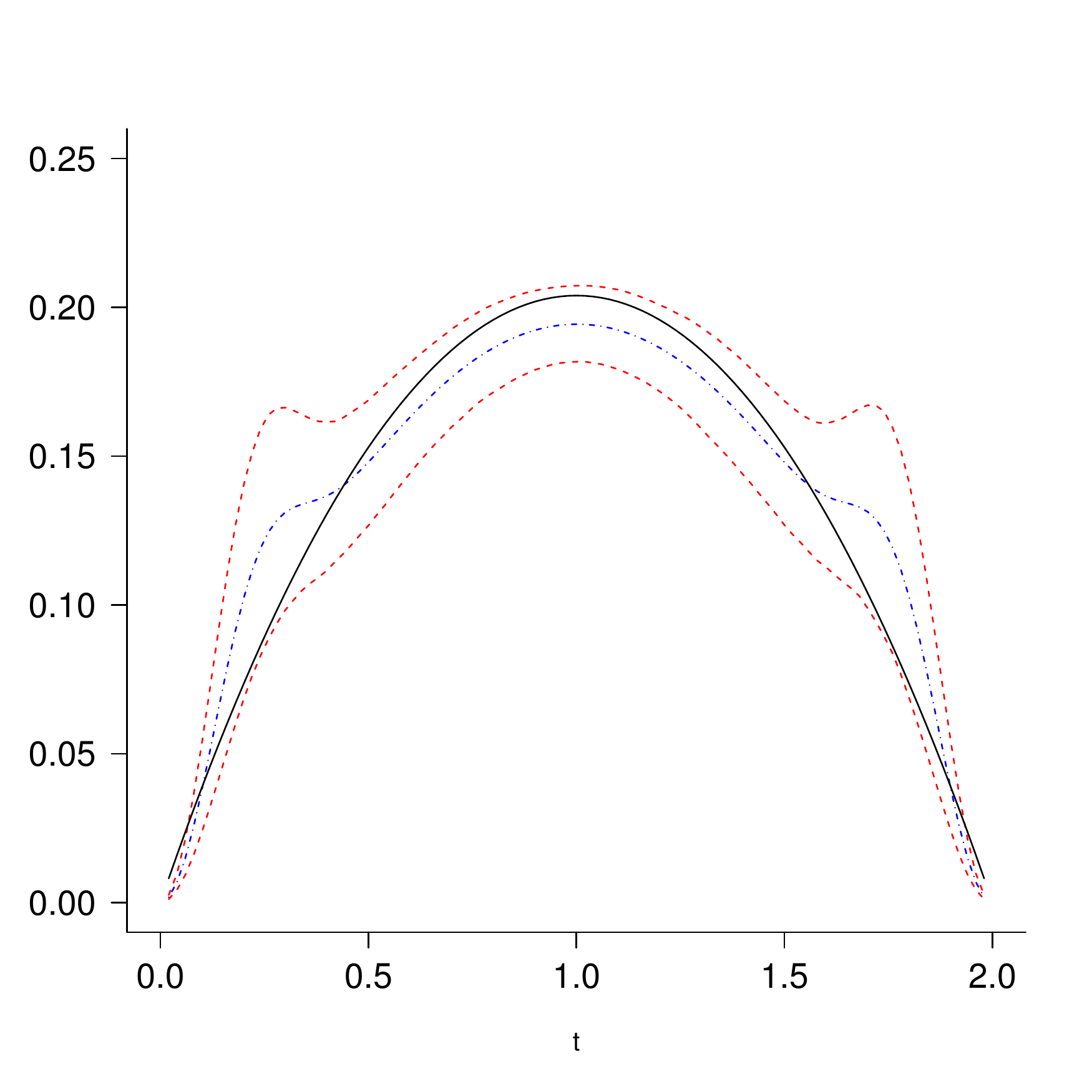}
		\caption{}
	\end{subfigure}	
	\begin{subfigure}[b]{0.3\textwidth}
		\includegraphics[width=\textwidth]{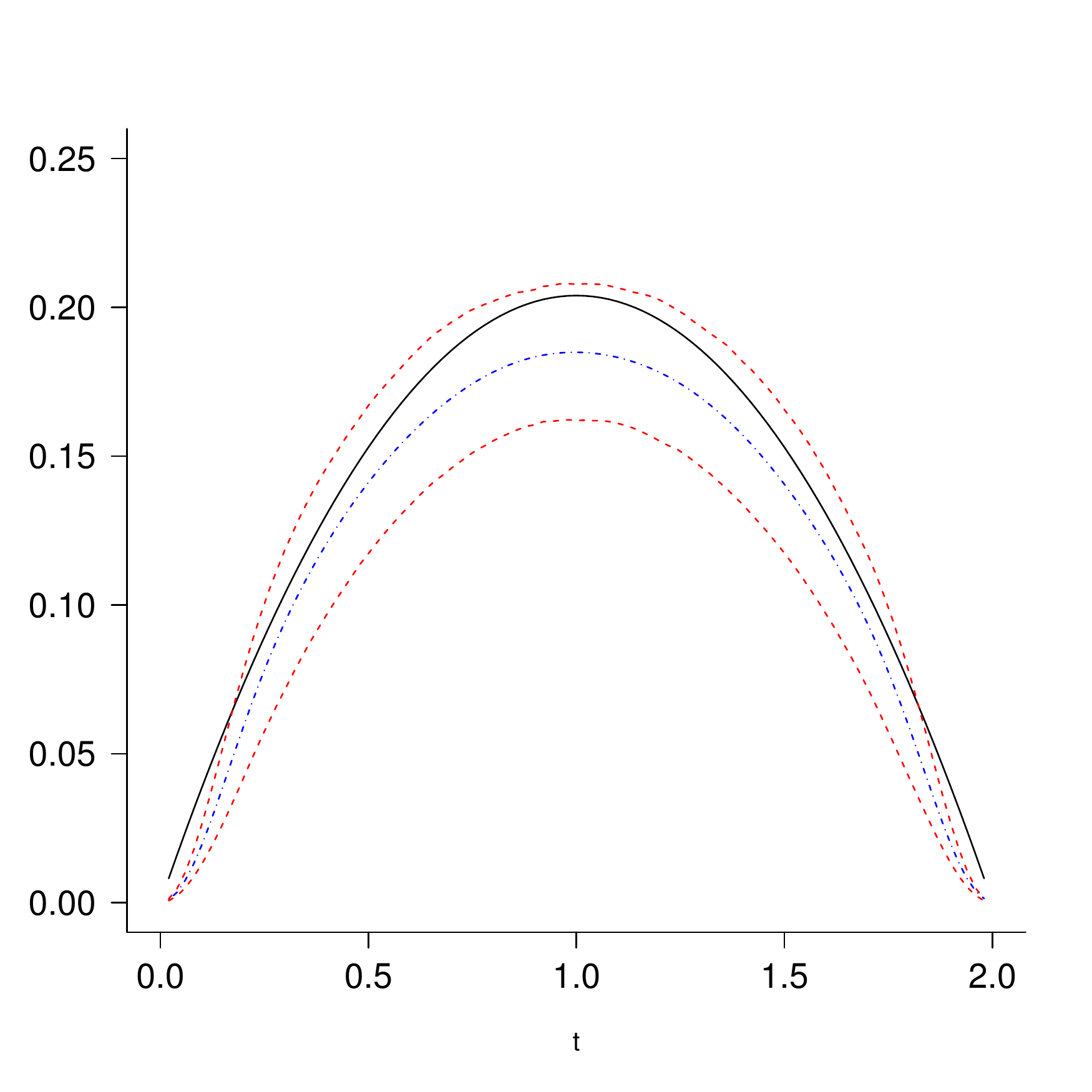}
		\caption{}
	\end{subfigure}	
	\caption{\small Uniform samples: True variance $\s^2$ (black, solid), mean estimate (blue, dashed-dotted) and the 5\% and 95\% empirical quantiles of the estimates (red, dashed) for (a) the first estimate $\hat \s_{1,nh}^2$, (b) the second estimate $\hat \s_{2,nh}^2$ and (c) the third estimate $\hat \s_{3,nh}^2$. $n =1,000$, $N=5,000$,  $B=1,000$ and $h = 2n^{-1/5}$.}
	\label{fig:var_wald} 
\end{figure}
\medskip
Although we have proven validity of the nonparametric bootstrap for constructing pointwise CIs around the SMLE, the performance of the CIs is often influenced by several other aspects that are not specifically due to the nonparametric bootstrap algorithm. In what follows we describe some of these issues further and analyze the problems that can arise in the construction of the CIs.
In a second simulation study we investigate the bias effect. Estimation of the bias defined in (\ref{mu-sigma}) is known to be a rather difficult task since it requires estimating the derivative $f_0'$ of the density $f_0$ under current status data. Sufficiently accurate estimates of the bias are hard to obtain by direct estimation of $f_0'$. Besides estimating the derivative directly we therefore also explore the effect of the bandwidth choice on the performance of the pointwise CIs. We first describe a procedure for selecting the bandwidth and next examine the quality of (a) a bootstrap based estimate of the bias, (b) a direct estimate of the bias using an estimate of $f_0'$ and (c) undersmoothing the bandwidth on the reduction of the bias effect present in the pointwise CIs.

\subsubsection{Bandwidth selection }
\label{subsec:bandwidth}
In the previous simulation study, we considered taking the bandwidth equal to $h=2n^{-1/5}$, where the factor $2$ is based on the size of the support $[0,2]$ of the density $f_0$. This choice gave satisfactory results on the performance of the CIs discussed above. A bad choice of the bandwidth can however seriously affect the performance of the SMLE. It is therefore advisable to use an approach that selects the bandwidth with respect to some optimization criteria.  We apply the method proposed in \cite{hall:90} to select the bandwidth which uses bootstrap subsamples of smaller size from the original sample to estimate the pointwise mean squared error (MSE) of the SMLE.  The method works as follows: to obtain an approximation to the optimal bandwidth minimizing the pointwise MSE, we generate $B$ bootstrap subsamples of size  $m = o(n)$ from the original sample using the subsampling principle and take $c_{t, opt}$ as the minimizer of
\begin{align}
\label{MSE}
\widehat{MSE}(c) = B^{-1}\sum_{b=1}^{B}\left\{\tilde F_{m,cm^{-1/5}}^b(t) - \tilde F_{n,c_0n^{-1/5}}(t) \right\}^2,
\end{align}
where $\tilde F_{n,c_0n^{-1/5}}$ is the SMLE in the original sample of size $n$ using an initial bandwidth $c_0n^{-1/5}$ for some constant $c_0$. The bandwidth used for estimating the SMLE is next given by $h =c_{t,opt}n^{-1/5}$ where $c_{t, opt}$ minimizes $\widehat{MSE}(c)$ as a function of $c$. In the simulation study below we show the results for $m=50$ when generating subsamples from a sample of size $n=1,000$. Other subsample sizes $m=30,100$ were considered as well which resulted in similar optimal bandwidth choices. We used subsamples $m=100$ reps. $m =250$ when we generated data sets of size $n =5,000$ resp. $n =10,000$ from the model.

\subsubsection{Simulation study 2: correcting the asymptotic bias}
\label{subsec:Simulation_SMLE2}	
To investigate the effect of the bias on the construction of the pointwise CIs in (\ref{CI_type2}), we consider a second simulation study where the event times are generated from a truncated exponential distribution on $[0,2]$ and the censoring times are uniformly distributed on $[0,2]$. The density of the event times is given by $f_0(t) = \exp(-t)/(1-\exp(-2))1_{[0,2]}(t)$ and therefore the bias $\b(t)$ defined in (\ref{mu-sigma}) will influence the performance of the CIs. 

Figure \ref{fig:exponential_h1} compares the proportion of times that $F_0(t)$ is not in the 95\% bootstrap CIs for $t = 0.02,0.04,\ldots, 2$ with the corresponding proportions in the bias corrected CIs given by
\begin{align}
\label{CI_bias}
[\tilde F_{nh}(t)-Q_{1-\a/2}^*(t)\sqrt{S_{nh}(t)} &- \b(t)n^{-2/5}, \nonumber\\
&\tilde F_{nh}(t)-Q_{\a/2}^*(t)\sqrt{S_{nh}(t)} - \b(t)n^{-2/5}],
\end{align}
where $Q_{1-\a/2}^*(t)$ and $S_{nh}(t)$ are defined above and where $\b(t)$ is the true bias of the SMLE at timepoint $t$ defined in (\ref{mu-sigma}). The bandwidth of the SMLE is selected by the procedure described in Section \ref{subsec:bandwidth}.
 The coverage proportions of the uncorrected CIs are clearly smaller than the nominal 95\%-level at the left endpoint of the interval $[0,2]$ in correspondence to the region where $\b(t)$ is largest and correcting for the bias effect is needed to obtain good CIs. Figure \ref{fig:exponential_h1} suggests that the coverage proportions of the intervals will be satisfying if the bias can be estimated sufficiently accurately.  
\begin{figure}[!ht]
	\centering
	\includegraphics[width=0.6\textwidth]{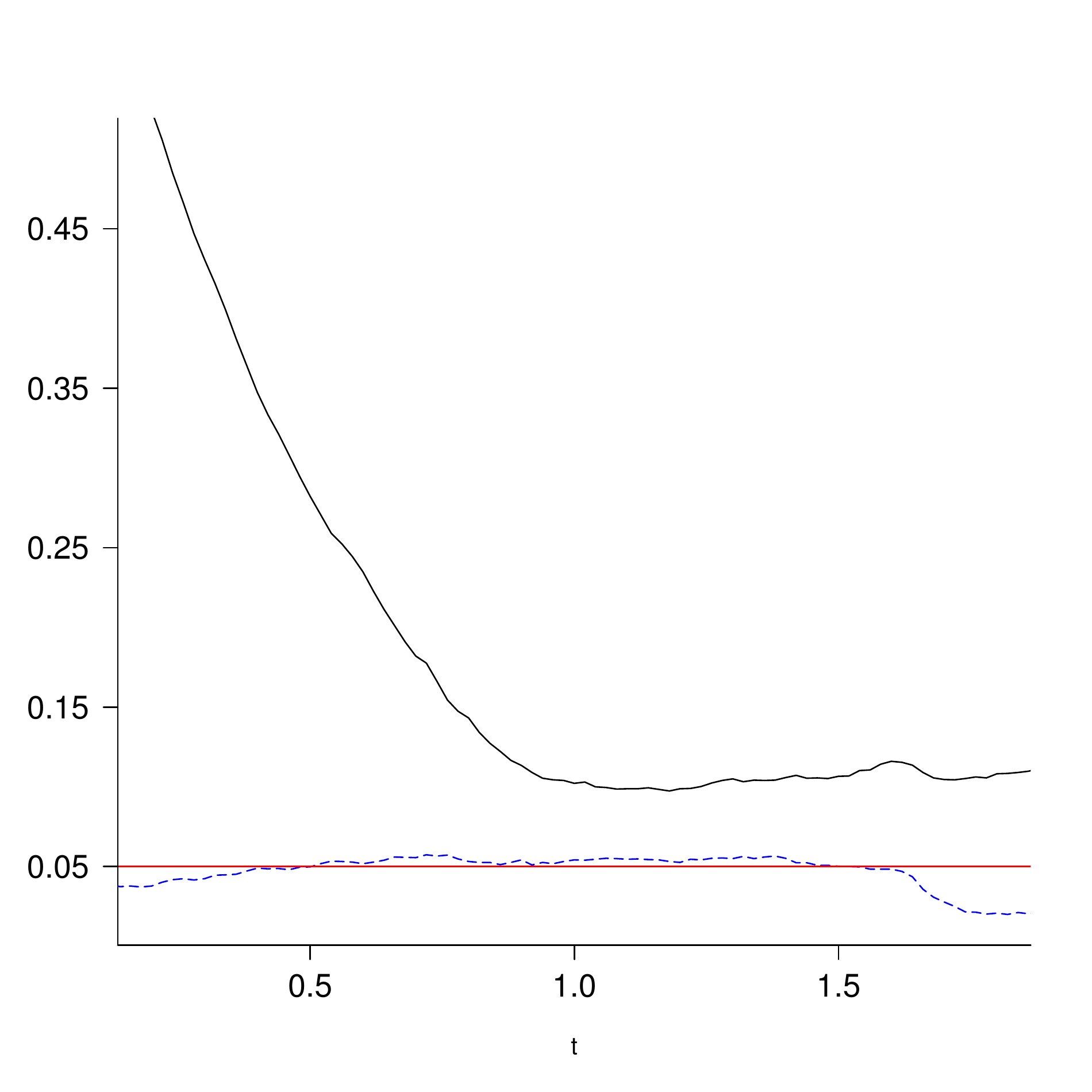}
	\caption{\small Exponential samples: Proportion of times that $F_0(t),\, t=0.02,0.04,\dots$ is not in the $95\%$ CIs for the  bootstrap CIs defined in (\ref{CI_type2}) (blue, dashed) and the bias corrected CIs defined in (\ref{CI_bias}) (black, solid). $n =1,000$, $N=5,000$,  $B=1,000, m=50$ and $h = c_{t,opt}n^{-1/5}$. }
	\label{fig:exponential_h1}
\end{figure}
Estimation of the bias requires estimating the density $f_0$, which is a rather difficult task with current status data. A kernel based estimate of $f_0'$ using the MLE $F_n$ is given by
\begin{align}
\label{fprime}
\tilde f_{n\bar h}'(t)=\bar h^{-2}\int K'\left((t-x)/\bar h\right)\,d F_n(x),
\end{align}
where the bandwidth $\bar h \sim n^{-1/9}$. In our experiments, we take the bandwidth of the estimate $\tilde f_{n\bar h}'(t)$ equal to $\bar h= \bar c_{t, opt}n^{-1/9}$ where $\bar c_{t, opt}$ is selected by the same bootstrap-MSE approach discussed in Section \ref{subsec:bandwidth}, but with the SMLE replaced by this derivative estimate. 
To obtain good estimates of $f_0'$ near the boundaries of the support, we consider the boundary correction method explained in Section 9.2 of \cite{piet_geurt:14}.  A direct estimator of the actual bias is then obtained by first replacing $f_0'(t)$ in (\ref{mu-sigma})  by the estimate $\tilde f_{nh}'(t)$ and next multiplying with $n^{-2/5}$, i.e. the order of the actual bias that has to be taken into account when constructing the CIs.

Similarly to the estimate of the pointwise MSE defined in (\ref{MSE}), we can also construct a bootstrap method for estimating the bias by using the subsampling principle described in \cite{hall:90}. Our estimate $\widehat{Bias}(t)$ of the actual bias $\b(t)n^{-2/5}$, is given by
 \begin{align*}
 \widehat{Bias}(t) = \left\{ B^{-1}\sum_{b=1}^{B}\left\{\tilde F_{m,c_{t,opt}m^{-1/5}}^b(t) - \tilde F_{nc_0n^{-1/5}}(t) \right\}\right\}\left(\frac{m}{n}\right)^{2/5}.
 \end{align*}
Figure \ref{fig:bias_comparisons} compares the average true bias effect $\b(t)n^{-2/5}$ and the average bias estimates obtained by either the direct estimation approach or the bootstrap based bias estimate for sample sizes $n=1000,5000$ and $n = 10,000$. 
Note that, since the bandwidth constant $c_{t,opt}$ used for estimating the SMLE is different in each simulation run, the true bias (depending on $c_{t,opt}$, see (\ref{mu-sigma})) in each run is also different and therefore the average true bias is shown in Figure \ref{fig:bias_comparisons}.  The actual size of the bias decreases with increasing sample size. 
\begin{figure}[!ht]
	\centering
	\begin{subfigure}[b]{0.31\textwidth}
		\includegraphics[width=\textwidth]{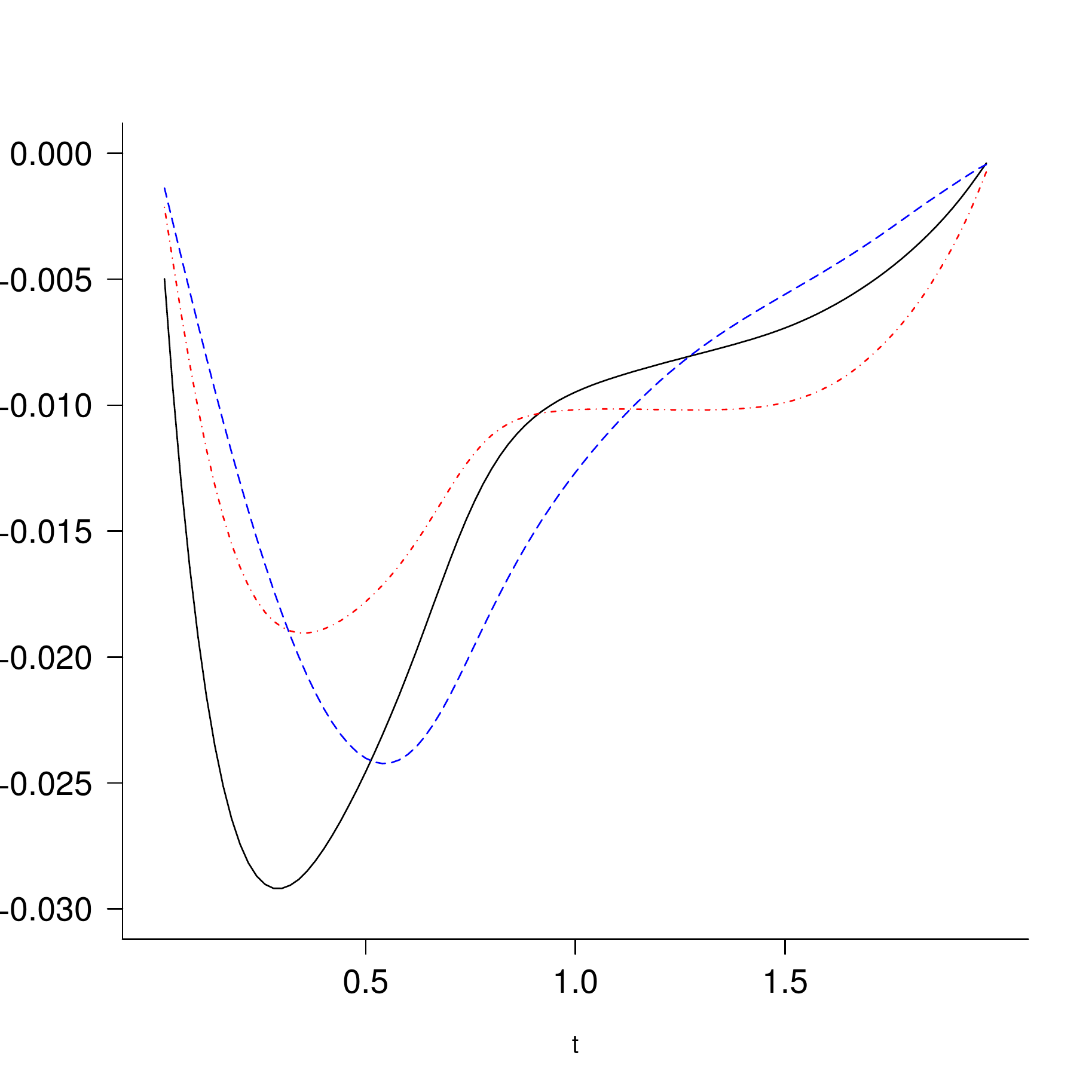}
		\caption{}
	\end{subfigure}
	\begin{subfigure}[b]{0.31\textwidth}
		\includegraphics[width=\textwidth]{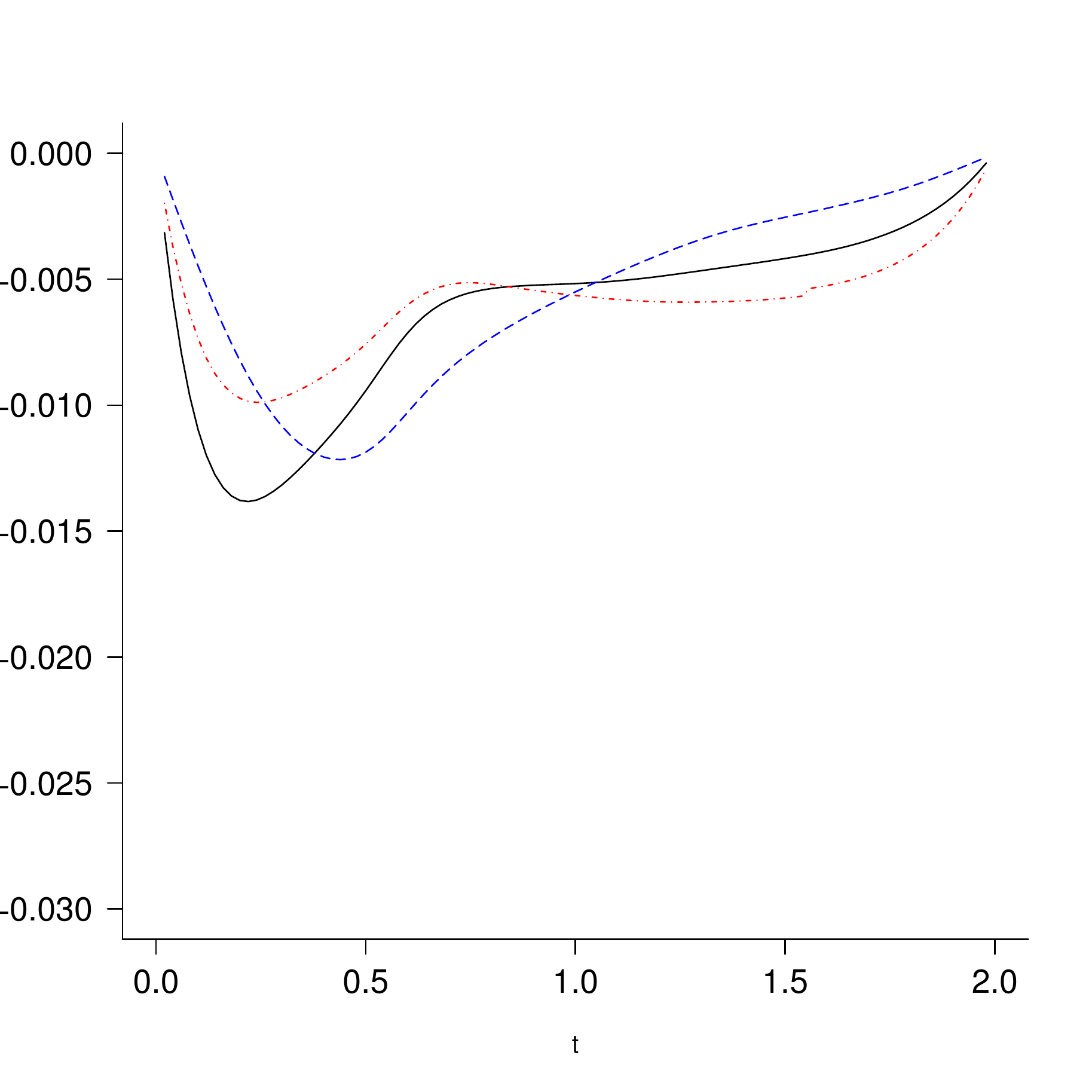}
		\caption{}
	\end{subfigure}
	\begin{subfigure}[b]{0.31\textwidth}
		\includegraphics[width=\textwidth]{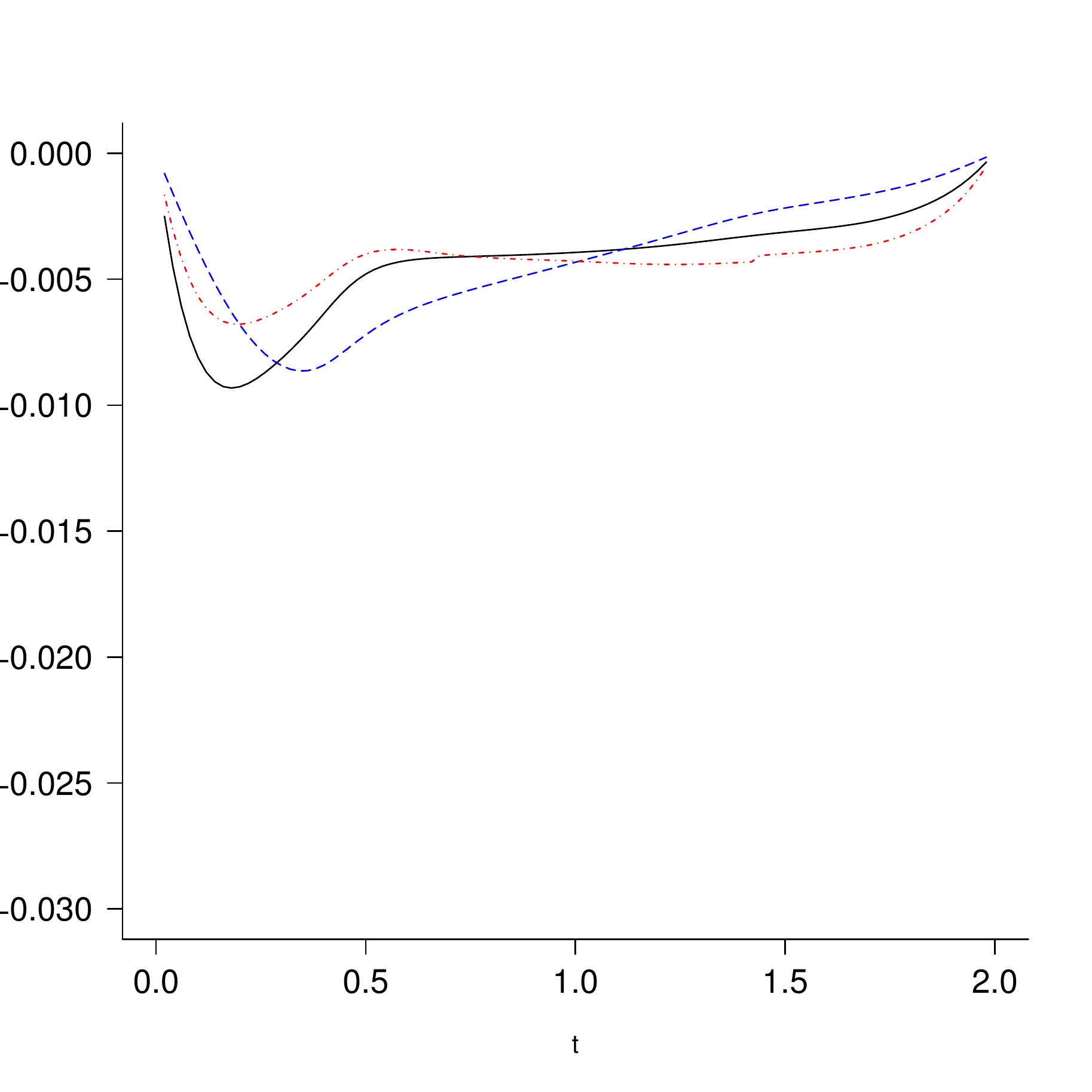}
		\caption{}
	\end{subfigure}
	\caption{\small Exponential samples: Average true bias (black solid) and average estimated bias for the  bootstrap based estimate (blue, dashed) and the direct estimate (red, dashed-dotted) for samples (and subsamples) of size (a) $n =1,000, m =50$, (b) $n =5,000, m=100$ and (c) $n =10,000, m=250$.  $N=5,000$,  $B=1,000$ and $h = c_{t,opt}n^{-1/5}$. }
	\label{fig:bias_comparisons}
\end{figure}

The proportion of times that $F_0(t)$ is not in the 95\% bootstrap CIs, shown in Figure \ref{fig:bias_percentages},  decreases if one corrects for the bias by one of the discussed bias estimates. The results for the direct bias estimate using the estimate $\tilde f_{n\bar h}'$ are slightly better than the results for the bootstrap estimate of $\b(t) n^{-2/5}$. 
The coverage proportions are however still anti-conservative for points at the left end of the support.
We also considered constructing the bias corrected CIs in the uniform model used in Section \ref{subsec:simulation_SMLE} where the actual bias is zero (results not shown). The results of the uncorrected CIs in (\ref{CI_type2}) were slightly better and estimating the bias in this model has a somewhat negative effect on the coverage proportions of the pointwise CIs around the SMLE.
\begin{figure}[!ht]
	\centering
	\begin{subfigure}[b]{0.31\textwidth}
		\includegraphics[width=\textwidth]{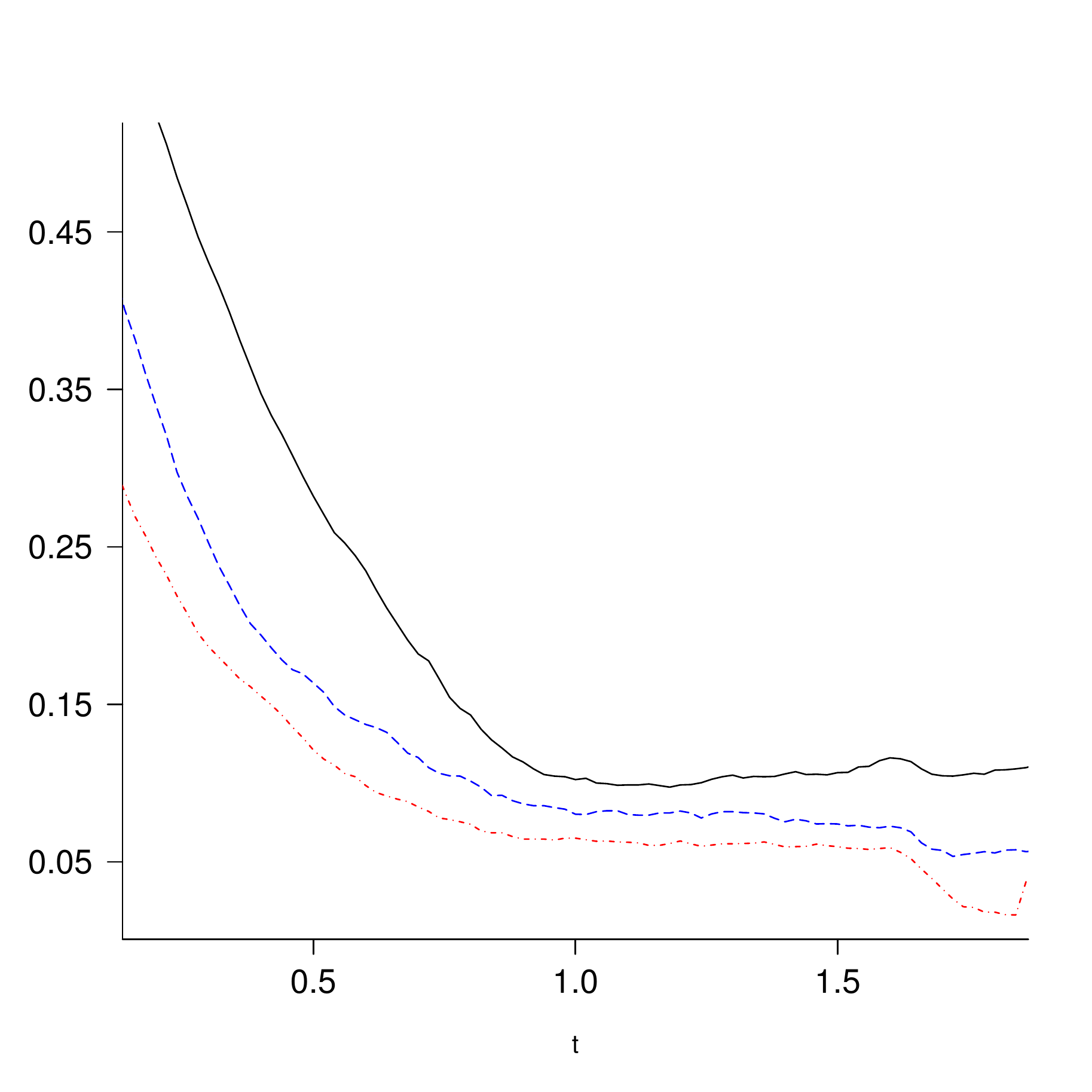}
		\caption{}
	\end{subfigure}
	\begin{subfigure}[b]{0.31\textwidth}
		\includegraphics[width=\textwidth]{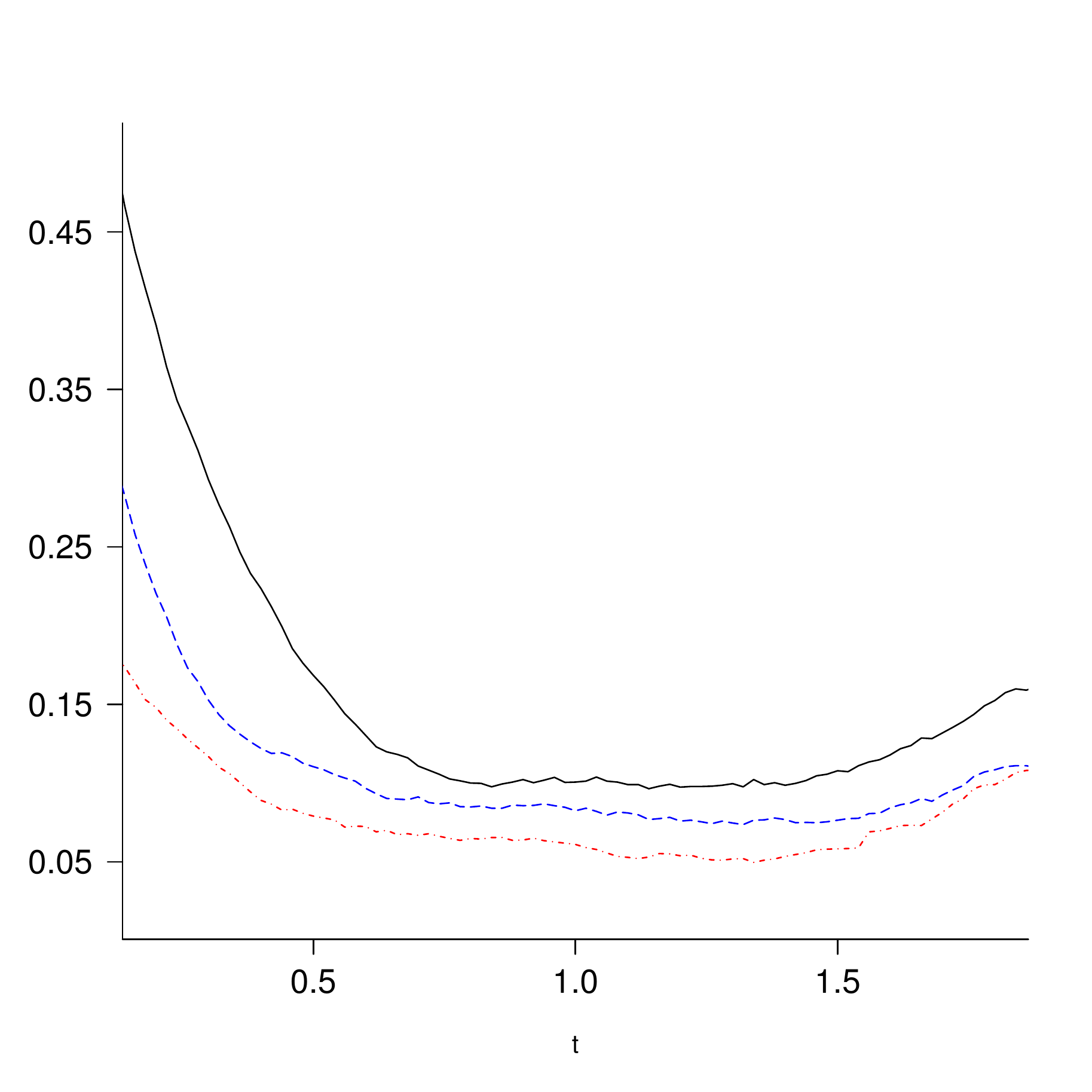}
		\caption{}
	\end{subfigure}
	\begin{subfigure}[b]{0.31\textwidth}
		\includegraphics[width=\textwidth]{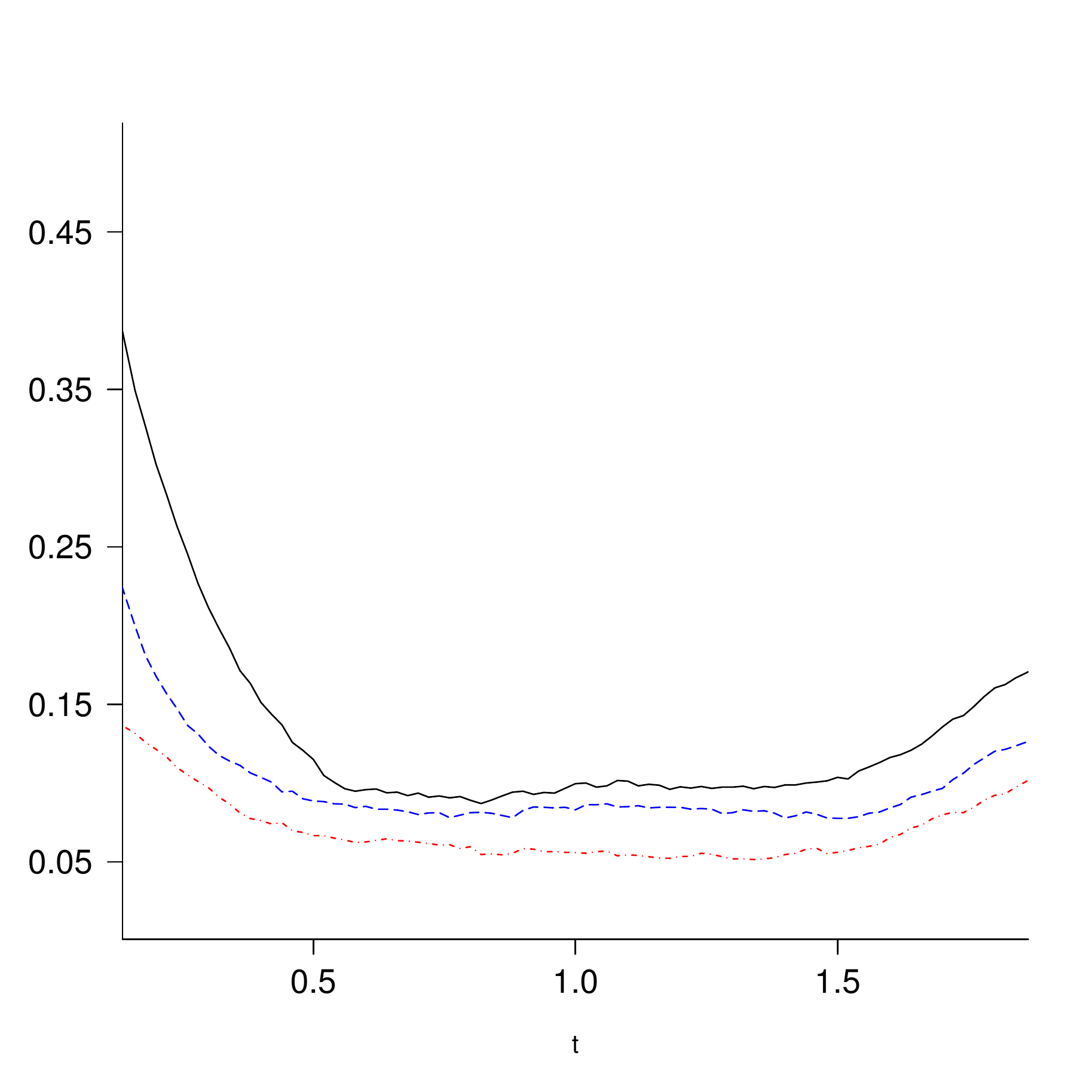}
		\caption{}
	\end{subfigure}
	\caption{\small Exponential samples without undersmoothing: Proportion of times that $F_0(t),\, t=0.02,0.04,\dots$ is not in the $95\%$ CIs defined in (\ref{CI_type2}) (black, solid) and the bias corrected confidence intervals defined in (\ref{CI_bias}) with bootstrap based bias estimate (blue,dashed) and direct bias estimate (red, dashed-dotted) for samples (and subsamples) of size (a) $n =1,000, m =50$, (b) $n =5,000, m=100$ and (c) $n =10,000, m=250$. $N=5,000$,  $B=1,000$ and $h = c_{t,opt}n^{-1/5}$. }
	\label{fig:bias_percentages}
\end{figure}

Similarly to the methods proposed in \cite{kim_piet:17:SJS} we next investigate how the choice of the bandwidth can affect the coverage proportions and average length of our CIs.  To this end, we consider the concept of undersmoothing proposed by \cite{hall:92} and take $c_{t,opt}n^{-1/4}$ as the bandwidth used in constructing the CIs defined in (\ref{CI_type2}). The coverage proportions of the CIs for the exponential model, shown in Figure \ref{fig:undersmoothing}, illustrate that the performance of the CIs around the SMLE improve by undersmoothing. 
We also observed that if we considered a smaller bandwidth choice  $h = (1/3) c_{t,opt}n^{-1/5}$ , the coverage proportions even improve further and give satisfactory results in the left end point of the support.
This illustrates that a smaller bandwidth choice can indeed correct for the bias in the CIs.

The results of the CIs in (\ref{CI_type2}) in the uniform model with a bandwidth $h=c_{t,opt}n^{-1/4}$ or $h=(1/3)c_{t,opt}n^{-1/5}$ are in line with the results obtained with a bandwidth $h=c_{t,opt}n^{-1/5}$ and similar to the results shown in Figure \ref{fig:SMLE_wald}. This shows that undersmoothing in a model without bias has no negative effect on the coverage proportions of our CIs.

By undersmoothing, the length of our SMLE-based CIs increases but 
the average length of the CIs remains remarkably smaller than the average length of the CIs around the MLE proposed by \cite{banerjee_wellner:2005} and \cite{SenXu2015} (see Table \ref{table:AL}).

\begin{figure}[!ht]
	\centering
	\begin{subfigure}[b]{0.31\textwidth}
		\includegraphics[width=\textwidth]{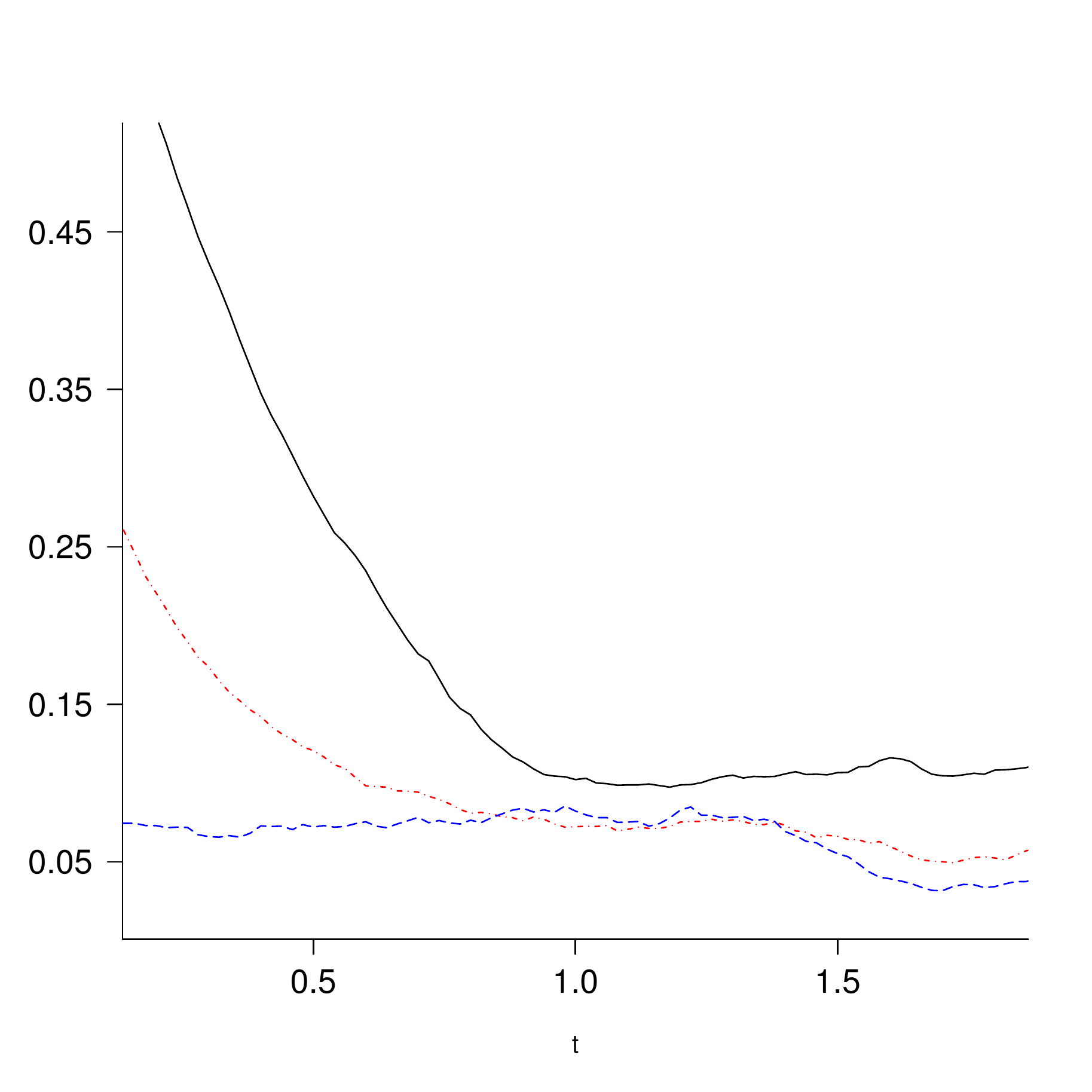}
		\caption{}
	\end{subfigure}
	\begin{subfigure}[b]{0.31\textwidth}
		\includegraphics[width=\textwidth]{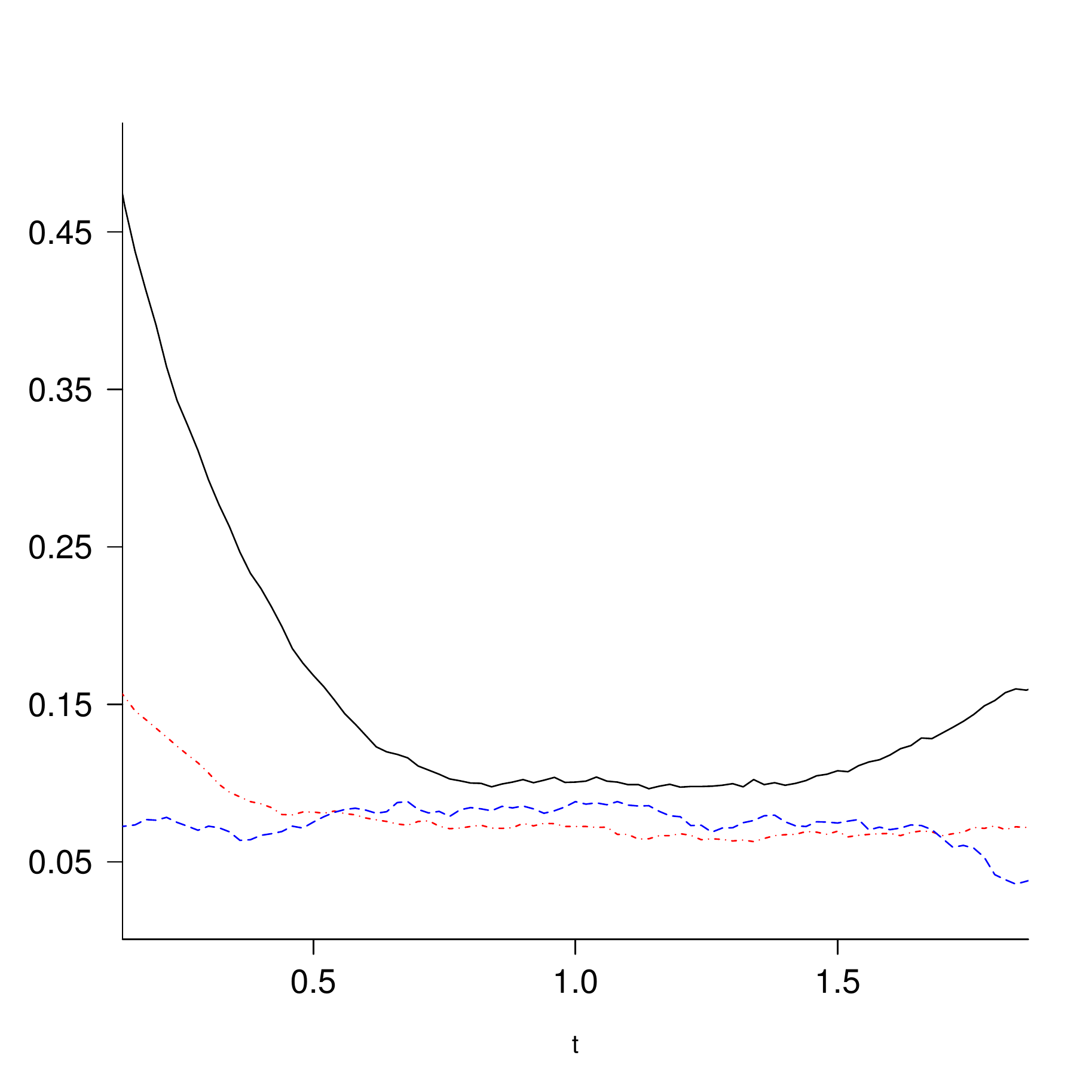}
		\caption{}
	\end{subfigure}
	\begin{subfigure}[b]{0.31\textwidth}
		\includegraphics[width=\textwidth]{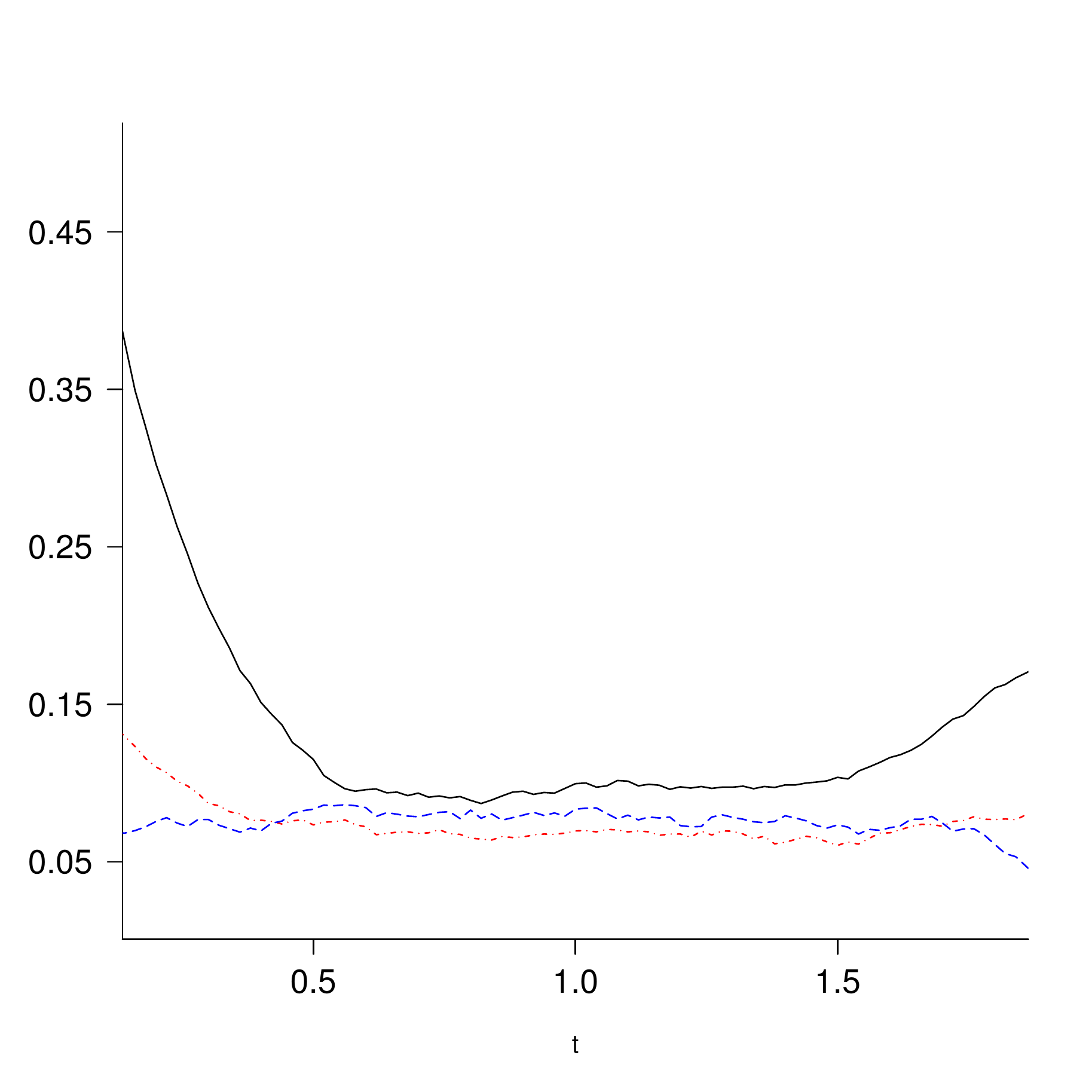}
		\caption{}
	\end{subfigure}
	\caption{\small Proportion of times that $F_0(t),\, t=0.02,0.04,\dots$ is not in the $95\%$ CIs defined in (\ref{CI_type2}) with $h = c_{t,opt}n^{-1/5}$ (black, solid), $h = c_{t,opt}n^{-1/4}$ (red, dashed-dotted)  and $h = (1/3) c_{t,opt}n^{-1/5}$  (blue, dashed) for samples (and subsamples) of size (a) $n =1,000, m =50$, (b) $n =5,000, m=100$ and (c) $n =10,000, m=250$. $N=5,000$ and  $B=1,000$. }
	\label{fig:undersmoothing}
\end{figure}

\begin{table}
\centering
\caption{Average length of the SMLE-based CIs defined in (\ref{CI_bias}) for different bandwidth choices ($h \sim n^{-1/5}$ and $h \sim n^{-1/4}$) and average length of the MLE-based CIs proposed by \cite{banerjee_wellner:2005} and \cite{SenXu2015} at timepoints $t = 0.5,1,1.5$.}
\label{table:AL}
\begin{tabular}{c|ccc|ccc}
\hline
 & \multicolumn{3}{c|}{Uniform} & \multicolumn{3}{c}{Exponential}\\
Method  & $t = 0.5$ & $t = 1$ & $t = 1.5$ & $t = 0.5$ & $t = 1$ & $t = 1.5$ \\
\hline
SMLE ($h \sim n^{-1/5}$) & 0.064819 & 0.077020 & 0.064976 & 0.085540 & 0.087565 & 0.057716 \\
SMLE ($h \sim n^{-1/4}$) & 0.079671 & 0.092096 &  0.079757 & 0.085540 & 0.087565 & 0.057716 \\
MLE (\cite{banerjee_wellner:2005}) & 0.164767 &0.184590& 0.165699 &0.204079 &0.161122& 0.104002\\
MLE (\cite{SenXu2015}) & 0.183982 & 0.202430 & 0.186452 & 0.225882 & 0.176159 & 0.118541\\
\hline
\end{tabular}
\end{table}

\subsubsection{Rubella data}
We also applied the bootstrap procedures to the Rubella data set described by \cite{Keiding:96}. The data set contains 230 observations on the prevalence of rubella in Austrian males. For the smooth bootstrap, confidence intervals were calculated in \cite{kim_piet:17:SJS} using the bandwidth $h = c_{t,opt}n^{-1/4}$.  Figure \ref{fig:Rubella}  shows the CIs obtained with the nonparametric bootstrap and illustrates the applicability of our method in a real data example. For comparison, we also show the confidence intervals obtained by the methods of \cite{banerjee_wellner:2005} and \cite{SenXu2015}. The latter confidence intervals were obtained by the Rcpp scripts in \cite{github:15}. The nonparametric bootstrap SMLE-based CIs, including the data-driven bandwidth procedure, can be generated with the R package \texttt{curstatCI}.

\begin{figure}[!ht]
	\centering
	\begin{subfigure}[b]{0.3\textwidth}
		\includegraphics[width=\textwidth]{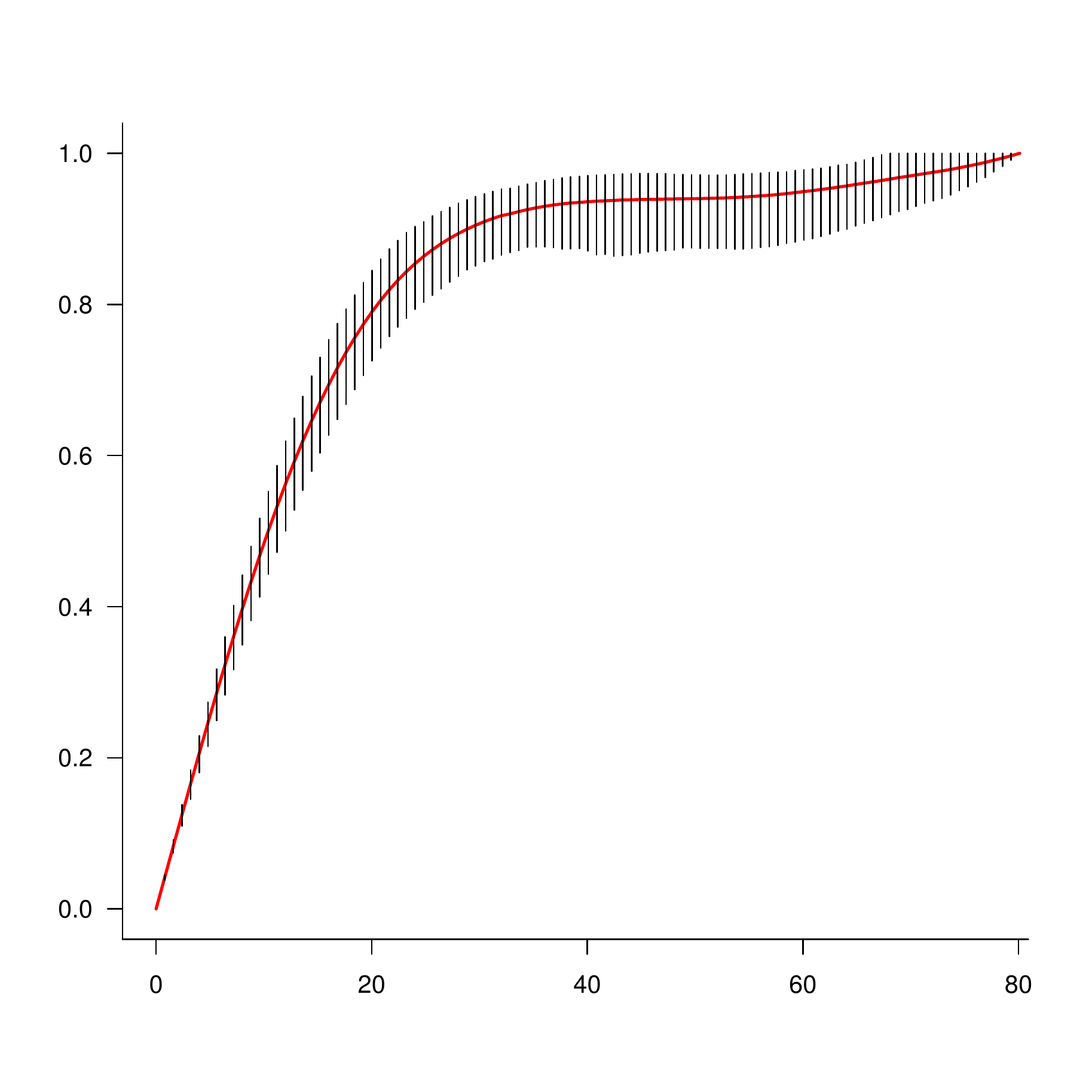}
		\caption{}
	\end{subfigure}
	\begin{subfigure}[b]{0.3\textwidth}
		\includegraphics[width=\textwidth]{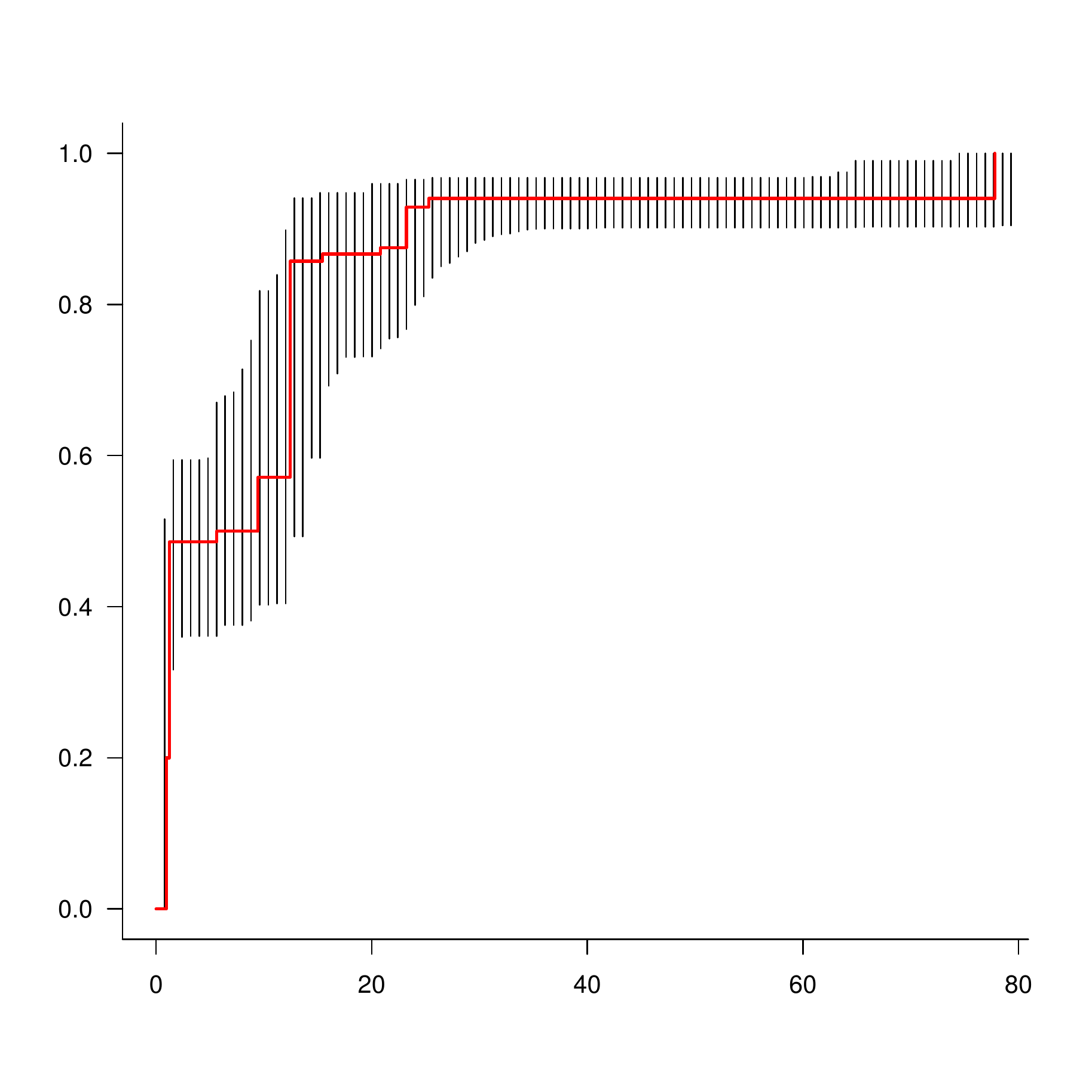}
		\caption{}
	\end{subfigure}	
	\begin{subfigure}[b]{0.3\textwidth}
		\includegraphics[width=\textwidth]{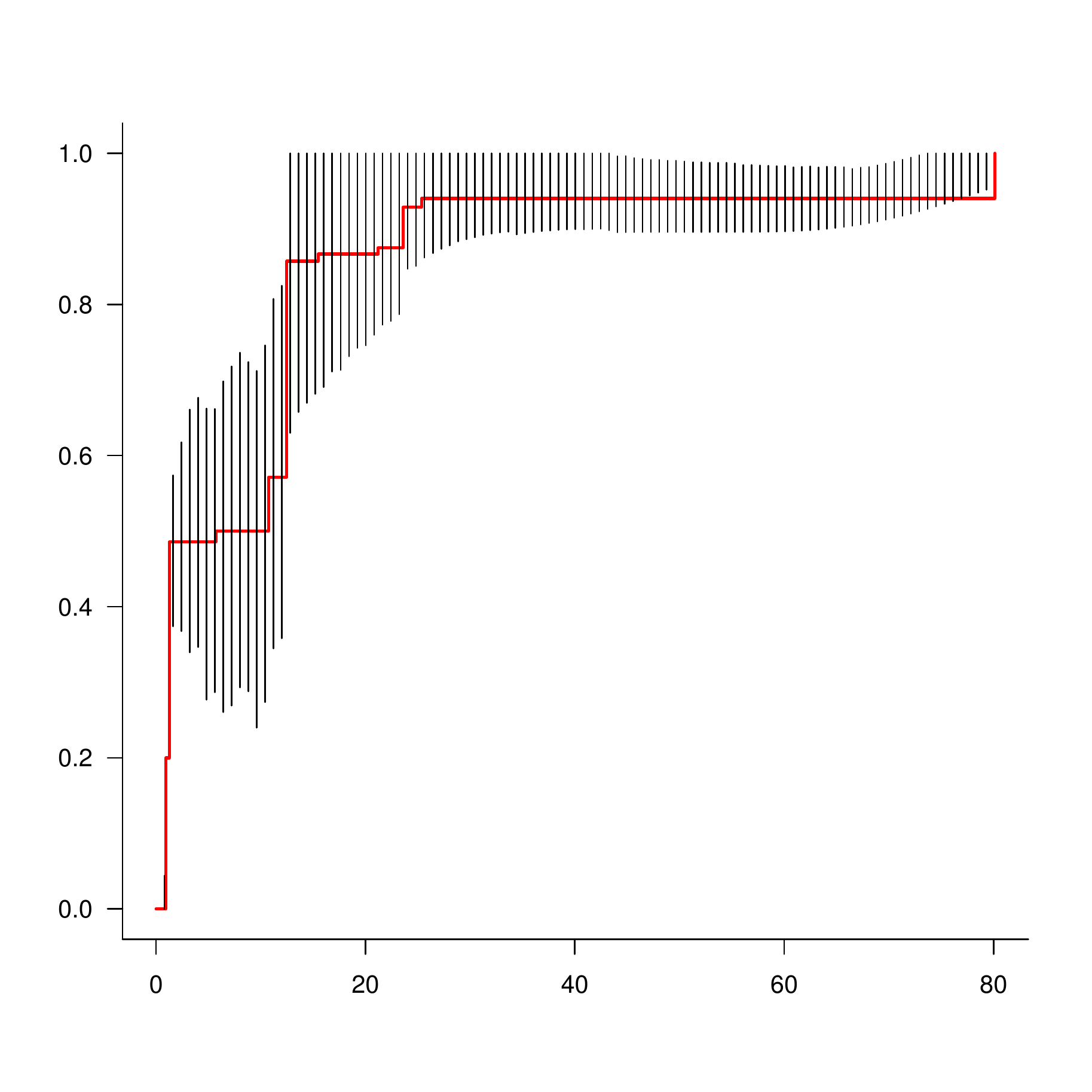}
		\caption{}
	\end{subfigure}	
	\caption{\small Rubella data: (a) SMLE (red, solid) and CI defined in (\ref{CI_type2}) based on  $n=230$ observations using $B=1,000$ bootstrap samples and bandwidth $h = c_{t,opt}n^{-1/4}$ ($c_{t,opt}$ obtained by subsampling with $B=1,000$ bootstrap samples of smaller size $m = 50$). (b) MLE (red, solid) and CI obtained by the method of Banerjee and Wellner \cite{banerjee_wellner:2005}, (c) MLE (red, solid) and CI obtained by the method of Sen and Xu \cite{SenXu2015} with $B=1,000$ `smooth' bootstrap samples from the SMLE with bandwidth $h=80n^{-1/5}$.}
	\label{fig:Rubella} 
\end{figure}

\subsection{The current status linear regression model}
\label{sec:applications_CSLR}
In the current status linear regression model we are interested in the estimation of the regression parameter $\b_0$ based on observations $(T_1,X_1 , \dd_1= 1_{\{Y_1\le T_1\}}),\ldots (T_n,X_n, \dd_n= 1_{\{Y_n\le T_n\}}) $ from $(T,X,\dd)$ where we assume that 
\begin{align*}
Y_i = \b _0'X_i + \varepsilon_i \qquad i=1,2,\ldots
\end{align*}
with i.i.d. random error terms $\e_i$, independent of $(T_i,X_i)$ with unknown distribution function $F_0$. 

In \cite{kim_piet:17:AOS} a simple score estimator $\b_n$ was introduced depending on the MLE $F_{n,\b}$ for fixed $\b$, defined as,
\begin{align} 
F_{n,\b} \stackrel{def}{=} \arg \max_{F\in {\cal F}} \sum_{i=1}^n\left[\dd_i\log F(T_i-\b' X_i)+(1-\dd_i)\log\{1-F(T_i- \b' X_i)\}\right],
\end{align}
where ${\cal F } = \{F: \Re \mapsto [0,1]: \text{ F is a distribution function} \}$. The estimator  $\b_n$ for $\b_0$ is next defined as a zero-crossing (see Definition 4.1 in  \cite{kim_piet:17:AOS}) of
\begin{align}
\label{score0}
\sum_{ F_{n,\b}(T_i-\b'X_i)\in[\ee,1-\ee]} X_i\bigl\{ \dd_i -F_{n,\b}(T_i-\b' X_i)\bigr\},
\end{align}
for some  fixed truncation parameter $\ee \in (0,1/2)$. It is proved in \cite{kim_piet:17:AOS} that  $\sqrt{n}\bigl\{\b_n-\b_0\bigr\}$ is asymptotically normal with mean zero and variance $V^{-1}WV^{-1}$ where
\begin{align*}
&V=E_{\ee}\Bigl[ f_0(T- \b_0'X)\,\left\{ X - E(X| T- \b_0'X) \right\}\left\{ X - E(X| T- \b_0'X) \right\}'\Bigr],\\
&W =E_{\ee}\Bigl[ F_0(T- \b_0'X)\{1- F_0(T- \b_0'X)\}\left\{ X - E(X| T- \b_0'X) \right\}\times
\\
&\qquad\qquad\qquad\qquad\qquad\qquad\qquad\qquad\qquad\qquad\qquad\left\{ X - E(X| T- \b_0'X) \right\}'\Bigr],\nonumber
\end{align*}
where $ E_\ee(w(T,X,\dd)) = \int_{F_0(t-\b_0'x) \in[\ee,1-\ee]} w(t,x,\d) \, dP(t,x,\d)$ is the truncated expectation of $w(T,X,\dd)$ for some deterministic function $w$ and where $P$ denotes the probability measure of $(T,X, \dd)$.

A bootstrap version $\hat \b_n$ based on a bootstrap sample from $\P_n$ is then defined as the zero-crossing of
\begin{align}
\sum_{\hat F_{n,\b}(T_i-\b' X_i)\in[\ee,1-\ee]}M_{ni} X_i\bigl\{\dd_i - \hat F_{n,\b}(T_i-\b' X_i)\bigr\}=0,
\end{align}
where $\hat F_{n,\b}$ is the MLE in the bootstrap sample. A straightforward extension of the results given in Section \ref{sec:bootstrap} shows that, as $n$ tends to infinity,
\begin{align*}
E_{M|Z}\left|n^{-1/3}\left\{ \hat F_{n,\b}(t-\b'x) - F_\b(t-\b'x)\right\}\right|^p , 
\end{align*}
stays bounded in probability for all $(t,x)\in \{(t,x): F_\b(t-\b'x)\in [\ee,1-\ee]\}$ and for all $\b$ in a neighborhood of $\b_0$  where $F_\b$ is defined by
\begin{align}
\label{def_F_beta}
F_{\b}(u)&=P\left\{\dd_i =1\bigm|T_i -\b' X_i=u\right\} = \int F_0(u+(\b-\b_0)'x)f_{X|T-\b'X}(x|u)\,dx.
\end{align} 
The validity of the bootstrap method follows from the fact that, in probability, we have conditionally on the data $(T_1,X_1,\dd_1),\dots,(T_n,X_n,\dd_n)$ that,
\begin{align}
\label{representation_bootstrap_beta_n}
&-\sqrt n V(\hat \b_n  - \b_n ) =  \sqrt n\int_{F_0(t-\b_0'x) \in[\ee,1-\ee]}\{x-E(X|T-\b_0' X= t-\b_0'x)\}\nonumber\\
&\quad\qquad\qquad\qquad\qquad\qquad\qquad\qquad\qquad\cdot\{\d -F_0(t-\b_0' x)\}\,d(\hat \P_n-\P_n)(t,x,\d) \nonumber\\
&\qquad\qquad\qquad\qquad+ o_{P_M}(1+ \sqrt n (\hat\b_n - \b_n)),
\end{align}
where the dominant term in the right-hand side of the display above is normally distributed with mean zero and variance $W$ conditional on $(T_1,X_1,\dd_1),\dots,$ $(T_n,X_n,\dd_n)$.

\begin{remark}
	{\rm The nonparametric bootstrap is also valid for the second estimator of $\b_0$ proposed in \cite{kim_piet:17:AOS} based on a different score function involving the MLE $\hat F_{n,\b}$ and the derivative of the SMLE $\tilde F_{nh,\b}$ (constructed by the procedure described in Section \ref{sec:SMLE}).}
\end{remark}

To provide more insight into the finite sample behavior of the classical bootstrap estimators we show in Tables \ref{table:beta_estimation1} and \ref{table:beta_estimation2} the results of two simulation studies for a one-dimensional regression model $Y = \b_0X + \e$. In the first simulation setting we take $\b_0=0.5$ and consider Uniform(0,2) distributions for the variables $T$ and $X$; for the distribution of the random error $\e$ we take $f_0(e) = 384(e-3/8)(5/8-e)1_{[3/8,5/8]}(e)$. A picture of the density and distribution function of the random error in model 1 is shown in Figure \ref{fig:model1}. The first model is also analyzed in \cite{kim_piet:17:AOS}. In the second simulation model $T,X$ and $\e$ are independently sampled from a standard normal distribution and $\b_0 =1$. A similar model was considered in \cite{Abrevaya:99}. 

\begin{figure}[!ht]
	\begin{subfigure}[b]{0.33\textwidth}
		\centering
		\includegraphics[width=\textwidth]{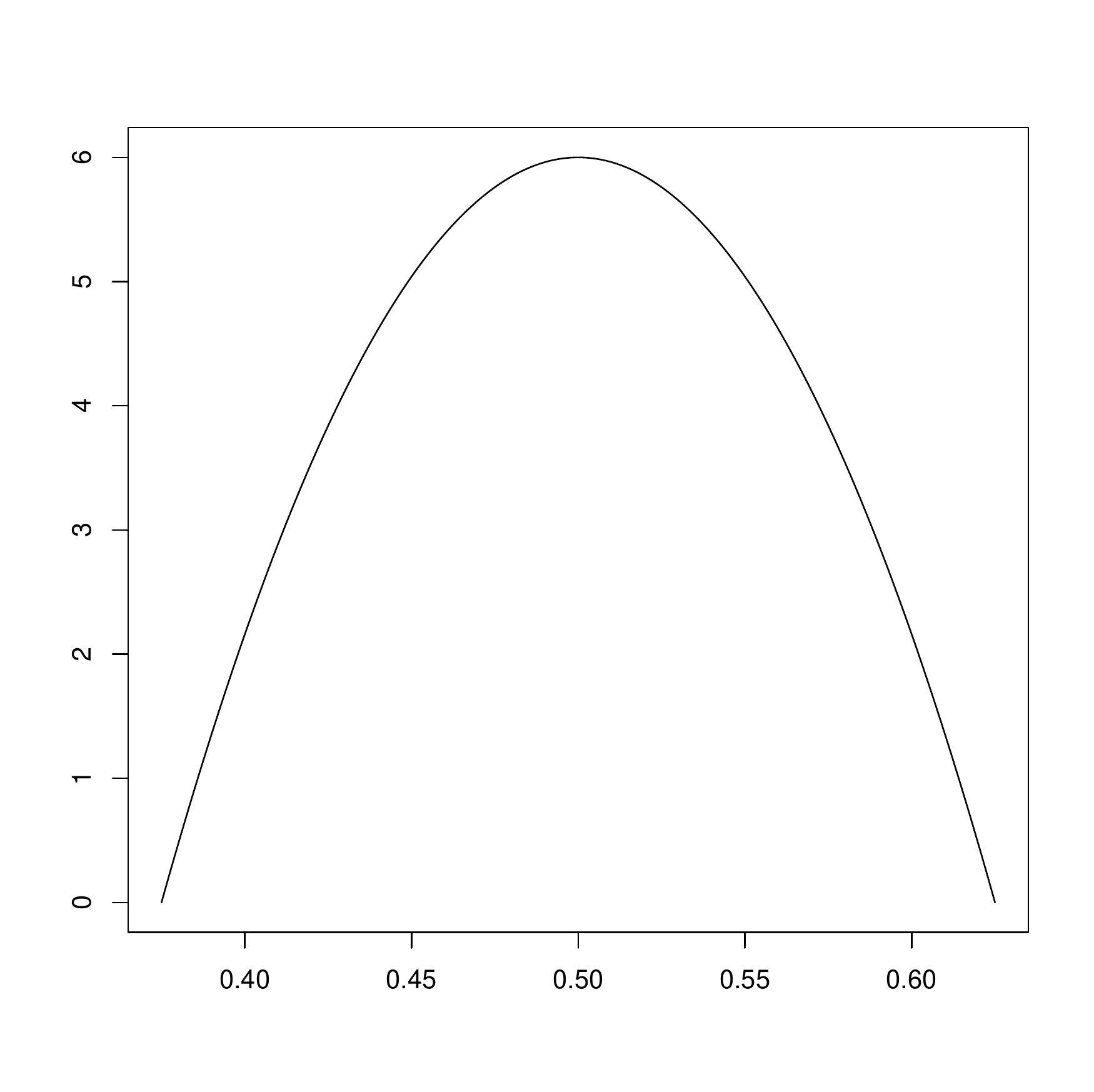}
	\end{subfigure}
	\hspace{0.05cm}
	\begin{subfigure}[b]{0.33\textwidth}
		\includegraphics[width=\textwidth]{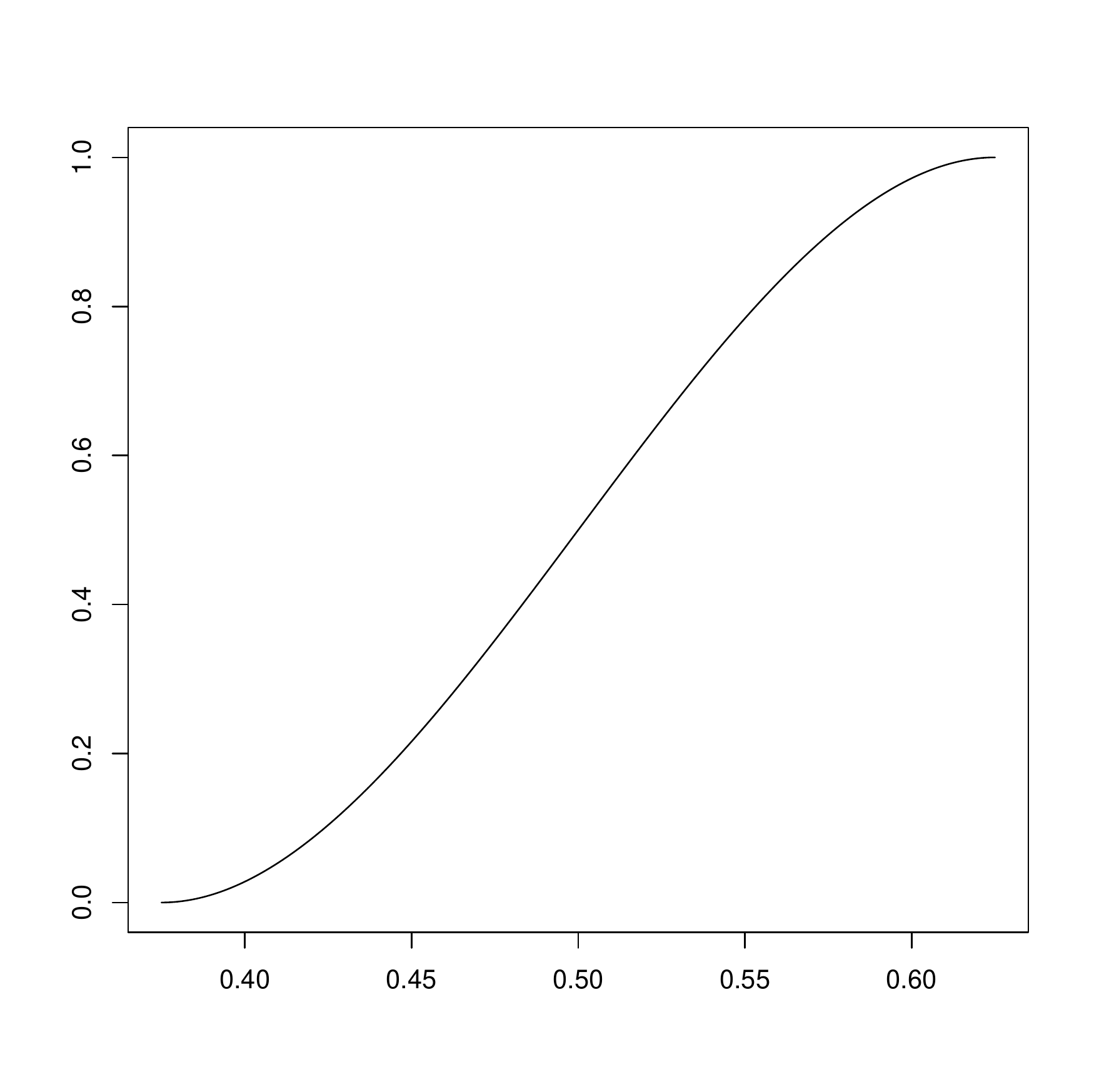}
	\end{subfigure}
	\caption{\small  The density $f_0$ (left panel) and distribution function $F_0$ (right panel) of the random error $\e$ in simulation model 1.  }
	\label{fig:model1}
\end{figure}

With these simulations we want to point out that it is not necessary to use smoothing techniques for doing inferences in the current status linear regression model. We compare the simple score estimator (SSE) described above with Han's maximum rank correlation estimator (\cite{han:87}, MRCE) and with the efficient score estimator (ESE) proposed in \cite{kim_piet:17:AOS}. The asymptotic behavior of the MRCE for the current status model, also obtained without any smoothing techniques, is established in \cite{Abrevaya:99} where the author also proposes consistent kernel-based estimates of the asymptotic variance of the MRCE. We use these variance estimates to construct estimates for $V, W$ and the almost (determined by the truncation parameter $\ee$) efficient variance of the SSE. For more details about the variance estimation we refer to \cite{Abrevaya:99}.

A summary of $N=1,000$ simulation runs from models 1 and 2 for different sample sizes $n$ is given in Tables \ref{table:beta_estimation1} and \ref{table:beta_estimation2}. For each estimator, the mean, $n$ times the variance and $n$ times the mean squared-error (MSE) is given in columns 3-5. The asymptotic variance of the estimators equals 0.193612 for the SSE,  0.158699 for the ESE and 0.192857 for the MRCE in model 1 using truncation parameter $\ee= 0.001$. The corresponding asymptotic variances in model 2 equal 5.046413, 4.994988  and 5.35448 respectively. The asymptotic variance of the SSE without truncation (i.e. $\ee = 0$) equals the asymptotic variance of the MRCE in model 1. The efficient variances are 0.151706 in model 1 and 4.994987 in model 2. Note that the differences between the limiting variances for the different estimation methods are tiny and that the effect of the truncation parameter $\ee$ on the asymptotic behavior of the score estimators is small. Tables \ref{table:beta_estimation1} and \ref{table:beta_estimation2} show that $n$ times the variance tends to converge to the asymptotic variance for all estimators. The ESE performs worse for small sample sizes and the results suggest to use the SSE for point estimation of the regression parameter $\b_0$. 

We constructed Wald-type CIs, similar to the intervals proposed in \cite{Abrevaya:99}, using the asymptotic normal limiting distribution of the estimators and compared the coverage proportion and average length of these intervals with bootstrap CIs based on the nonparametric bootstrap described in this paper using $B=1,000$ samples from the original data. For the MRCE, the validity of the classical bootstrap is proved in \cite{subbotin2007}. The Wald-type CIs remain anti-conservative for the ESE in model 2. 

We observed (result not shown) that, in both models, the bias in estimating the efficient variance of the ESE remains larger than the bias of the asymptotic variance estimates for the SSE and the MRCE. Tables \ref{table:beta_estimation1} and \ref{table:beta_estimation2} show that the coverage proportion of the classical bootstrap CIs converges to the nominal $95\%-$level and the average length of the CIs obtained by resampling from the original data is smaller than the corresponding length of the Wald-type CIs. We also investigated the behavior of Studentized bootstrap CIs (results not shown) based on the variance estimate used in the construction of the Wald-type CIs, but no improvement was observed for the behavior of the bootstrap intervals. 

Our results do not indicate better performances corresponding to smoothing techniques and therefore suggest that smoothing should not be the primary concern in inferences for the current status linear regression model. Note that the Wald-type CIs are constructed using smoothing kernel estimation for the variance estimate and that the only results obtained without any smoothing are the bootstrap CIs for the SSE and the MRCE. It is noteworthy that the SSE tends to perform better than the MRCE, which is not based on a nuisance parameter that is not estimable at $\sqrt n-$rate. Based on these results, we recommend the use of the SSE in combination with the nonparametric bootstrap procedure for doing inference in the current status linear regression model.

\begin{table}[!ht]
	\centering
	\caption{\rm Simulation model 1: mean, $n$ times the variance and $n$ times MSE. CP: coverage  proportion of 95\% confidence intervals (Wald-type intervals based on a kernel variance estimate and classical bootstrap intervals) that contain the true parameter value $\b_0 = 0.5$, AL: Average length of the CI, for different samples sizes $n$ based on $N=1,000$ simulation runs and $B=1,000$ bootstrap samples. $\ee = 0.001$. SSE = simple score estimator, MRCE = maximum rank correlation estimator and ESE = efficient score estimator. }
	\label{table:beta_estimation1}
	\vspace{0.5cm}
	\scalebox{0.93}{
		\begin{tabular}{|ll|ccc||cc|cc|}
			\hline
			Estimate & $n$ & mean & $n\times$var & $n\times$MSE & \multicolumn{2}{|c|}{Wald-type CI} & \multicolumn{2}{|c|}{Bootstrap CI} \\
			& && &&CP & AL & CP & AL \\
			\hline
			SSE& 100&  0.498943 & 0.310723 & 0.310968 &  0.978 &0.265883 & 0.824 & 0.204163 \\
			& 500&0.499717 &0.220885 & 0.220925 & 0.982 & 0.097457 &  0.897 & 0.080317 \\
			& 1000 & 0.500720 &  0.217415 & 0.217933 & 0.977 & 0.065837 & 0.924 & 0.055648\\
			&5000 & 0.499993 & 0.195111 & 0.195112 & 0.977 & 0.027159 & 0.945 & 0.024423\\
			&&&&&&&&\\
			MRCE & 100&  0.497996 & 0.308180& 0.308582  & 0.979 & 0.268731 & 0.821 & 0.205522\\
			& 500 & 0.499761 &0.251232 & 0.251260 & 0.978 & 0.098028 & 0.862 & 0.089143 \\
			&1000 & 0.500553 &  0.246388 & 0.246693	 & 0.973 & 0.063990&  0.911 & 0.053129\\
			& 5000 & 0.499876 & 0.208386 & 0.208462 & 0.965 & 0.027197 & 0.922 & 0.026987 \\			
			&&&&&&&&\\
			ESE & 100& 0.500145 & 0.337755 & 0.337757 & 0.964 & 0.252687 & 0.824 & 0.223849\\
			& 500& 0.499671 & 0.217428 &0.217482 & 0.978 & 0.094390 &0.896 & 0.080003 \\
			& 1000 & 0.500742 & 0.207401 &0.207953& 0.973& 0.063990 & 0.911& 0.053129 \\
			& 5000 & 0.500228 & 0.185614 &0.185874& 0.972 & 0.026396& 0.904& 0.022285 \\			
			\hline
		\end{tabular}}
	\end{table}

	\begin{table}[!ht]
		\centering
		\caption{\rm Simulation model 2: mean, $n$ times the variance and $n$ times MSE. CP: coverage  proportion of 95\% confidence intervals (Wald-type intervals based on a kernel variance estimate and classical bootstrap intervals) that contain the true parameter value $\b_0 = 1$, AL: Average length of the CI, for different samples sizes $n$ based on $N=1,000$ simulation runs and $B=1,000$ bootstrap samples. $\ee = 0.001$. SSE = simple score estimator, MRCE = maximum rank correlation estimator and ESE = efficient score estimator.}
		\label{table:beta_estimation2}
		\vspace{0.5cm}
		\scalebox{0.93}{
			\begin{tabular}{|ll|ccc||cc|cc|}
				\hline
				Estimate & $n$ & mean & $n\times$var & $n\times$MSE & \multicolumn{2}{|c|}{Wald-type CI} & \multicolumn{2}{|c|}{Bootstrap CI} \\
				& && &&CP & AL & CP & AL \\
				\hline
				SSE & 100& 0.935732  & 4.525330 & 4.938096  &0.922&1.000283 &0.855 & 0.79952\\
				& 500& 0.966217  & 4.676249 & 5.246881  &0.926 & 0.399728 & 0.902&0.364210 \\
				& 1000&  0.977799 & 5.032432  &  5.525339 &0.933 & 0.279928 &0.914& 0.262449\\
				& 5000 & 0.989466 & 4.580756 & 5.135616 & 0.945&0.124375&0.948&  0.121388 \\
				&&&&&&&&\\
				MRCE & 100& 1.038510  & 8.500588 & 8.648890   &0.925 &1.125225 &0.889& 1.364034\\
				& 500&  1.006050 &  6.443404& 6.461690   &0.932 & 0.429007 &0.912&  0.473787 \\
				& 1000&  1.002680 &6.294143  & 6.301326  & 0.939& 0.296537  &0.903&  0.320908\\
				& 5000 & 0.998502 &5.160694 & 5.171915 &0.962 &0.129512 &0.954&  0.136487\\
				&&&&&&&&\\
				ESE & 100& 0.974199 & 5.722576 &5.789144 &0.768 &0.604649 &0.827& 0.910229\\
				&500& 0.998806&5.984291 &5.985003 &0.823& 0.290297&0.902& 0.430819\\
				&1000&1.005545 &6.032743 & 6.063495 &0.841 &0.214280  &0.928& 0.302124\\
				& 5000 & 1.002462 &  5.244373  &  5.274692 & 0.892&0.104281&0.951& 0.131427\\
				\hline  		
			\end{tabular}}
		\end{table}

\section{Discussion}
\label{sec:discussion}
In this paper we studied the behavior of the nonparametric bootstrap in current status models. Asymptotic results show that, given the data, the $L_2-$distance between the bootstrap MLE $\hat F_n$ and the underlying distribution function $F_0$ is of order $n^{-1/3}$. This result is noteworthy given the fact that the nonparametric bootstrap is inconsistent for generating the distribution of the MLE. Despite this negative result, we show that it is still possible to use the MLE while doing inferences for certain functionals in the current status model. We illustrated the effectiveness of this result by constructing pointwise confidence intervals around the SMLE and proved the validity of interval estimation in the current status linear regression model. 

The result is applicable to several other nonparametric estimators depending on a cube-root $n$ convergence class. Because of its connection with the MLE, applications of the nonparametric bootstrap involving the Grenander estimator, such as the smoothed Grenander estimator used in \cite{cecile_rik:12} or the goodness-of-fit tests described in \cite{durot2010}, are worthy of study in further research.

Extensions to semiparametric models, where one considers bootstrapping a finite dimensional parameter, are also possible  such as the score estimator for the semiparametric monotone single index model proposed by \cite{fadoua_kim_piet:2017}, which is similar to the current status linear regression estimator discussed in Section \ref{sec:applications_CSLR}. A general bootstrap consistency result for semiparametric M-estimators is derived in \cite{cheng2010}. However, if computations are in first instance based on nonparametric maximum likelihood estimators or least squares estimators of the infinite dimensional parameter, fixing temporarily the finite-dimensional parameter, the use of local smooth functional theory is needed, where the remainder terms involving the cube-root-$n$ M-estimator of the nuisance parameter are shown to be negligible by an application of a result of the type (\ref{bootstrap_result_main}). The treatment of the remainder terms in this local smooth functional theory is a highly non-trivial matter. On the other hand, in \cite{cheng2010}, this negligibility is assumed to hold by their condition  SB3.
	
Furthermore, the results in \cite{cheng2010} hold for a class of exchangeable bootstrap weights of which the multinomial weights considered in this paper are a special case. Although we did not investigate this in the present paper, extensions of our nonparametric bootstrap results to the more general bootstrap resampling schemes seem possible as well.

Another interesting extension of this research is the construction of confidence bands for the distribution instead of the currently proposed pointwise confidence intervals. Note that our main result (\ref{bootstrap_result_main1}) does not imply:
\begin{align}
\label{bootstrap_result_main2}
E\left\{\sup_{t\in[0,R]}\left. n^{1/3} \bigl|\hat F_n(t) - F_0(t)\bigr| \right|Z_1,\dots,Z_n\right\}= O_p(1).
\end{align}
A bound on $\sup_{t\in[0,R]}\left. n^{1/3} \bigl|\hat F_n(t) - F_0(t)\bigr| \right|$ which no doubt would contain logarithmic factors, would be needed for confidence bands instead of our pointwise confidence intervals. The idea is that the process $t\mapsto n^{1/3} \bigl\{\hat F_n(t) - F_0(t)\bigr\}$ will fall apart into asymptotically independent pieces, and that we therefore expect Gumbel-type distributions to enter, via the maximum of independent random variables. The theory for this still has to be developed, however. What struck us in the present simulation studies is how comparatively well the global behavior of our pointwise confidence intervals still was, indicating that the extra logarithmic factors do not have such a very large impact.

Probably results similar to those presented in the current paper will follow for the more challenging interval censoring, type II models where the development of the local limit theory for the MLE has not yet been settled. It is reasonable to believe that the nonparametric bootstrap also allows for inferences with the maximum smoothed likelihood estimator studied in \cite{piet:14}.

\section{Appendix}
\label{sec:appendix}
\subsection{Proof of Lemma \ref{lemma_Th11.3}}
Before proving Lemma \ref{lemma_Th11.3} we provide two technical lemmas.
\begin{lemma}
\label{lemma:hat_U_n-deviation_bootstrap}
Let $\a>0$. There exist constants $K_1,K_2>0$ such that, for each $j\ge1,\,j\in\mathbb N$,
\begin{align}
\label{unconditional_upper_bound1}
&P_{M|Z}\Biggl\{\exists y\in\bigl[(j-1)n^{-1/3},jn^{-1/3}\bigr):\nonumber\\
&\qquad\qquad\left|\int_{u\in(U(a),U(a)+y]}\{\d-F_0(u)\}\,d\bigl(\hat{\P}_n-\P_n\bigr)(u,\d)\right| \ge\a(j-1)^2n^{-2/3}\Biggr\}\nonumber\\
&\le K_1\exp\left\{-K_2(j-1)^{3/2}\right\},
\end{align}
in probability.
	
Likewise, there exist constants $K_1,K_2>0$ such that, for each $j\ge1,\,j\in\mathbb N$,
\begin{align}
\label{unconditional_upper_bound2}
&P_{M|Z}\Biggl\{\exists y\in\bigl[-jn^{-1/3},-(j-1) n^{-1/3}\bigr):\nonumber\\
&\qquad\qquad\left|\int_{u\in(U(a)+y,U(a)]}\{\d-F_0(u)\}\,d\bigl(\hat{\P}_n-\P_n\bigr)(u,\d)\right|
\ge\a(j-1)^2n^{-2/3}\Biggr\}\nonumber\\
&\le K_1\exp\left\{-K_2(j-1)^{3/2}\right\},
\end{align}
in probability.
\end{lemma}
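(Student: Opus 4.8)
The plan is to identify the integral in (\ref{unconditional_upper_bound1}) with the bootstrap empirical process indexed by a \emph{localised} Vapnik--Chervonenkis class, and then to apply, conditionally on $Z_1,\dots,Z_n$, an exponential maximal inequality of Bernstein type. Since $\hat\P_n-\P_n=n^{-1}\sum_{i=1}^n(M_{ni}-1)\mathbf 1_{Z_i}$, for every $y>0$,
\begin{align*}
\int_{u\in(U(a),U(a)+y]}\{\d-F_0(u)\}\,d\bigl(\hat\P_n-\P_n\bigr)(u,\d)&=\bigl(\hat\P_n-\P_n\bigr)(f_y),\\
f_y(t,\d)&:=\{\d-F_0(t)\}\,\mathbf 1_{(U(a),\,U(a)+y]}(t).
\end{align*}
Fixing $j\in\mathbb N$ and setting ${\cal F}_j:=\{f_y:\ (j-1)n^{-1/3}\le y<jn^{-1/3}\}$, one has $\{\exists y\in[(j-1)n^{-1/3},jn^{-1/3}):|(\hat\P_n-\P_n)(f_y)|\ge x\}\subseteq\{\sup_{f\in{\cal F}_j}|(\hat\P_n-\P_n)(f)|\ge x\}$, so it suffices to bound the supremum. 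The class ${\cal F}_j$ is uniformly bounded and VC (its elements differ only through the right endpoint of an interval), its uniform entropy integral is bounded by an absolute constant, and $\sup_{f\in{\cal F}_j}\P_nf^2\le\hat p_j:=\G_n\bigl((U(a),U(a)+jn^{-1/3}]\bigr)$.

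The key ingredient is a conditional exponential bound for this bootstrap empirical process. For a fixed $f$ with $\|f\|_\infty\le1$, negative association of the multinomial weights (or Poissonisation) together with the binomial moment generating function gives the Bernstein inequality $P_{M|Z}(|(\hat\P_n-\P_n)(f)|\ge x)\le2\exp\{-cnx^2/(\P_nf^2+x)\}$; a chaining argument over the localised VC class — the conditional analogue of the tail bounds for $\|\sqrt n(\hat\P_n-\P_n)\|_{\cal F}$ used in Section 11.2 of \cite{piet_geurt:14} and in \cite{kim_piet:17:SJS} — upgrades this to
\begin{align*}
P_{M|Z}\Bigl(\sup_{f\in{\cal F}_j}\bigl|(\hat\P_n-\P_n)(f)\bigr|\ge x\Bigr)\le K\exp\Bigl(-\frac{c\,nx^2}{\hat p_j+x}\Bigr),\qquad x\ge K'\bigl(\sqrt{\hat p_j/n}+n^{-1}\bigr),
\end{align*}
with $K,c,K'$ absolute; the point is that passing to the supremum costs only a multiplicative constant (not a power of $n$), because increments $f_y-f_{y'}$ are again $\{\d-F_0\}$ times indicators of short intervals. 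To control $\hat p_j$ uniformly in $j$, note $\E\,\hat p_j=G(U(a)+jn^{-1/3})-G(U(a))\le\|g\|_\infty jn^{-1/3}$; a Bernstein bound for the \emph{unweighted} empirical process and a union bound (the values of $j$ with $jn^{-1/3}\ge R$ being handled by $\hat p_j\le1$) then give an event $\Omega_n$ with $P(\Omega_n)\to1$ on which $\hat p_j\le Cjn^{-1/3}$ for all $j$ --- this is the event to which the phrase ``in probability'' refers.

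On $\Omega_n$, substitute $x=\a(j-1)^2n^{-2/3}$; this satisfies the admissibility bound as soon as $j\ge J_0$ for a fixed $J_0$ (since then $\a(j-1)^2\gtrsim\sqrt j$ and $n^{-1/3}\to0$). Using $\hat p_j+x\le2\max(\hat p_j,x)$ and $\hat p_j\le Cjn^{-1/3}$,
\begin{align*}
\frac{nx^2}{\hat p_j+x}\ \ge\ \tfrac12\min\Bigl(\frac{nx^2}{\hat p_j},\,nx\Bigr)\ \ge\ \tfrac12\min\Bigl(\frac{\a^2(j-1)^4}{Cj},\ \a(j-1)^2n^{1/3}\Bigr)\ \ge\ K_2''(j-1)^{3/2}\qquad(j\ge2),
\end{align*}
because $(j-1)^4/j\ge\tfrac12(j-1)^3\ge\tfrac12(j-1)^{3/2}$ and $(j-1)^2n^{1/3}\ge(j-1)^{3/2}$ for $j\ge2$. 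Hence, for $j\ge J_0$, $P_{M|Z}(\sup_{f\in{\cal F}_j}|(\hat\P_n-\P_n)(f)|\ge x)\le K\exp\{-cK_2''(j-1)^{3/2}\}$; the finitely many $j<J_0$ are absorbed by enlarging $K_1$ (a probability is at most $1$), and $j=1$ is trivial since the right-hand side of (\ref{unconditional_upper_bound1}) is then $K_1$. This proves (\ref{unconditional_upper_bound1}), and (\ref{unconditional_upper_bound2}) follows identically with the left intervals $(U(a)+y,U(a)]$. The main obstacle is the conditional maximal inequality displayed above: one must combine the chaining over the localised VC class with a Bernstein tail of the correct shape $x^2/(\hat p_j+x)$ while keeping the entropy-integral contribution down to a constant, and in doing so deal with the heavy right tail of the multinomial weights --- via negative association, Poissonisation, or a careful truncation --- so that the bound does not degrade for the moderately large $j$ (of order up to $n^{1/3}$) that enter when these estimates are summed in the proof of Lemma \ref{lemma_Th11.3}.
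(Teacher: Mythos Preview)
Your approach is correct but takes a genuinely different route from the paper's. You aim for a Bernstein/Talagrand-type conditional maximal inequality
\[
P_{M|Z}\Bigl(\sup_{f\in{\cal F}_j}\bigl|(\hat\P_n-\P_n)(f)\bigr|\ge x\Bigr)\le K\exp\Bigl(-\tfrac{cnx^2}{\hat p_j+x}\Bigr),
\]
and then do the arithmetic with $x=\a(j-1)^2n^{-2/3}$ to extract the exponent $(j-1)^{3/2}$. The paper instead uses the $\psi_1$ Orlicz-norm machinery: conditionally on $Z$, $\hat\P_n$ is the empirical measure of $n$ i.i.d.\ draws from $\P_n$, so Theorem 2.14.5 of \cite{vdvwe:96} (with $p=1$) gives $\bigl\|\,\|\sqrt n(\hat\P_n-\P_n)\|_{{\cal F}_t}^*\,\bigr\|_{\P_n,\psi_1}\lesssim\|F_t\|_{\P_n,2}+n^{-1/2}(1+\log n)\|F_t\|_{\P_n,\psi_1}$, and Theorem 2.14.1 bounds the first term by $J(1,{\cal F}_t)\|F_t\|_{\P_n,2}$; with $\|F_t\|_{\P_n,2}^2=\G_n\bigl((U(a),U(a)+t]\bigr)\asymp jn^{-1/3}$ in probability, the tail $P(|X|>x)\le2\exp(-x/\|X\|_{\psi_1})$ at $x=\a(j-1)^2n^{-1/6}$ yields an exponent $\a(j-1)^2/j^{1/2}\gtrsim(j-1)^{3/2}$ directly. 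The trade-off: the paper's argument reduces entirely to cited theorems and needs no new maximal inequality, whereas your Bernstein route is in principle sharper (the sub-Gaussian branch gives $(j-1)^3$, not $(j-1)^{3/2}$) but, as you yourself flag as ``the main obstacle,'' leaves the conditional maximal inequality for the multinomial bootstrap only sketched. Your treatment of the small-$j$ absorption into $K_1$ and of the event $\Omega_n=\{\hat p_j\le Cjn^{-1/3}\ \forall j\}$ matches what the paper does implicitly.
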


\begin{proof}
We only prove (\ref{unconditional_upper_bound1}), since the proof of (\ref{unconditional_upper_bound2}) is similar.
Let  ${\cal F}_t$ be the (Vapnik-Cervonenkis) class of functions
\begin{align*}
{\cal F}_t=\left\{(\d-F_0(v))1_{(U(a),U(a)+u]}(v):u\in[0,t],\,\d\in\{0,1\}\right\},
\end{align*}
with envelope
\begin{align*}
F_t(v,\d)=1_{(U(a),U(a)+t]}(v),\qquad v\in[0,t].
\end{align*}
To prove (\ref{unconditional_upper_bound1}), we use that an exponential tail bound can be derived from a bounded Orlicz norm $\|\cdot\|_{P,\psi}$, i.e., 
when taking $\psi_1(x)=\exp(x)-1$, for $x\ge0$, we get, for $x>0$ the inequality
\begin{align}
\label{Orlicz-ineq}
P(|X|>x) \le 2\exp\left\{-x/\|X\|_{P,\psi_1}\right\},
\end{align}
where
\begin{align*}
\|X\|_{P,\psi_1}=\inf\left\{C>0:E\psi_1\left(\frac{|X|}{C}\right)\le1\right\}.
\end{align*}
Using the second statement of Theorem 2.14.5 in \cite{vdvwe:96}, with $p=1$, we get, the following inequality:
\begin{align}
\label{Orlicz_ineq1}
&\left\|\left\|\sqrt{n}\left(\hat{\P}_n-\P_n\right)\right\|_{{\cal F}_t}^*\right\|_{\P_n,\psi_1}\nonumber\\
&\qquad\lesssim \left\|\left\|\sqrt{n}\left(\hat{\P}_n-\P_n\right)\right\|_{{\cal F}_t}^*\right\|_{\P_n,1}
+n^{-1/2}\{1+\log n\}\|F_t\|_{\P_n,\psi_1}\,,
\end{align}
where $\|\cdot\|^*_{{\cal F}_t}$ denotes the so-called measurable majorant of $\|\cdot\|_{{\cal F}_t}$ (see \cite{vdvwe:96}). (Note that we use temporarily the "*"  notation which is used for bootstrap variables in the rest of the paper.)

Furthermore, we have by the rightmost inequality of Theorem 2.14.1 of \cite{vdvwe:96} that
\begin{align*}
	\left\|\left\|\sqrt{n}\left(\hat{\P}_n-\P_n\right)\right\|_{{\cal F}_t}^*\right\|_{\P_n,1}\lesssim J\left(1,{\cal F}_t\right)\left\|F_t\right\|_{\P_n,2}\,,
\end{align*}
where $J(\d,{\cal F}_t)$ is defined by
\begin{align*}
	J(\d,{\cal F}_t)=\sup_Q\int_0^{\d}\sqrt{1+\log N\left(\e\|F\|_{Q,2},{\cal F}_t,L_2(Q)\right)}\,d\e,
\end{align*}
and where the supremum is over all discrete probability measure $Q$ with $\|F_t\|_{Q,2}>0$. Since ${\cal F}_t\subset {\cal F}_{R-U(a)}$ for all $t\in[0,R-U(a)]$, and since ${\cal F}_{R-U(a)}$ is a Vapnik-Cervonenkis class,
$J(\d,{\cal F}_t)$ is bounded by a fixed constant
for all $t\in[0,R-U(a)]$, and we get:
\begin{align*}
\left\|\left\|\sqrt{n}\left(\hat{\P}_n-\P_n\right)\right\|_{{\cal F}_t}^*\right\|_{\P_n,1}\lesssim \left\|F_t\right\|_{\P_n,2}\,,
\end{align*}
uniformly for all $t\in[0,R-U(a)]$. Note that
\begin{align}
\label{envelope_upbound}
	\left\|F_t\right\|_{\P_n,2}^2=\int_{u\in U(a),U(a)+t]}\,d\P_n(u,\d)=\int_{u\in U(a),U(a)+t]}\,d\G_n(u),
\end{align}
$t\in[U(a),R-U(a)].$
We next evaluate the second term on the right-hand side of (\ref{Orlicz_ineq1}). We have:
	\begin{align*}
	\int\psi_1\left(\frac{F_t(u,\d)}c\right)\,d\P_n(u,\d)=\left\{e^{1/c}-1\right\}\int 1_{(U(a),U(a)+t]}(u)\,d\G_n(u),
	\end{align*}
and
	\begin{align*}
	&\left\{e^{1/c}-1\right\}\int 1_{(U(a),U(a)+t]}(u)\,d\G_n(u)\le1\\
	&\qquad\iff c\ge \frac1{\log\left\{1+1/\int_{u\in U(a),U(a)+t]}\,d\G_n(u)\right\}}\,.
	\end{align*}
Thus (\ref{Orlicz_ineq1}) becomes, using (\ref{envelope_upbound}),
\begin{align}
	\label{Orlicz_ineq2}
	&\left\|\left\|\sqrt{n}\left(\hat{\P}_n-\P_n\right)\right\|_{{\cal F}_t}^*\right\|_{\P_n,\psi_1}\nonumber\\
	&\le c_1\left\{\int_{u\in U(a),U(a)+t]}\,d\G_n(u)\right\}^{1/2}+\frac{1+\log n}{n^{1/2}\log\left\{1+1/\int_{u\in U(a),U(a)+t]}\,d\G_n(u)\right\}}\,,
\end{align}
	for a constant $c_1>0$. If $t\ge Kn^{-1/3}$ we get for the second term in probability,
	\begin{align*}
	\frac{1+\log n}{n^{1/2}\log\left\{1+1/\int_{u\in U(a),U(a)+t]}\,d\G_n(u)\right\}}\ll c_1\left\{\int_{u\in U(a),U(a)+t]}\,d\G_n(u)\right\}^{1/2}.
	\end{align*}
We have:
\begin{align*}
&\int_{u\in [U(a),U(a)+t]}\,d\G_n(u)\\
&=\int_{u\in [U(a),U(a)+t]}\,dG(u)+\int_{u\in [U(a),U(a)+t]}\,d\bigl(\G_n-G\bigr)(u)\\
& =\int_{u\in [U(a),U(a)+t]}\,dG(u)+O_p\left(n^{-1/2}\right)= O(t) +O_p\left(n^{-1/2}\right) \\
&=  O(t) +O_{P_M}\left(n^{-1/2}\right),
\end{align*}
in probability (since a term defined only on the probability space $({\cal X}, {\cal A}, P)$ of order $O_p(1)$ is also of order $O_{P_M}(1)$ in probability).
So we obtain, for $j\ge K$ in probability, conditioning on $(T_1,\dd_1),(T_2,\dd_2),\ldots$ using the inequality on Orlicz norms on p.\ 96 or 239 of \cite{vdvwe:96}:
\begin{align*}
&P_{M|Z}\Biggl\{\exists y\in\bigl[(j-1)n^{-1/3},jn^{-1/3}\bigr):\nonumber\\
&\qquad\qquad\qquad\left|\int_{u\in(U(a),U(a)+y]}\{\d-F_0(u)\}\,d\bigl(\hat{\P}_n-\P_n\bigr)(u,\d)\right| \ge\a(j-1)^2n^{-2/3}\Biggr\}\\
&=P_{M|Z}\Biggl\{\exists y\in\bigl[(j-1)n^{-1/3},jn^{-1/3}\bigr):\\
&\qquad\qquad\qquad \sqrt n\left|\int_{u\in(U(a),U(a)+y]}\{\d-F_0(u)\}\,d\bigl(\hat{\P}_n-\P_n\bigr)(u,\d)\right|\ge\a(j-1)^2n^{-1/6}\Biggr\}\\
&\le2\exp\left\{-m(j-1)^2n^{-1/6}/\left\|\left\|\sqrt{n}\left(\hat\P_n-\P_n\right)\right\|_{{\cal F}_{jn^{-1/3}}}^*\right\|_{\P_n,\psi_1}\right\}\\
&\le 2\exp\left\{-c_2m(j-1)^{3/2}\right\},
\end{align*}
for some $c_2>0$. This proves the statement.
\end{proof}

\begin{lemma}
	\label{lemma:bound_original_integral}
	For each $\e>0$ and $x\in[0,R-U(a)]$,
	\begin{align*}
	\left|\int_{u\in(U(a),U(a)+x]}\{\d-F_0(u)\}\,d\bigl(\P_n-P\bigr)(u,\d)\right|\le \e x^2+O_p\left(n^{-2/3}\right).
	\end{align*}
\end{lemma}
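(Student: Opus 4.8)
The plan is to split the empirical integral into a part where $F_0$ is replaced by a fixed constant (so that the integrand on each side of $U(a)$ becomes essentially a bounded indicator) and an error controlled by the modulus of continuity of $F_0$ near $U(a)$, and then bound each piece with a maximal inequality over the Vapnik--Cervonenkis class $\mathcal F_x$ of functions $(\d-F_0(v))1_{(U(a),U(a)+u]}(v)$, $u\in[0,x]$, exactly as in the proof of Lemma \ref{lemma:hat_U_n-deviation_bootstrap} but now for the ordinary empirical process $\sqrt n(\P_n-P)$ instead of the bootstrap process. Write $a=F_0(t)$ so $U(a)=t$ and $F_0(u)-F_0(U(a)) = O(u-U(a))$ by the boundedness of $f_0$; thus on $(U(a),U(a)+x]$ we have $\d - F_0(u) = \{\d - F_0(U(a))\} + O(x)$, and the $O(x)$ part contributes at most $O(x)\cdot\|(\G_n-G)\|$-type fluctuations plus an $O(x)\cdot G$-length term, both of which, after being multiplied by the measure of the interval which is itself $O(x)$, are absorbed into $\e x^2 + O_p(n^{-2/3})$ once $x$ is large enough, and are trivially $O_p(n^{-2/3})$ when $x$ is of order $n^{-1/3}$ or smaller.

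For the main term, $\int \{\d - F_0(U(a))\}1_{(U(a),U(a)+x]}(u)\,d(\P_n-P)$, I would invoke the maximal inequality of Theorem 2.14.1 of \cite{vdvwe:96} together with the Orlicz-norm bound (\ref{Orlicz-ineq}) applied to $\sup_{u\le x}\bigl|\sqrt n(\P_n-P)\bigr|$ over $\mathcal F_x$: since $\mathcal F_x\subset\mathcal F_R$ is VC with a fixed entropy integral, and the $L_2(P)$-norm of the envelope $1_{(U(a),U(a)+x]}$ is $\{G(U(a)+x)-G(U(a))\}^{1/2}=O(x^{1/2})$, one obtains
\begin{align*}
\Bigl\|\,\bigl\|\sqrt n(\P_n-P)\bigr\|_{\mathcal F_x}^*\Bigr\|_{P,\psi_1}\lesssim x^{1/2}+n^{-1/2}(1+\log n),
\end{align*}
and hence, for any fixed $\e>0$,
\begin{align*}
P\Bigl\{\sup_{u\le x}\bigl|(\P_n-P)(\{\d-F_0(U(a))\}1_{(U(a),U(a)+u]})\bigr|\ge \e x^2\Bigr\}\le 2\exp\bigl\{-c\,\e\, n^{1/2}x^{3/2}\bigr\}
\end{align*}
for large $n$. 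Summing this bound over a dyadic grid of values $x=2^k n^{-1/3}$, $k\ge 0$, the series converges (the exponents grow like $2^{3k/2}$), which yields that with probability tending to one the supremum over all $x$ of $\bigl|(\P_n-P)(\cdots)\bigr| - \e x^2$ is $O(n^{-2/3})$; for $x\lesssim n^{-1/3}$ the quantity is directly $O_p(n^{-2/3})$ by the envelope bound alone. Combining with the $F_0$-continuity remainder gives the claimed inequality.

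The main obstacle is bookkeeping the two regimes ($x$ small, of order $n^{-1/3}$, versus $x$ large) so that the single clean bound $\e x^2 + O_p(n^{-2/3})$ holds uniformly in $x\in[0,R-U(a)]$, and making sure the constant in $O_p(n^{-2/3})$ does not secretly depend on $\e$ in a way that breaks the statement; this is handled by noting that the $O_p(n^{-2/3})$ term can be taken uniform in $\e$ once we fix the dyadic chaining, and that shrinking $\e$ only forces us to start the chaining at a slightly larger index, which merely changes the $O_p(n^{-2/3})$ constant but keeps it finite. This lemma, together with the bootstrap deviation bound of Lemma \ref{lemma:hat_U_n-deviation_bootstrap}, is exactly what is needed to localize $\hat U_n(a)$ and prove Lemma \ref{lemma_Th11.3}.
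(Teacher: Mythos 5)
Your proposal is correct, but it takes a heavier route than the paper. The paper's proof applies the $L_2$-maximal inequality of Theorem~2.14.1 in van der Vaart and Wellner to the same VC class $\mathcal F_t$, obtaining $E\{\sup_{f\in\mathcal F_t}|(\P_n-P)(f)|\}^2 \le Kn^{-1}\|F_t\|_{P,2}^2 = O(t/n)$, and then uses Markov's inequality to get a \emph{polynomial} tail bound $P\{n^{2/3}|\cdots|>A+\e(j-1)^2\}\le Kj/\{A+\e(j-1)^2\}^2$ over the blocks $x\in[(j-1)n^{-1/3},jn^{-1/3}]$, after which the standard peeling argument on p.~201 of Kim and Pollard closes the proof: the series $\sum_j Kj/\{A+\e(j-1)^2\}^2$ converges and tends to $0$ as $A\to\infty$. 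You instead rederive for the empirical process the same Orlicz-norm/exponential tail bound that the paper proves in Lemma~\ref{lemma:hat_U_n-deviation_bootstrap} for the bootstrap empirical process, and then chain dyadically. That works and gives a strictly sharper (exponential) tail, but it is more than is needed here: since the target bound is only $\e x^2 + O_p(n^{-2/3})$, a second-moment control is sufficient, and the paper deliberately avoids the Orlicz machinery for this lemma. Separately, the preliminary decomposition you propose, replacing $\d-F_0(u)$ by $\{\d-F_0(U(a))\}+O(x)$, is a detour that plays no role in your actual chaining step; the class $\mathcal F_x$ already carries $\d-F_0(v)$, and the envelope bound applies directly without this split, so you can safely omit that paragraph.
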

\begin{proof}
As in the proof of Lemma \ref{lemma:hat_U_n-deviation_bootstrap}, we consider the Vapnik-Cervonenkis collection of functions:
\begin{align*}
{\cal F}_t=\left\{(\d-F_0(v))1_{(U(a),U(a)+u]}(v):u\in[0,t],\,\d\in\{0,1\}\right\},
\end{align*}
with envelope
\begin{align*}
F_t(v,\d)=1_{(U(a),U(a)+t]}(v),\qquad v\in[0,t].
\end{align*}
We have, using Theorem 2.14.1 of \cite{vdvwe:96}:
\begin{align}
E_X\left\{\sup_{f\in{\cal F}_t}\left|\P_n-P\right|(f)\right\}^2\le Kn^{-1}\left\|F_t\right\|_{P,2}^2,
\end{align}
for some $K>0$. Since,
\begin{align*}
\left\|F_t\right\|_{P,2}^2=\int_{u\in U(a),U(a)+t]}\,dP(u,\d)=\int_{u\in U(a),U(a)+t]}\,dG(u)=O(t),
\end{align*}
for $t\in[U(a),R-U(a)]$, we get, by Markov's inequality,
	\begin{align*}
	&P\left\{n^{2/3}\left|\int_{u\in(U(a),U(a)+jn^{-1/3}]}\{\d-F_0(u)\}\,d\bigl(\P_n-P\bigr)(u,\d)\right|> A+ \e(j-1)^2\right\}\\
	&\le Kj/\left\{A+\e(j-1)^2\right\}^2.
	\end{align*}
	The result now easily follows, see, e.g., \cite{kimpol:90}. p.\ 201.
\end{proof}

As a consequence of Lemma \ref{lemma:hat_U_n-deviation_bootstrap} and Lemma \ref{lemma:bound_original_integral} we get the following result.

\begin{lemma}
\label{lemma:11.5}
Let $\hat V_n$ and $\hat{\bar V}_n$ be defined by
\begin{align}
\label{hatV_n}
\hat V_n(t)=\int_{u\in[0,t]}\d\,d\hat{\P}_n(u,\d),\qquad \hat{\bar V}_n(t)=\int_{u\in[0,t]} F_0(u)\,d\hat{\G}_n(u),\qquad\,t\in[0,R].
\end{align}
where the process $\hat \G_n$ is defined in (\ref{hatV_n_G_n}), and let $\hat D_n=\hat{V}_n-\hat{\bar V}_n$. Then there exist constants $K_1,K_2>0$ such that, for each $j\ge1,\,j\in\mathbb N$,
	\begin{align}
	\label{unconditional_upper_bound1a}
	&P_{M|Z}\left\{\exists y\in\bigl[(j-1)n^{-1/3},jn^{-1/3}\bigr):\hat{D}_n(U(a)+y)-\hat{D}_n(U(a))\right.\nonumber\\
	&\left.\qquad\qquad\qquad\qquad\qquad\qquad\qquad\le-\int_{U(a)}^{U(a)+y}\bigl\{F_0(u)-F_0(U(a))\bigr\}\,d\hat{\G}_n(u)\right\}\nonumber\\
	&\le K_1\exp\left\{-K_2(j-1)^{3/2}\right\},
	\end{align}
	in probability. Likewise, there exist constants $K_1,K_2>0$ such that, for each $j\ge1,\,j\in\mathbb N$,
	\begin{align}
	\label{unconditional_upper_bound2a}
	&P_{M|Z}\left\{\exists y\in\bigl[-jn^{-1/3},-(j-1) n^{-1/3}\bigr):\hat{D}_n(U(a)+y)-\hat{D}_n(U(a))\right.\nonumber\\
	&\left.\qquad\qquad\qquad\qquad\qquad\qquad\qquad\le-\int_{U(a)+y}^{U(a)}\bigl\{F_0(u)-F_0(U(a))\bigr\}\,d\hat{\G}_n(u)\right\}\nonumber\\
	&\le K_1\exp\left\{-K_2(j-1)^{3/2}\right\},
	\end{align}
	in probability.
\end{lemma}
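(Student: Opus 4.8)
\emph{Proof plan.} The plan is to split $\hat D_n(U(a)+y)-\hat D_n(U(a))$ into its stochastic part and the deterministic quadratic drift $\int_{U(a)}^{U(a)+y}\{F_0(u)-F_0(U(a))\}\,d\hat\G_n(u)$, and to control the stochastic part by Lemma~\ref{lemma:hat_U_n-deviation_bootstrap} together with Lemma~\ref{lemma:bound_original_integral}. First I would observe that, since $\hat\G_n(t)=\int_{u\in[0,t]}d\hat\P_n(u,\d)$, we have $\hat{\bar V}_n(t)=\int_{u\in[0,t]}F_0(u)\,d\hat\P_n(u,\d)$, hence $\hat D_n(t)=\int_{u\in[0,t]}\{\d-F_0(u)\}\,d\hat\P_n(u,\d)$; writing $d\hat\P_n=d(\hat\P_n-\P_n)+d(\P_n-P)+dP$ and using $\int\{\d-F_0(u)\}\,dP(u,\d)=\int\{F_0(u)-F_0(u)\}g(u)\,du=0$, we get, for $y>0$,
\begin{align*}
\hat D_n(U(a)+y)-\hat D_n(U(a))=I_n(y)+J_n(y),
\end{align*}
with $I_n(y)=\int_{u\in(U(a),U(a)+y]}\{\d-F_0(u)\}\,d(\hat\P_n-\P_n)(u,\d)$ and $J_n(y)=\int_{u\in(U(a),U(a)+y]}\{\d-F_0(u)\}\,d(\P_n-P)(u,\d)$, the two quantities appearing in Lemmas~\ref{lemma:hat_U_n-deviation_bootstrap} and~\ref{lemma:bound_original_integral}.

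Next I would prove the drift bound: there are $c_0>0$ and $\rho>0$ such that, on an event whose conditional probability tends to one in probability, $\int_{U(a)}^{U(a)+y}\{F_0(u)-F_0(U(a))\}\,d\hat\G_n(u)\ge c_0y^2$ for all $y\in[0,\rho]$ (and symmetrically for $y\in[-\rho,0]$). Writing $d\hat\G_n=dG+d(\hat\G_n-G)$, the $dG$-part is at least $\tfrac12(\inf f_0)(\inf g)y^2$ since $f_0,g$ are bounded away from zero, and the remaining part is $O(y\|\hat\G_n-G\|_\infty)$ by integration by parts, with $\|\hat\G_n-G\|_\infty\le\|\hat\G_n-\G_n\|_\infty+\|\G_n-G\|_\infty=O_{P_M}(n^{-1/2})$ in probability, the bootstrap term being covered by the finite-entropy tail estimates used for Lemma~\ref{lemma:hat_U_n-deviation_bootstrap}. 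For $y$ in the $j$th block ($y\ge(j-1)n^{-1/3}$, $j\ge2$) this correction is $O(n^{-1/6}y^2)=o(y^2)$ and is absorbed.

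Then I would finish as follows. On the good event above, the event in (\ref{unconditional_upper_bound1a}) forces, for some $y\in[(j-1)n^{-1/3},jn^{-1/3})$,
\begin{align*}
I_n(y)+J_n(y)\le-c_0y^2\le-c_0(j-1)^2n^{-2/3},
\end{align*}
hence $I_n(y)\le-\tfrac{c_0}{2}(j-1)^2n^{-2/3}$ or $J_n(y)\le-\tfrac{c_0}{2}(j-1)^2n^{-2/3}$ for such $y$. The first alternative has conditional probability at most $K_1e^{-K_2(j-1)^{3/2}}$ in probability by Lemma~\ref{lemma:hat_U_n-deviation_bootstrap} (with $\a=c_0/2$), of exactly the required form. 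For the second, Lemma~\ref{lemma:bound_original_integral} with $\e=c_0/4$ shows that, once $j$ exceeds a fixed constant, $J_n$ cannot reach $-\tfrac{c_0}{2}(j-1)^2n^{-2/3}$ unless the residual of Lemma~\ref{lemma:bound_original_integral} is of order $(j-1)^2n^{-2/3}$; the matching sub-exponential bound $K_1e^{-K_2(j-1)^{3/2}}$ for that event follows from the martingale estimates of Section~11.2 of \cite{piet_geurt:14} — equivalently, from running the Orlicz-norm / VC argument used for Lemma~\ref{lemma:hat_U_n-deviation_bootstrap} on the ordinary empirical process $\sqrt n(\P_n-P)$ over the class ${\cal F}_t$. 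For the remaining finitely many $j$ one enlarges $K_1$ so that $K_1e^{-K_2(j-1)^{3/2}}\ge1$, making the claim trivial. Adding the two alternatives and the conditional probability of the complement of the good event (which is $o_{P_M}(1)$ in probability, hence at most $\tfrac13K_1e^{-K_2(j-1)^{3/2}}$ in probability for each fixed $j$) yields (\ref{unconditional_upper_bound1a}); (\ref{unconditional_upper_bound2a}) follows by the identical argument with the ``left'' halves of Lemmas~\ref{lemma:hat_U_n-deviation_bootstrap} and~\ref{lemma:bound_original_integral}.

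The part I expect to be the real obstacle is precisely this dovetailing: making the conditional tail bound for $I_n$ and the unconditional tail bound for $J_n$ live together within the ``$o_p$ in probability'' accounting, and in particular upgrading the control of $J_n$ from the $L_2$-type statement of Lemma~\ref{lemma:bound_original_integral} to a genuine sub-exponential tail with exponent $(j-1)^{3/2}$. Everything else — the decomposition, the vanishing of the $dP$-integral, the quadratic drift, and the uniformity over $y$ inside a block (cheap, because the envelope $F_t$ of ${\cal F}_t$ is nondecreasing in $t$, so the block-supremum is essentially attained at $y=jn^{-1/3}$) — is routine.
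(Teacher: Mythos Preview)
Your decomposition, drift analysis, and invocation of Lemma~\ref{lemma:hat_U_n-deviation_bootstrap} for the $I_n$ piece are exactly what the paper does. The place where you diverge from the paper --- and make life harder for yourself --- is the treatment of $J_n$.

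You split the bad event into ``$I_n$ large'' or ``$J_n$ large'' and then try to obtain a \emph{sub-exponential} tail bound for the $J_n$ alternative, which you correctly flag as the real obstacle. The paper avoids this obstacle entirely. The point you are missing is that $J_n(y)=\int_{(U(a),U(a)+y]}\{\d-F_0(u)\}\,d(\P_n-P)(u,\d)$ is measurable with respect to the original sample $Z_1,\dots,Z_n$ only; it does not involve the bootstrap weights $M$. Since the conclusion of the lemma is only required to hold ``in probability'' (i.e.\ on an event in the $Z$-space of probability tending to one), you may simply \emph{condition} on the event
\[
A_n=\Bigl\{\forall x\in[0,R-U(a)]:\ |J_n(x)|\le \e x^2 + Kn^{-2/3}\Bigr\},
\]
which by Lemma~\ref{lemma:bound_original_integral} has $P(A_n)\to1$. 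On $A_n$, for $y\in[(j-1)n^{-1/3},jn^{-1/3})$ and $\e<c_0/2$, the contribution of $J_n$ is deterministically absorbed into the quadratic drift, and the event in (\ref{unconditional_upper_bound1a}) forces $|I_n(y)|\ge \tfrac12 c_0(j-1)^2n^{-2/3}$ for some such $y$; now Lemma~\ref{lemma:hat_U_n-deviation_bootstrap} alone gives the bound $K_1e^{-K_2(j-1)^{3/2}}$, and finitely many small $j$ are handled by enlarging $K_1$ as you say. This is precisely what the paper does (``Considering sequences $(T_1,\dd_1),(T_2,\dd_2),\dots$ satisfying (\ref{original_small}), we get\dots'').

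So there is no need to upgrade Lemma~\ref{lemma:bound_original_integral} to a genuine exponential tail, no need to rerun the Orlicz/VC argument on $\P_n-P$, and no ``dovetailing'' of conditional and unconditional bounds: $J_n$ is frozen once you fix the sample, and the ``in probability'' qualifier lets you discard the complement of $A_n$ for free. Your proposed route would work, but it proves strictly more than is required.
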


\begin{proof}
We again only prove (\ref{unconditional_upper_bound1}), since the proof of (\ref{unconditional_upper_bound2}) is similar. First note:
\begin{align*}
&P_{M|Z}\Biggl\{\exists y\in\bigl[(j-1)n^{-1/3},jn^{-1/3}\bigr):\hat D_n(U(a)+y)-\hat D_n(U(a))
\\
&\qquad\qquad\qquad\qquad\qquad\qquad\qquad
\le-\int_{U(a)}^{U(a)+y}\bigl\{F_0(u)-F_0(U(a))\bigr\}\,d\hat \G_n(u)\Biggr\}\\
&\quad\le P_{M|Z}\Biggl\{\exists y\in\bigl[(j-1)n^{-1/3},jn^{-1/3}\bigr):\left|\hat D_n(U(a)+y)-\hat D_n(U(a))\right|
\\
&\qquad\qquad\qquad\qquad\qquad\qquad\qquad \ge\int_{U(a)}^{U(a)+y}\bigl\{F_0(u)-F_0(U(a))\bigr\}\,d\hat \G_n(u)\Biggr\}.\nonumber\\
\end{align*}	
Furthermore:
\begin{align}
	\int_{U(a)}^{U(a)+y}&\bigl\{F_0(u)-F_0(U(a))\bigr\}\,d\hat{\G}_n(u)\nonumber\\
	&=\int_{U(a)}^{U(a)+y}\bigl\{F_0(u)-F_0(U(a))\bigr\}\,d\G_n(u)\nonumber\\
	&\qquad+\int_{U(a)}^{U(a)+y}\bigl\{F_0(u)-F_0(U(a))\bigr\}\,d\bigl(\hat{\G}_n-\G_n\bigr)(u)\nonumber\\
	&=\int_{U(a)}^{U(a)+y}\bigl\{F_0(u)-F_0(U(a))\bigr\}\,dG(u)\nonumber\\
	&\qquad+\int_{U(a)}^{U(a)+y}\bigl\{F_0(u)-F_0(U(a))\bigr\}\,d\bigl(\G_n-G\bigr)(u) \nonumber \\
	&\qquad +\int_{U(a)}^{U(a)+y}\bigl\{F_0(u)-F_0(U(a))\bigr\}\,d\bigl(\hat{\G}_n-\G_n\bigr)(u),
\end{align}
and for the dominant term on the right-hand side we get:
\begin{align*}
&\int_{U(a)}^{U(a)+y}\bigl\{F_0(u)-F_0(U(a))\bigr\}\,dG(u)\ge m_0\int_{U(a)}^{U(a)+y}\{u-U(a)\}\,dG(u)\\
&\ge m_0m_1\int_{U(a)}^{U(a)+y}\{u-U(a)\}\,du=\tfrac12m_0m_1\{y-U(a)\}^2,
\end{align*}
where $m_0=\inf_{u\in[U(a),R]} f_0(u)$ and $m_1=\inf_{u\in[U(a),R]} g(u)$.
We therefore consider the probability:
\begin{align}
	\label{simplified_ineq2}
	&P_{M|Z}\Bigl\{\exists y\in\bigl[(j-1)n^{-1/3},jn^{-1/3}\bigr):\left|\hat{D}_n(U(a)+y)-\hat{D}_n(U(a))\right|\\ 
	&\qquad\qquad\qquad\qquad\qquad\qquad\qquad\qquad\qquad\qquad\qquad\qquad\ge m(j-1)^2n^{-2/3}\Bigr\}.\nonumber
\end{align}
where
\begin{align*}
m=\tfrac12\min\left\{\inf_{u\in[t_0,R]}f_0(u),\inf_{u\in[t_0,R]}g(u)\right\}.
\end{align*}
We also have:
\begin{align*}
&\hat{D}_n(U(a)+y)-\hat{D}_n(U(a))=\int_{u\in(U(a),U(a)+y]}\{\d-F_0(u)\}\,d\hat{\P}_n(u,\d)\\
&=\int_{u\in(U(a),U(a)+y]}\{\d-F_0(u)\}\,d\bigl(\hat{\P}_n-P\bigr)(u,\d)\\
&=\int_{u\in(U(a),U(a)+y]}\{\d-F_0(u)\}\,d\bigl(\hat{\P}_n-\P_n\bigr)(u,\d)\\
&\qquad+\int_{u\in(U(a),U(a)+y]}\{\d-F_0(u)\}\,d\bigl(\P_n-P\bigr)(u,\d).
\end{align*}
By Lemma \ref{lemma:bound_original_integral}, we may assume that for $x\in[0,R-U(a)]$,
\begin{align}
\label{original_small}
\left|\int_{u\in(U(a),U(a)+x]}\{\d-F_0(u)\}\,d\bigl(\P_n-P\bigr)(u,\d)\right|\le \e x^2+K n^{-2/3},
\end{align}
for some $K>0$ and $0<\e<m/2$. Considering sequences $X=(T_1,\dd_1),(T_2,\dd_2)\dots$, satisfying (\ref{original_small}), we get:
\begin{align*}
&P_{M|Z}\Bigl\{\exists y\in\bigl[(j-1)n^{-1/3},jn^{-1/3}\bigr):\left|\hat{D}_n(U(a)+y)-\hat{D}_n(U(a))\right|\\
&\qquad\qquad\qquad\qquad\qquad\qquad\qquad\qquad\qquad\qquad\qquad\qquad\ge m(j-1)^2n^{-2/3}\Bigr\}\\
&\le P_{M|Z}\Biggl\{\exists y\in\bigl[(j-1)n^{-1/3},jn^{-1/3}\bigr):\\
&\qquad\qquad\quad\left|\int_{u\in(U(a),U(a)+y]}\{\d-F_0(u)\}\,d\bigl(\hat{\P}_n-\P_n\bigr)(u,\d)\right|\ge \tfrac12m(j-1)^2n^{-2/3}\Biggr\}\\
&\le K_1\exp\left\{-K_2(j-1)^{3/2}\right\},
\end{align*}
with probability tending to one, using Lemma \ref{lemma:hat_U_n-deviation_bootstrap}. 
\end{proof}

We now prove Lemma \ref{lemma_Th11.3}.
\begin{proof}[Proof of Lemma \ref{lemma_Th11.3}]
Suppose that $n^{1/3}|\hat U_n(a)-U(a)|> x$ for some $x> 0$, then there exists a $y$ such that,  $n^{1/3}\left|y-U(a)\right|> x$ and $\hat V_n(y)-a\hat \G_n(y)\le \hat V_n(U(a))-a\hat \G_n(U(a))$. Hence,
\begin{align*}
&P_{M|Z}\left\{n^{1/3}\left|\hat U_n(a)-U(a)\right|\ge x\right\}\\
& \le 	P_{M|Z}\Biggl(\inf_{y-U(a)\ge n^{-1/3}x} \hat D_n(y)-\hat D_n(U(a))
\\
&\qquad\qquad\qquad\qquad\quad\quad\le-\int_{U(a)}^{y}\bigl\{F_0(u)-F_0(U(a))\bigr\}\,d\hat \G_n(u)  \Biggr)\\
& \le 	\sum_{j=i}^\infty P_{M|Z}\Biggl(\exists y\in\bigl[(j-1)n^{-1/3},jn^{-1/3}\bigr):\hat D_n(U(a)+y)-\hat D_n(U(a))\\
&\qquad\qquad\qquad\qquad\qquad \le-\int_{U(a)}^{U(a)+y}\bigl\{F_0(u)-F_0(U(a))\bigr\}\,d\hat \G_n(u) \Biggr),
\end{align*}
where $x \in [(i-1)n^{-1/3}, in^{-1/3}]$. By Lemma \ref{lemma:11.5}, this is bounded above by
\begin{align*}
&\sum_{j=i}^\infty K_1\exp\left\{K_2(j-1)^{3/2}\right\} \\
&= K_1\exp\left\{-K_2(i-1)^{3/2}\right\}\sum_{j=i}^\infty \exp\left\{-K_2[(j-1)^{3/2}-(i-1)^{3/2}] \right\} \\
&\le K_1'\exp\left\{K_2'(i-1)^{3/2}\right\},
\end{align*}	
for constants $K_1,K_1',K_2, K_2' >0$.	
\end{proof}

\subsection{Proof of Lemma \ref{lemma: toy-representation-bootstrapSMLE}}
We introduce notations $K_h$ and $\IK_h$ to denote the scaled versions of $K$ and $\IK$ respectively:
$$K_h(u) = h^{-1}K(u/h) \qquad \text{and} \qquad \IK_h(u) = \IK(u/h). $$
\begin{proof}

 Define the function
 \begin{align*}
 \psi_{t,h}(u) = \frac{K_h(t-u)}{g(u)}.
 \end{align*}
 Denote the points of jump of the MLE $\hat F_n$ by $\hat \t_1,\ldots,\hat \t_m$ and define the piecewise constant function $\bar\psi_{t,h}$  with only jumps at $\hat \t_1,\ldots,\hat \t_m$ by 
 \begin{align*}
 \bar\psi_{t,h}(u)=
 \left\{\begin{array}{lll}
 \psi_{t,h}(\hat\t_i),\,&\mbox{ if } F_{0}(u)>\hat F_n(\hat \t_i),\,u\in[\hat\t_i,\hat\t_{i+1}),\\
 \psi_{t,h}(s),\,&\mbox{ if } F_0(u)=\hat F_n(s),\mbox{ for some }s\in[\hat \t_i,\hat \t_{i+1}),\\
 \psi_{t,h}(\hat\t_{i+1}),\,&\mbox{ if }\tilde F_0(u)<\hat F_n(\t_i),\,u\in[\hat \t_i,\hat \t_{i+1}).
 \end{array}
 \right.
 \end{align*}
 By the convex minorant interpretation of $\hat F_n$, we have 
 \begin{align*}
 \int \bar \psi_{t,h}(u)(\d - \hat F_n(u))d\hat\P_n(u,\d) = 0,
 \end{align*}
 (see the discussion of the SMLE in \cite{piet_geurt:14}, p.\ 332).\\
 We can write
 \begin{align*}
 &\tilde F_{nh}^*(t) = \int \IK_h(t-u)\,d\hat F_n(u)\\&= \int \IK_h(t-u)\,d( \hat F_n-F_0)(u) + \int \IK_h(t-u)\,d F_0(u)\\
 &= \int \psi_{t,h}(u)\,\left\{  \hat F_n(u)- F_0(u)\right\}dG(u) +  \int \IK_h(t-u)\,d F_0(u)\\
 &=  \int \psi_{t,h}(u) \left\{ F_0(u)  - \hat F_n(u) \right\}d(\hat\G_n-G)(u) +\int \psi_{t,h}(u) \left\{\d -  F_0(u)\right\}d\hat\P_n(u,\d)\\
 &\qquad+\int \left\{\psi_{t,h}(u)-\bar\psi_{t,h}(u) \right \} \left\{\hat F_n(u)-\d \right\}d\hat\P_n(u,\d)  + \int \IK_h(t-u)\,d F_0(u)\\
 &=  \tilde F_{nh}^{(toy)*}(t) + \int \psi_{t,h}(u) \left\{ F_0(u)  - \hat F_n(u) \right\}d(\hat\G_n-G)(u,\d)\\
 &\qquad+\int \left\{\psi_{t,h}(u)-\bar\psi_{t,h}(u)\right \} \left\{\hat F_n(u)-\d \right\}d\hat\P_n(u,\d)\\
 & = \tilde F_{nh}^{(toy)*}(t)  + A_{I} + A_{II}.
 \end{align*}
 We first evaluate $A_I$ and show that this term is $o_{P_M}(n^{-2/5})$ in probability, we have:
 \begin{align*}
 A_I&=\int \psi_{t,h}(u) \left\{ F_0(u)  - \hat F_n(u) \right\}d(\hat\G_n-G)(u,\d)\\
 &=\int \psi_{t,h}(u) \left\{ F_0(u)  - \hat F_n(u) \right\}d(\hat\G_n-\G_n)(u,\d)\\
 &\qquad+\int \psi_{t,h}(u) \left\{ F_0(u)  - \hat F_n(u) \right\}d(\G_n-G)(u,\d)
 \end{align*}
 An argument similar to that of Lemma A.7 in \cite{piet_geurt_birgit:10} shows that 
 $$ \int \psi_{t,h}(u) \left\{ F_0(u)  - \hat F_n(u) \right\}d(\G_n-G)(u,\d) = o_p(n^{-2/5}),$$
 and hence,
  $$ \int \psi_{t,h}(u) \left\{ F_0(u)  - \hat F_n(u) \right\}d(\G_n-G)(u,\d) = o_{P_M}(n^{-2/5}),$$
in probability. Similarly to the proof of Lemma A.7 in \cite{piet_geurt_birgit:10}, we can also show that 
\begin{align}
\label{diff-hatGn-Gn}
\int \psi_{t,h}(u) \left\{ F_0(u)  - \hat F_n(u) \right\}d(\hat \G_n-\G_n)(u,\d) = o_{P_M}(n^{-2/5}),
\end{align}
  in probability, such that,
  $$ A_I = o_{P_M}(n^{-2/5})  \qquad \text{in probability}. $$
 We now study the term $A_{II}$.  Using the same inequality for $\psi_{t,h}-\bar\psi_{t,h}$ as used in the second display after (11.49) on p.\ 333
 of \cite{piet_geurt:14}, we get for some constant $C>0$ that:
 \begin{align}
 \label{diff-barpsi-psi}
 \left\vert \bar \psi_{t,h}(u)  - \psi_{t,h}(u) \right\vert \leq Ch^{-2}\left\vert \hat F_n(u)- F_0(u) \right\vert,
 \end{align} 
 for all $u$ such that $f_0$ is positive and continuous in a neighborhood around $u$. We decompose the term 
 $A_{II}$ as follows,
 \begin{align}
 \label{A_II}
 A_{II} &= \int \left\{\bar\psi_{t,h}(u) - \psi_{t,h}(u) \right \} \left\{\hat F_n(u)-F_0(u) \right\}d\hat\P_n(u,\d) \nonumber\\
 &\qquad + \int \left\{\bar\psi_{t,h}(u) - \psi_{t,h}(u) \right \} \left\{F_0(u)-\d \right\}d\hat\P_n(u,\d).
 \end{align}
 For the first term on the right-hand side of the above display we write,
 \begin{align}
 \label{A_IIa}
 & \int \left\{\bar\psi_{t,h}(u) - \psi_{t,h}(u) \right \} \left\{\hat F_n(u)-F_0(u) \right\}d\hat\P_n(u,\d) \nonumber
 \\
 &\qquad\qquad =  \int \left\{\bar\psi_{t,h}(u) - \psi_{t,h}(u) \right \} \left\{\hat F_n(u)-F_0(u)\right\}d(\hat\P_n- \P_n)(u,\d) \nonumber
 \\ 
 &\qquad\qquad\qquad +  \int \left\{\bar\psi_{t,h}(u) - \psi_{t,h}(u) \right \} \left\{\hat F_n(u)-F_0(u) \right\}d\P_n(u,\d) \nonumber\\
 &\qquad\qquad \le  \int \left\{\bar\psi_{t,h}(u) - \psi_{t,h}(u) \right \} \left\{\hat F_n(u)-F_0(u)\right\}d(\hat\P_n- \P_n)(u,\d) \nonumber
  \\ 
  &\qquad\qquad\qquad +  Ch^{-2}\int_{t-h}^{t+h} \left\{\hat F_n(u)-F_0(u) \right\}^2d\P_n(u,\d), 
 \end{align}
where we use (\ref{diff-barpsi-psi}) in the last inequality. The first term in the display above is $o_{P_M}(n^{-2/5})$ in probability by (\ref{diff-hatGn-Gn}) and (\ref{diff-barpsi-psi}). Since
\begin{align*}
E_{M|Z}\left\{ \hat F_n(t) - F_0(t)\right\}^2 < Kn^{-2/3} \qquad \forall t \in (0,R),
\end{align*}
in probability, we have by Markov's inequality and Fubini's theorem that,
\begin{align}
\label{L_2_truncated}
 \int_{t-h}^{t+h} \left\{\hat F_n(u)-F_0(u) \right\}^2d\P_n(u,\d) = O_{P_M}\left(hn^{-2/3}\right) \text{   in probability}.
\end{align}
Hence, for $h\asymp n^{-1/5}$, we get for the second term in (\ref{A_IIa}):
 \begin{align*}
&Ch^{-2}\int_{t-h}^{t+h} \left\{\hat F_n(u)-F_0(u) \right\}^2d\P_n(u,\d)
\\ &\qquad=  O_{P_M}\left(h^{-1}n^{-2/3}\right)=O_{P_M}\left(n^{-7/15}\right) = o_{P_M}\left(n^{-2/5}\right),
\end{align*}
in probability. For the second term of (\ref{A_II}) we have
 \begin{align*}
 &\int \left\{\bar\psi_{t,h}(u) - \psi_{t,h}(u) \right \} \left\{F_0(u)-\d \right\}d\hat \P_n(u,\d)
 \\
 &\qquad=\int \left\{\bar\psi_{t,h}(u) - \psi_{t,h}(u) \right \} \left\{F_0(u)-\d \right\}d(\hat\P_n-\P_n)(u,\d)\\
 &\qquad\qquad\qquad+ \int \left\{\bar\psi_{t,h}(u) - \psi_{t,h}(u) \right \}  \left\{F_0(u)-\d \right\}d\bigl(\P_n-P\bigr)(u,\d).
 \end{align*}
Similar to the arguments used in the treatment of term $A_I$ above, we get by using again arguments similar to that of Lemma A.7 in \cite{piet_geurt_birgit:10} that:
 $$ \int \left\{\bar\psi_{t,h}(u) - \psi_{t,h}(u) \right \}  \left\{F_0(u)-\d \right\}d\bigl(\hat\P_n-\P_n\bigr)(u,\d)= o_{P_M}(n^{-2/5}),$$
 and 
  $$ \int \left\{\bar\psi_{t,h}(u) - \psi_{t,h}(u) \right \}  \left\{F_0(u)-\d \right\}d\bigl(\P_n-P\bigr)(u,\d)= o_{P_M}(n^{-2/5}),$$
 in probability. 
 \end{proof}

 \subsection{The current status linear regression model: bootstrap validity }
 In this section we give a road map for the proof of the bootstrap validity in the current status linear regression model. We assume that the assumptions stated in Theorem 4.1 of \cite{kim_piet:17:AOS} hold. Since the proof is very similar to the proof of Theorem 4.1 in \cite{kim_piet:17:AOS}, we leave the details to the interested reader. 
 Consider the bootstrap score function
\begin{align}
\label{first_score_function}
\hat \psi_n^{(\ee)}(\b) =\int_{\hat F_{n,\b}(t-\b' x)\in[\ee,1-\ee]} x\{\d - \hat F_{n,\b}(t-\b'x) \}\,d\hat \P_n(t,x,\d),
\end{align}
for some fixed truncation parameter $\ee \in (0,1/2)$.
 
The main idea is to show that
 \begin{align}
 \label{score_bootstrap}
\hat \psi_n^{(\ee)}(\hat \b_n) &= V(\hat \b_n-\b_0)+ \int_{F_0(t-\b_0'x) \in[\ee,1-\ee]}\{x-E(X|T-\b_0' X= t-\b_0'x)\}\nonumber\\
&\qquad\qquad\qquad\qquad\qquad\qquad\qquad\cdot\{\d - F_0(t-\b_0' x)\}\,d(\hat \P_n-\P_n)(t,x,\d) \nonumber\\
& \quad+ \int_{F_0(t-\b_0'x) \in[\ee,1-\ee]}\{x-E(X|T-\b_0' X= t-\b_0'x)\}\nonumber\\
&\qquad\qquad\qquad\qquad\qquad\qquad\qquad\cdot\{\d - F_0(t-\b_0' x)\}\,d(\P_n-P)(t,x,\d) \nonumber\\
&\quad + o_{P_M}(n^{-1/2}+  (\hat\b_n - \b_0)),
\end{align}
in probability, where $E$ denotes the unconditional expectation. As in \cite{kim_piet:17:AOS} we can work with the definition  
\begin{align*}
  &\hat \psi_n^{(\ee)}(\hat \b_n) = 0,
\end{align*}
 for the score estimator $\hat \b_n$. Since by the proof of Theorem 4.1 in \cite{kim_piet:17:AOS},
\begin{align*}
 &-\sqrt n V(\b_n-\b_0) \\
 &\quad = \sqrt n\int_{F_0(t-\b_0'x) \in[\ee,1-\ee]}\{x-E(X|T-\b_0' X= t-\b_0'x)\}\nonumber\\
 &\qquad\qquad\qquad\qquad\qquad\qquad\qquad\cdot\{\d - F_0(t-\b_0' x)\}\,d(\P_n-P)(t,x,\d) \nonumber\\
 &\qquad \quad + o_{p}(1+ \sqrt n (\b_n - \b_0)),
   \end{align*}
 we get that,
 \begin{align*}
 &-\sqrt nV(\hat \b_n-\b_n) \\
 &\quad = \sqrt n\int_{F_0(t-\b_0'x) \in[\ee,1-\ee]}\{x-E(X|T-\b_0' X= t-\b_0'x)\}\nonumber\\
 &\qquad\qquad\qquad\qquad\qquad\qquad\qquad\cdot\{\d - F_0(t-\b_0' x)\}\,d(\hat\P_n-\P_n)(t,x,\d) \nonumber\\
 &\qquad \quad +  o_{P_M}(1+ \sqrt n (\hat\b_n - \b_0)).
 \end{align*}
 The validity of the bootstrap then follows by the arguments given in Section \ref{sec:applications_CSLR}. Very important in the proof of (\ref{score_bootstrap}) is the conditional bootstrapped $L_2$-result,
  \begin{align}
  \label{L2-bootsrap-CSLR}
  \sup_{\b}\int\left\{ \hat F_{n,\b}(t-\b'x) - F_{\b}(t-\b'x) \right\}^2d\P_n(t,x, \d )= O_{P_M}\left(n^{-2/3}\right),
  \end{align}
  in probability, where $F_{\b}$ is defined in (\ref{def_F_beta}).
 
Let $\bar\f_{\hat\b_n,\hat F_{n,\hat\b_n}}$ be a (random) piecewise constant version of $\f_{\hat\b_n}$, where
$$\f_{\b}\stackrel{def}{=}E\left\{X|T-\b'X=u\right\},$$
and where, for a piecewise constant distribution function $F$ with finitely many jumps at $\t_1<\t_2<\dots$, the function $\bar\f_{\b,F}$ is defined in the following way.
\begin{align}
\label{def_bar_phi_{beta,F}}
\bar\f_{\b,F}(u)=
\left\{\begin{array}{lll}
\f_{\b}(\t_i),\,&\mbox{ if }F_{\b}(u)>F(\t_i),\,u\in[\t_i,\t_{i+1}),\\
\f_{\b}(s),\,&\mbox{ if }F_{\b}(u)=F(s),\mbox{ for some }s\in[\t_i,\t_{i+1}),\\
\f_{\b}(\t_{i+1}),\,&\mbox{ if }F_{\b}(u)<F(\t_i),\,u\in[\t_i,\t_{i+1}).
\end{array}
\right.
\end{align}
Similar to the proof of Theorem 4.1 in \cite{kim_piet:17:AOS}, we get that,
\begin{align}
\label{difference_f-fbar}
\|\f_{\hat\b_n}(u)-\bar\f_{\hat\b_n,\hat F_{n,\hat\b_n}}(u)\|\leq K |\hat F_{n,\hat\b_n}(u)-F_{\hat\b_n}(u)|,
\end{align}
for some constant $K>0$ not depending on $\beta$. By the definition of the MLE $\hat F_{n,\hat\b_n}$ as the slope of the greatest convex minorant of the corresponding cusum diagram, we can write:
\begin{align*}
&\hat\psi_{n}^{(\ee)}(\hat\b_n)\\
&=\int_{\hat F_{n,\hat\b_n}(t-\hat\b_n'x)\in[\ee,1-\ee]}\bigl\{x-\f_{\hat\b_n}(t-\hat\b_n'x)\bigr\}\bigl\{\d - \hat F_{n,\hat\b_n}(t-\hat\b_n'x)\bigr\}\,d\hat\P_n(t,x,\d)\\
&\qquad+\int_{\hat F_{n,\hat\b_n}(t-\hat\b_n'x)\in[\ee,1-\ee]}\bigl\{\f_{\hat\b_n}(t-\hat\b_n'x)-\bar\f_{\hat\b_n,\hat F_{n,\hat\b_n}}(t-\hat\b_n'x)\bigr\}\\
&\qquad\qquad\qquad\qquad\qquad\qquad\qquad\qquad\qquad\cdot\bigl\{\d - \hat F_{n,\hat\b_n}(t-\hat\b_n'x)\bigr\}\,d\hat \P_n(t,x,\d)\\
&= I + II,
\end{align*}
For the second term, we have:
\begin{align*}
&II =\int_{\hat F_{n,\hat\b_n}(t-\hat\b_n'x)\in[\ee,1-\ee]}\Bigl\{\f_{\hat\b_n}(t-\hat\b_n'x)-\bar\f_{\hat\b_n,\hat F_{n,\hat\b_n}}(t-\hat\b_n'x)\Bigr\}\\
	&\qquad\qquad\qquad\qquad\qquad\qquad\qquad\cdot \Bigl\{\d -\hat F_{n,\hat\b_n}(t-\hat\b_n'x)\Bigr\}\,d(\hat \P_n-\P_n)(t,x,\d)\\
	&\qquad+\int_{\hat F_{n,\hat\b_n}(t-\hat\b_n'x)\in[\ee,1-\ee]}\Bigl\{\f_{\hat\b_n}(t-\hat\b_n'x)-\bar\f_{\hat\b_n,\hat F_{n,\hat\b_n}}(t-\hat\b_n'x)\Bigr\}\\
	&\qquad\qquad\qquad\qquad\qquad\qquad\qquad\cdot\Bigl\{\d - \hat F_{n,\hat\b_n}(t-\hat\b_n'x)\Bigr\}\,d \P_n(t,x,\d)
	\\
	&=  II_a + II_b
\end{align*}
It is shown in the proof of Theorem 4.1 in \cite{kim_piet:17:AOS} that
$$ II_b =  o_p(n^{-1/2} + (\hat\b_n - \b_0)),$$
and therefore
$$ II_b =  o_{P_M}(n^{-1/2}+ (\hat\b_n - \b_0)) \text{ in probability}.$$
 Using similar arguments as in in the proof of Theorem 4.1 in \cite{kim_piet:17:AOS} we can also show that
 $$ II_a =  o_{P_M}(n^{-1/2}) \text{ in probability.} $$
 Hence, we get:
 \begin{align*}
 &\hat\psi_{n}^{(\ee)}(\hat\b_n)\\
 &=\int_{\hat F_{n,\hat\b_n}(t-\hat\b_n'x)\in[\ee,1-\ee]}\bigl\{x-\f_{\hat\b_n}(t-\hat\b_n'x)\bigr\}\bigl\{\d - \hat F_{n,\hat\b_n}(t-\hat\b_n'x)\bigr\}\,d\hat\P_n(t,x,\d)\\
 & \qquad +o_{P_M}(n^{-1/2}+ (\hat\b_n - \b_0)),
 \end{align*}
 in probability. We now write,
  \begin{align*}
 &\int_{\hat F_{n,\hat\b_n}(t-\hat\b_n'x)\in[\ee,1-\ee]}\bigl\{x-\f_{\hat\b_n}(t-\hat\b_n'x)\bigr\}\bigl\{\d -\hat F_{n,\hat\b_n}(t-\hat\b_n'x)\bigr\}\,d\hat\P_n(t,x,\d)\\
  & =  \int_{\hat F_{n,\hat\b_n}(t-\hat\b_n'x)\in[\ee,1-\ee]}\bigl\{x-\f_{\hat\b_n}(t-\hat\b_n'x)\bigr\}\\
  &\qquad\qquad\qquad\qquad\qquad\qquad\qquad\cdot\bigl\{\d - \hat F_{n,\hat\b_n}(t-\hat\b_n'x)\bigr\}\,d(\hat\P_n-\P_n)(t,x,\d)\\
  &\quad +  \int_{\hat F_{n,\hat\b_n}(t-\hat\b_n'x)\in[\ee,1-\ee]}\bigl\{x-\f_{\hat\b_n}(t-\hat\b_n'x)\bigr\}\bigl\{\d - \hat F_{n,\hat\b_n}(t-\hat\b_n'x)\bigr\}\,d\P_n(t,x,\d)
  \end{align*}
 It follows from the proof of Theorem 4.1 in \cite{kim_piet:17:AOS} that there exists a random variable $R_n$ of order $o_p(n^{-1/2} + \hat\b_n-\b_0)$ (and hence of order $o_{P_M}(n^{-1/2} + \hat\b_n-\b_0)$ in probability) such that,
  \begin{align}
  \label{term0}
 & \int_{\hat F_{n,\hat\b_n}(t-\hat\b_n'x)\in[\ee,1-\ee]}\bigl\{x-\f_{\hat\b_n}(t-\hat\b_n'x)\bigr\}\bigl\{\d -\hat F_{n,\hat\b_n}(t-\hat\b_n'x)\bigr\}\,d\P_n(t,x,\d)\nonumber\\
 &=\int_{ F_{0}(t-\b_0'x)\in[\ee,1-\ee]}\Bigl\{x-\f_0(t-\b_0'x)\Bigr\}\Bigl\{\d - F_0(t-\b_0'x)\Bigr\}\,d(\P_n-P)(t,x,\d)\nonumber\\
 &\qquad+ \psi_{1,\ee}'(\b_0)(\hat\b_n - \b_0) +R_n.
\end{align}
where $\f_0 \equiv \f_{\b_0}$. Therefore, (\ref{score_bootstrap}) follows if we can show that,
  \begin{align}
  \label{term1}
  &\int_{\hat F_{n,\hat\b_n}(t-\hat\b_n'x)\in[\ee,1-\ee]}\bigl\{x-\f_{\hat\b_n}(t-\hat\b_n'x)\bigr\}\bigl\{\d - \hat F_{n,\hat\b_n}(t-\hat\b_n'x)\bigr\}\,d(\hat\P_n-\P_n)(t,x,\d)\nonumber\\
  &\quad=\int_{ F_{0}(t-\b_0'x)\in[\ee,1-\ee]}\Bigl\{x-\f_0(t-\b_0'x)\Bigr\}\Bigl\{\d -F_0(t-\b_0'x)\Bigr\}\,d(\hat\P_n-\P_n)(t,x,\d)\nonumber
  \\ &\qquad\quad+ o_{P_M}(n^{-1/2} + (\hat\b_n - \b_0)).
  \end{align}
Equality (\ref{term1}) follows by similar arguments used in the proof of (\ref{term0}) based on asymptotic equicontinuity using the closeness of $\hat F_{n,\b}$ to $F_{\b}$ and using entropy results for the functions $u\mapsto \hat F_{n,\b}(u)$ and the simpler parametric functions $u\mapsto F_{\b}(u)$ and $u\mapsto \f_{\b}(u)$, parametrized by the finite dimensional parameter $\b$.

 \section*{Acknowledgements}
 \label{section:acknowledgements}
 The research of the second author was supported by the Research Foundation Flanders (FWO) [grant number 11W7315N]. Support from the IAP Research Network P7/06 of the Belgian State (Belgian Science Policy) is gratefully acknowledged. For the simulations we used the infrastructure of the VSC - Flemish Supercomputer Center, funded by the Hercules Foundation and the Flemish Government - department EWI.
 
\bibliographystyle{imsart-nameyear}

\end{document}